\theoremstyle{plain}
\newtheorem{definition}{Definition}
\newtheorem{theorem}{Theorem}
\newtheorem{lemma}{Lemma}
\newtheorem{prop}{Proposition}
\newtheorem{remark}{Remark}
\newcommand{\A}{\ensuremath{\mathcal{A}}}
\newcommand{\W}{\ensuremath{\mathcal{W}}}
\newcommand{\bA}{\ensuremath{\bar{A}}}
\newcommand{\bAc}{\ensuremath{\mathcal{\bar{A}}}}
\newcommand{\tA}{\ensuremath{\tilde{A}}}
\newcommand{\tAc}{\ensuremath{\tilde{\A}}}
\newcommand{\T}{\ensuremath{\mathcal{T}}}
\newcommand{\D}{\ensuremath{\mathcal{D}}}
\newcommand{\B}{\ensuremath{\mathcal{B}}}
\newcommand{\C}{\ensuremath{\mathcal{C}}}
\newcommand{\e}{\ensuremath{\epsilon}}
\newcommand{\de}{\ensuremath{\delta}}
\newcommand{\1}{\ensuremath{\mathbf{1}}}
\newcommand{\R}{\ensuremath{\mathbb{R}}}
\newcommand{\half}{\ensuremath{\frac{1}{2}}}
\newcommand\inner[2]{\ensuremath{\langle #1, #2 \rangle}}
\newcommand{\rmE}{\ensuremath{\mathrm{E}}}
\newcommand{\G}{\ensuremath{\mathcal{G}}}
\newcommand{\V}{\ensuremath{\mathcal{V}}}
\newcommand{\E}{\ensuremath{\mathcal{E}}}
\newcommand{\xra}[1]{\mathop{ \xrightarrow{#1} }}
\begin{document} 
\def\spacingset#1{\renewcommand{\baselinestretch}%
{#1}\small\normalsize} \spacingset{1}

\title{\bf Novel Sampling Design for Respondent-driven Sampling}
\author[1]{Mohammad Khabbazian}
\author[2]{Bret Hanlon}
\author[2]{Zoe Russek}
\author[2]{Karl Rohe}
\affil[1]{Department of Electrical and Computer Engineering, University of Wisconsin-Madison}
\affil[2]{Department of Statistics, University of Wisconsin-Madison}

\date{}
\maketitle

\begin{abstract}
	Respondent-driven sampling (RDS) is a method of chain referral sampling
	popular for sampling hidden and/or marginalized populations.  As such,  even
	under the ideal sampling assumptions, the performance of RDS is restricted by
	the underlying social network: if the network is divided into communities
	that are weakly connected to each other, then RDS is likely to oversample one
	of these communities.  In order to diminish the ``referral bottlenecks''
	between communities, we propose anti-cluster RDS (AC-RDS), an adjustment to
	the standard RDS implementation.  Using a standard model in the RDS
	literature, namely, a Markov process on the social network that is indexed by
	a tree, we construct and study the Markov transition matrix for AC-RDS. We
	show that if the underlying network is generated from the Stochastic
	Blockmodel with equal block sizes, then the transition matrix for AC-RDS has
	a larger spectral gap and consequently faster mixing properties than the
	standard random walk model for RDS. In addition, we show that AC-RDS reduces
	the covariance  of the samples in the referral tree compared to the standard
	RDS and consequently leads to a smaller variance and design effect.  We
	confirm the effectiveness of the new design using both the Add-Health
	networks and simulated networks. 
\end{abstract}

\noindent%
{\it Keywords:} Hard-to-reach population; Social network; Trees; Markov chains;  
   Spectral representation; Anti-cluster RDS

\vfill

\noindent%
{\it Acknowledgements:} Zoe Russek and Karl Rohe are supported by NSF grant DMS-1309998,
ARO grant W911NF-15-1-0423, and grants from the Graduate School at UW Madison.

\spacingset{1.1} 

\section{Introduction} 
Several public policy and public health programs depend on estimating characteristics
of hard-to-reach or hidden populations (e.g. HIV prevalence among people who
inject drugs). These hard-to-reach populations cannot be sampled with standard
techniques because there is no way to construct a sampling frame.  
\cite{heckathorn1997respondent, heckathorn2002respondent} proposed
respondent-driven sampling (RDS) as a variant  of chain-referral methods,
similar to snowball sampling \citep{goodman1961snowball, handcock2011comment},
for collecting and analyzing data from hard-to-reach populations. Since then, RDS has been
employed in over 460 studies spanning more than 69 countries
\citep{malekinejad2008using, white2015strengthening}.  

RDS encompasses a collection of methods to both sample a population and infer
population characteristics \citep{salganik2012commentary}, referred to as RDS
sampling and RDS inference, respectively.  RDS sampling starts with a few ``seed''
participants chosen by a convenience sample of the target population. Then, the
initial participants are given a few coupons to refer the second wave of
respondents, the second wave refers the third wave, and so on. The participants
receive a dual incentive to (i) take part in the study and (ii) successfully
refer participants. The dual incentive, limited number of coupons, and
without replacement sampling, in theory, help RDS mix more quickly than snowball sampling,
allowing for the potential to penetrate the broad target population and reduce
its dependency on the initial convenience sample. In addition, in some cases, participants
are provided with extra instructions to conduct without replacement
sampling\footnote{``Please make sure that the persons you give the coupons to
are (add your eligibility criteria here) and have not received this coupon from
someone else'' \citep[][p. 330]{RDSmanual}.} and also reach out to different types
of people in the target population\footnote{``If possible, try and give the
coupons to different types of people who you know (e.g.\ different ages,
different levels of income, from different locations in this city)''
\citep[][p. 330]{RDSmanual}.}.

Since Heckathorn's original RDS paper, the statistical literature on RDS has created
several estimators that seek to reduce the bias and estimate confidence
intervals  \citep{heckathorn2011comment}. The most popular RDS estimators are
generalized Horvitz-Thompson type estimators where the inclusion probabilities
are derived from various models of the sampling procedure
\citep{volz2008probability, gile2011improved, gile2011network}.   

RDS performance has been evaluated through simulation studies
\citep{goel2010assessing, gile2010respondent}, empirical studies
\citep{wejnert2008web, wejnert2009empirical, mccreesh2012evaluation}, and theoretical analyses
\citep{goel2009respondent}. The main message of these studies is that (i) RDS
can suffer from bias; (ii) in some cases, the current RDS estimators do not
reduce bias; and, most importantly, (iii) the estimators have higher variance
than what was initially thought 
\citep{goel2009respondent, goel2010assessing, white2012respondent}.  
To help bridge the gap between theory and practice,
\cite{gile2014diagnostics} suggests various diagnostics to examine the
validity of the modeling assumptions.

For the purpose of computing the inclusion probability and designing
estimators, the Markov chain is typically assumed to be the underlying
generative model. However, this model under the standard formulation does not
take into account the without replacement nature of the RDS sampling process
\citep{gile2011improved, gile2011network} or the effect of
preferential recruitment, the tendency of respondents to refer particular friends
\citep{crawford2017identification, mccoy2013improving}. 
As a result, the designed estimators may fail to
provide credible estimations of the target population characteristics.

\cite{goel2009respondent} and \cite{verdery2015network} analytically study the
effects of homophily and community structure on the variance of the
estimator. Homophily, a common property of social networks, is the
tendency of people to establish social ties with others who share common
characteristics such as race, gender, and age. Strong homophily creates
community structure in the social network.  This in turn creates referral
bottlenecks between different groups in the population; the RDS referral chain
can struggle to cross these bottlenecks, failing to quickly explore the network.
In such situations, RDS is sensitive to the initial convenience sample, leading
to biased estimators.  Moreover, the bottlenecks make successive samples
dependent, leading to highly variable estimators. 
\cite{crawford2017identification} gives a rigorous definition of homophily
and preferential recruitment, and shows that it is difficult to precisely
measure these quantities in practice. The results in \cite{treevar}
show that if the strength of this bottleneck crosses a critical threshold, then
the variance of the standard estimator decays slower than $1/n$, where $n$ is
the sample size.
Furthermore, \cite{verdery2016new} proposes a set of data collection
methods, survey questions, and estimators for RDS to estimate clustering
characteristics and draw inferences about topological properties of social
networks. The basic data they propose to collect is about connected and closed
triplets that participants form by their social ties. They also provide some
measure of clustering levels in RDS samples.

To diminish referral bottlenecks, this paper proposes an adjustment to the
current RDS implementation.  Instead of asking participants to refer anyone
from the target population, this paper proposes two basic types of
``anti-cluster referral requests,'' which are described in Figure
\ref{fig:refreq}. These referral requests diminish referral bottlenecks by
producing triples of participants that do not form a triangle, closed triplet,
in the social network. The figure contains two types of such requests. In fact,
as described in Section \ref{sec:acrw}, we propose a procedure that
probabilistically alternates between the two requests.

As compared to alternative methods, anti-cluster requests are more successful
in diminishing referral bottlenecks for three reasons.  First, this approach
preserves privacy by refraining from asking participants to list their
friends in the population. Second, anti-cluster
requests do not require \textit{a priori} knowledge about the nature of the
bottleneck.  For example, the most salient bottleneck could form on race,
gender, neighborhood, or something else. If researchers knew which of these was
most restricting the sampling process, then perhaps specific requests could be
formed.  However, in many populations, the bottlenecks are not known in
advance.  The final advantage is that the proposed adjustment is mathematically
tractable; under certain assumptions, anti-cluster requests can form a
reversible Markov chain.  

\begin{figure}[h]
	\textbf{Anti-cluster referral requests}
  \centering
  \includegraphics[width=6in]{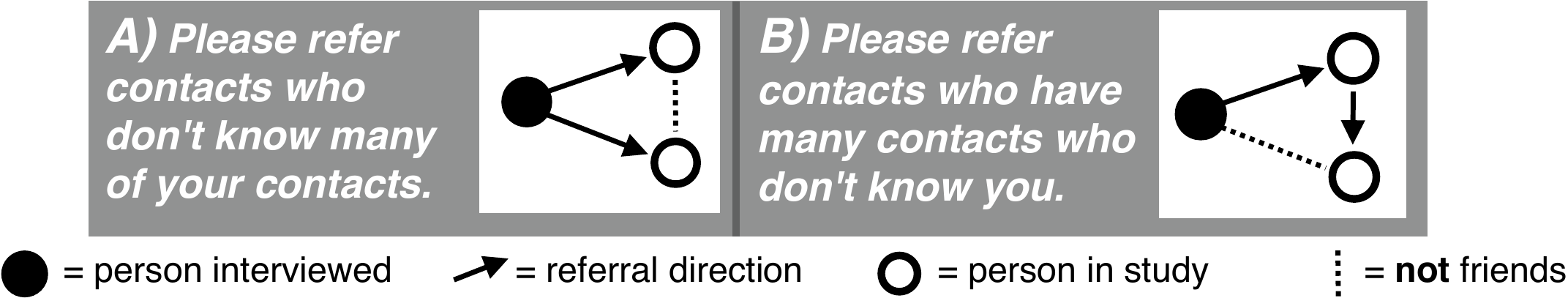}
	\caption{An illustration of two anti-cluster referral requests. 
	The referral requests for anti-cluster sampling are privacy
	preserving because they do not require participants to list all of their
	friends.  Moreover, these requests do not require any knowledge about the
	community structures in the social network.}
  \label{fig:refreq}
\end{figure}

We propose a novel variant of RDS, then study its theoretical properties under
a statistical model. This work provides theoretical motivation to further
develop and study novel referral requests.  Additional work is needed before
this variant should be employed in the field; this is discussed further in
Section \ref{sec:implementation}.

The remainder of the paper is organized as follows. Section
\ref{sec:sampDesign} describes Designed RDS and presents our
proposed design, anti-cluster RDS (AC-RDS). Section
\ref{sec:prelim} sets the notation and provides the
mathematical preliminaries. Section \ref{sec:theory} gives our
theoretical results, distinguishing between ``population graph''
and ``sample graph'' results.  Section \ref{sec:sim} contains
numerical experiments which compare the performance of AC-RDS
with standard RDS.  Section \ref{sec:implementation} 
discusses some gaps between the theory and the practice 
of novel referral requests. We summarize the paper and offer a
discussion in Section \ref{sec:final}. All of the proofs are
provided in the online supplementary material.

\section{Novel Sampling Designs} \label{sec:sampDesign}
When preparing to sample a target population with RDS, some aspects can be
controlled by researchers (e.g. how many referral coupons to give each
participant) and others cannot.  In particular, the social network is beyond
the control of researchers. Community structures are an intrinsic part of
social networks \citep{girvan2002community} which, in RDS, lead to referral
bottlenecks. To minimize these bottlenecks, RDS can be altered to make some
referrals more or less likely. This is the essence of novel sampling designs
for respondent-driven sampling.

As a thought experiment, suppose that the population of interest is divided
into two communities, EAST and WEST. Furthermore, assume that people form most
of their friendships within their own community. Under this simple model,
referrals between communities are unlikely, creating a bottleneck.  
Now, suppose that these communities were known before
performing the sample. The researchers could then request referrals from
specific groups (e.g. flip a coin, if heads request WEST and if tails request
EAST).  This does not change the underlying social network, but it does change
the probability of certain referrals.  If participants followed this request,
the referral bottleneck between EAST and WEST would be diminished.  If
$90\%$ of a participant's friends belonged to the same community as the
participant, then the standard approach would obtain a cross-community referral
only $10\%$ of the time.  However, with the coin flip implementation, such a
referral happens $50\%$ of the time.

\cite{Mouw01082012} propose an alternative technique, Network Sampling with Memory (NSM).
In NSM sampling, researchers construct a sampling frame by asking RDS
participants to nominate their friends in the target
population.  This list  is combined with the friend lists from previous
participants to form a sampling frame. In the ``List'' mode of the
sampling process, the next individual to be recruited and interviewed is
selected by sampling with-replacement from the list of nominated members.  In
the ``Search'' mode, to improve the mixing property of the sampling process,
individuals who appeared to be the ``bridge nodes''to the unexplored parts of the
network are identified. Then, randomly a node from friends of the bridge nodes
who have only 1 nomination is selected for the next interview.
In computational
experiments, \cite{Mouw01082012} report a decrease in the design effect,
the ratio of the sampling variance to the sampling variance of simple random
sampling, of this novel approach.  

These two extensions of RDS (i.e. flipping a coin and NSM) are both forms of
Designed RDS; through novel implementations of the sampling process they
adjust the probability of certain referrals, thereby diminishing the referral
bottlenecks.  Unfortunately, the coin flipping example requires prior
information about the social network, which may be unattainable given the
hidden nature of the target population. The NSM approach requires
respondents to reveal partial name and demographic information of their
friends.  Moreover, it asks respondents to refer (recruit) selected individuals
from the list of nominees. When practically implemented in a hidden population,
however, it is not clear if respondents will be willing to provide the requested 
information or refer the selected individual from their list of nominees. 
Furthermore, the referral process may be based more heavily on participants' 
interactions with members of the target population following the survey than
on any plan they make to refer ahead of time.


Anti-cluster RDS is a type of Designed RDS that complements and builds upon
both of these approaches.  The implementation of anti-cluster RDS does not
require  \textit{a priori} information on the communities in the social
network, nor does it require that participants reveal sensitive information
about individuals who have not consented.  Anti-cluster sampling is designed to
place larger referral probabilities on edges belonging to fewer triangles.
There are at least two ways to consider why this strategy circumvents
bottlenecks.  

\begin{enumerate}
  \item Many empirical networks share three properties.  First, the number of
    edges is proportional to the number of nodes (i.e. the network is globally
    sparse).  Second, friends of friends are likely to be friends (i.e. the
    network is locally dense).  Third, shortest path lengths are small (i.e.
    the network has a small diameter); this is also known as the small-world
    phenomenon.  \cite{watts1998collective} shows how a network can satisfy all
    three properties; take a deterministic graph that satisfies the first two
    features (e.g. a triangular tessellation),  then select a few edges at
    random and randomly re-wire these edges to a randomly chosen node.  Notice
    that these ``random edges'' are unlikely to be contained in a triangle.
		So, edges that are not part of triangles are more likely to lead to
		quicker network traverse. Anti-cluster RDS makes referral along that edges
		more probable, and potentially mixes faster and collects more representative
		samples from the target population.

	\item The Markov chain has been a popular model for studying theoretical properties of RDS.
		Under the with-replacement sampling formulation of this
		model people make referrals by selecting uniformly from their set of
		friends.  A similar assumption could be made about anti-cluster referrals;
		the referral is drawn uniformly from the set of referrals that satisfy the
		anti-cluster request.  If the Markov transition matrix for anti-cluster
		sampling can be shown to have a larger spectral gap than the Markov
		transition matrix for the simple random walk, then this suggests that
		anti-cluster sampling will obtain a more representative sample.  
\end{enumerate}

In this paper, we pursue the second approach. 

\section{Preliminaries} \label{sec:prelim} 
\subsection{Framework} \label{sec:frame}
This paper models the referral process as a Markov chain indexed by a tree
\citep{benjamini1994markov}.
A Markov chain indexed by a tree is a variant of branching Markov chains 
in which a fixed deterministic tree indicates branching.
This model is a straightforward combination of the Markov models developed in
the previous literature on RDS \citep{heckathorn1997respondent, salganik2004sampling, volz2008probability,goel2009respondent}.
This  model is built with the following four mathematical pieces: an underlying social graph,
a node feature which is measured on each sampled node (e.g. HIV status), a
Markov transition matrix on this graph, and a referral tree to index the Markov
process.  Figure~\ref{fig:graphSeq} gives a graphical depiction of this process. 

\begin{figure}[h] 
  \centering
  \includegraphics[width=6.5in]{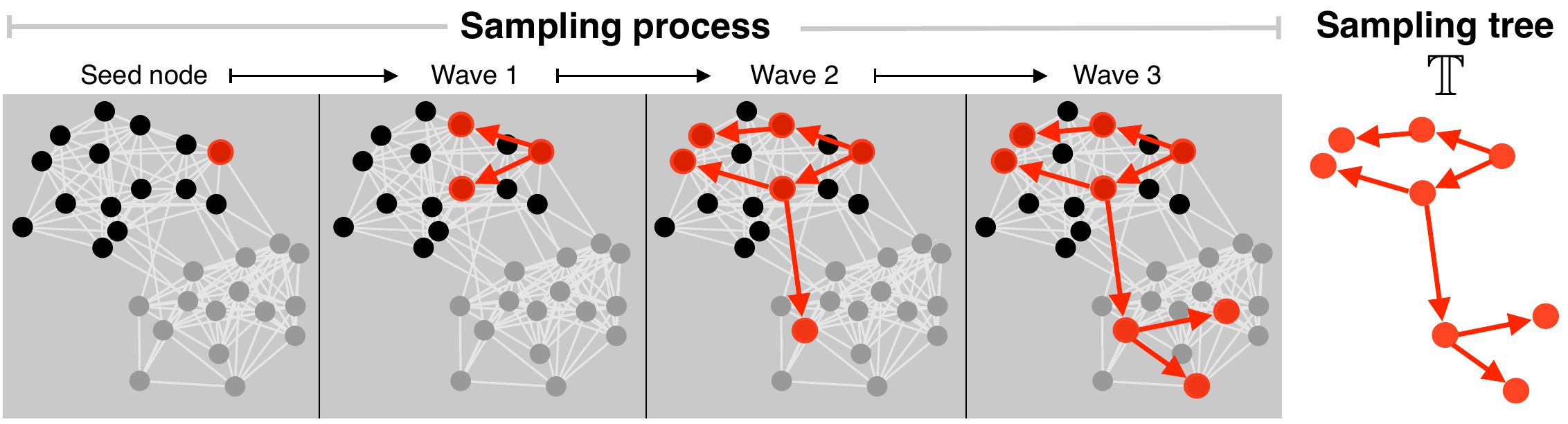} 
  \caption{A graphical depiction of the referral process, which is modeled as a
    Markov chain indexed by a tree. This figure gives an example of a social
    network $\G$ and a referral tree $\mathbb{T}$.}
  \label{fig:graphSeq}
\end{figure}

\textbf{The social network.}
Denote the underlying social network by an undirected graph $\G=(\V,\E)$
where $\V = \{1, \dots, N\}$ is the set individuals in the target
population and $\E = \{(u,v) : \mbox{ $u$ and $v$ are friends}\}$ is the set
of social ties. 
Define the adjacency matrix $A$ as
\begin{equation}
  A(u,v) = \left\{ \begin{array}{ll}
    1 & \mbox{if } (u,v) \in \E;\\
    0 & \mbox{o.w.}\end{array} \right.  
\end{equation}
and the node degree as $\deg(u) = \sum_v A(u,v)$. 

\textbf{Node features.}  
After sampling an individual $u \in \V$, we can measure their status $y(u)$,
where $y:\V\rightarrow \R$ is some node feature. For instance, $y(u)$ could be a
binary variable which is one if node $u$ is HIV+ and zero otherwise. The aim of
RDS is to estimate the population average of $y$ over all nodes, 
\[\mu = \frac 1N \sum_{u\in\V} y(u).\]

\textbf{Markov chain.}    
Let $(X_i)_{i=0}^n$ be an irreducible Markov chain with the finite state space
$\V$ of size $N$ and transition matrix $P \in \R^{N \times N}$;  for $u,v \in \V$
and for all $i \in 0, \dots, n-1$, \[P(u,v) = \Pr(X_{i+1} = v | X_i = u).\]
Define $P_A$ as the Markov transition matrix of the simple random walk, \[P(u,v)
= \frac{A(u,v)}{\deg(u)}.\] The standard Markov model for RDS assumes that $X_i$
is a simple random walk.  

\textbf{Novel designs.} 
Designed RDS is any technique that assigns differing weights to the edges.  Define the mapping
$W: \E \rightarrow R_+$ as a weighting function on the edges $(u,v) \in \E$.  If
$(u, v) \in \E$ and $W(u,v) >0$, then $u$ can recruit $v$.  For simplicity,
define $W(u,v) =0$ if $(u,v) \not \in \E$.  Then, $W$ can be expressed as a
matrix.  Define the diagonal matrix $T$ to contain the row sums of $W$, so that $T_{uu} = \sum_{v}
W(u,v)$.  

Through novel implementations, Designed RDS alters the edge weights.  After
weighting the edges, the Markov transition matrix becomes 
\begin{equation}\label{eq:defp}
P_W = T^{-1} W.
\end{equation} 
If Designed RDS increases an edge weight, it makes the edge more likely to be
traversed.  

We restrict the analysis to symmetric weighting matrices. Because of this
restriction, $P_W$ is reversible and has a stationary distribution $\pi:
\V\rightarrow \R_+$ that is easily computable, 
\begin{equation} \label{eq:stationary}
\pi(u) = \frac{T_{uu}}{\sum_v T_{vv}}.
\end{equation}
 Throughout, it will be assumed that $X_0$ is initialized with
$\pi$.  A more thorough treatment of Markov chains and their stationary
distribution can be found in \cite{levin2009markov}. 

\textbf{Referral tree.} 
In the Markov chain model, participant $X_i$ refers participant $X_{i+1}$. This
assumes that each participant refers exactly one individual. In practice, RDS
participants usually refer between zero and three future participants. To allow for
this heterogeneity, it is necessary to index the Markov process with a tree, not
a chain. Let $\mathbb{T}$ denote a rooted tree with $n$ nodes. See Figure
\ref{fig:graphSeq} for a graphical depiction. 

To simplify notation, $\sigma \in \mathbb{T}$ is used to represent
$\sigma$ belonging to the node set of $\mathbb{T}$. For any node $\sigma \in
\mathbb{T}$ with $\sigma \ne \mathit{root}(\mathbb{T})$, denote
$\mathit{parent}(\sigma) \in \mathbb{T}$ as the parent node of $\sigma$. The Markov
process indexed by $\mathbb{T}$ is a set of random variables 
$\{X_\sigma \in \V: \sigma \in \mathbb{T}\}$ such that 
$X_{\mathit{root}(\mathbb{T})}$ is
initialized from $\pi$ and 
\[
  \Pr(X_\sigma = v | X_{\mathit{parent}(\sigma)} = u) = P(u,v), \mbox{ for } u,v \in \V.
\] 
The distribution of $X_\sigma$ is completely determined by the state of
$X_{\mathit{parent}(\sigma)}$.  \cite{benjamini1994markov} called this process a $(\mathbb{T},P) \textit{-walk
on } \G$. In the social network $\G$, an edge represents friendship. In the
referral tree, a directed edge $(\tau, \sigma)$
represents that random individual $X_{\tau} \in \V$ refers random individual
$X_{\sigma} \in \V$ in the $(\mathbb{T},P)\textit{-walk on } \G$.

\textbf{Statistical estimation.} 
For any function on the nodes of the graph $y:\V \rightarrow \R$, denote
\[
  \mu_{\pi,y} := \rmE_\pi y := \sum_{u \in \V} y(u) \pi(u) \ \mbox{ and } \ 
  \mu_y := \rmE y := \frac 1N \sum_{u \in \V} y(u),
\]
where $N:=|\V|$ is the number of nodes in the social network. By assumption, $X_0
\sim \pi$. So, $X_\tau \sim \pi$ and the sample mean
$1/n \sum_{\tau \in \mathbb{T}} y(X_\tau)$ 
consistently estimates $\mu_{\pi,y}$, the population mean under stationarity.
Thus, it is not a consistent estimator for the parameter of interest, namely
the population mean $\mu_y$. In order to estimate $\mu_y$, one can use inverse
probability weighting (IPW), using the stationary distribution. It can be shown
that  	
\[
  \hat \mu_{IPW} = \frac1n \sum_{\tau \in \mathbb{T}} \frac 1 N \cdot
  \frac{y(X_\tau)}{\pi(X_\tau)} 
\] 
is an unbiased and consistent estimator of $\mu_y$. Typically, $N$ is unknown.
The Hajek estimator circumvents this problem while remaining asymptotically
unbiased,
\begin{equation} \label{eq:sec2:MCest}
  \frac{1}{\sum_{\tau \in \mathbb{T}} 1/\pi(X_\tau)} \sum_{\tau \in \mathbb{T}}
  \frac{y(X_\tau)}{\pi(X_\tau)}.
\end{equation}

The typical ``simple random walk'' assumption in the RDS literature is that
participants select uniformly from their contacts.  This corresponds to $T_{uu}
= \deg(u)$, making $\pi(u) \propto \deg(u)$, which is something that can be asked of
participants.  Under these assumptions, \eqref{eq:sec2:MCest} reduces to
the  RDS II estimator \citep{heckathorn2007extensions}
\[
  \hat{\mu}_y = \frac{1}{\sum_{\tau \in \mathbb{T}} 1/\deg(X_\tau)} 
  \sum_{\tau \in \mathbb{T}} \frac{y(X_\tau)}{\deg(X_\tau)}. 	
\]

\subsection{The Variance of RDS} \label{sec:variance}
Many empirical and social networks display community structures
\citep{girvan2002community}. This can lead to referral bottlenecks in the
Markov chain. These bottlenecks exist because respondents are likely
to refer people within their own community who have similar characteristics. This
section specifies how bottlenecks make successive samples dependent,
increasing the variance of $\hat \mu_{y}$ and the design effect of RDS. The
spectral properties of the Markov transition matrix reveal the strength of
these bottlenecks and control the variance of estimators like $\hat
\mu_{IPW}$. These results motivate the main results of this paper, which show
that anti-cluster sampling improves the relevant spectral properties of the
Markov transition matrix under a certain class of Stochastic Blockmodels. As a
result, anti-cluster sampling can decrease the variance of estimators like
$\hat \mu_{IPW}$.  

Let $\lambda_2(P_A)$ be the second largest eigenvalue of the Markov transition
matrix for the simple random walk. The Cheeger bound demonstrates that the
spectral properties of $P_A$ can measure the strength of these communities.  
See \cite{chung1997spectral} (Chapter 2) and \cite{levin2009markov} (p. 215)
for more details.  This relationship between communities in $\G$ and the
spectral properties of $P_A$ is exploited in the literature on spectral
clustering.  In that literature, $\G$ is observed and the spectral clustering
algorithm uses the leading eigenvectors of $P_A$ to partition $\V$ into
communities \citep{von2007tutorial}.  

Intuitively, if there are strong communities in $\G$ and the node features $y$
are relatively homogeneous within communities, then successive samples $X_i$
and $X_{i+t}$ will likely belong to the same community and have similar values
$y(X_i)$ and $y(X_{i+t})$.  This makes the samples highly dependent; the
auto-covariance $\mathrm{Cov}(y(X_i),y(X_{i+t}))$ will decay slowly as a function of $t$. The next
lemma decomposes the auto-covariance in the eigenbasis of the Markov transition
matrix. This proposition shows that the auto-covariance decays like
$\lambda_2^t$.

The following result applies to any reversible Markov chain with
$|\lambda_2|<1$. In particular, it applies to both $P_A$ (RDS) and $P_W$
(AC-RDS). With a reversible Markov chain, the assumption $|\lambda_2|<1$ is
equivalent to assuming that the chain is irreducible and aperiodic. 

\begin{prop} \label{lem:cov_eq}	
  Let $(X_i)_{i=0}^n$ be a Markov chain with reversible transition
  matrix $P$. Suppose that $X_0$ is initialized with $\pi$, the stationary
  distribution of $P$. For $j = 1, 2, \dots, N$, let $(f_j,\lambda_j)$ be the
  eigenpairs of $P$, ordered so that $|\lambda_i| \ge |\lambda_{i+1}|$.
  Because $P$ is reversible, $f_j$ and $\lambda_j$ are real valued and the
  $f_j$ are orthonormal with respect to the inner product
  $\inner{f_\ell}{f_j}_\pi = \sum_{i \in \V} f_\ell(i) f_j(i) \pi(i)$. If
  $|\lambda_2|<1$, then 

  \[ 
    \mathrm{Cov}( y(X_i), y(X_{i+t}) ) = \sum_{j=2}^{|\V|}
    \inner{y}{f_j}_\pi^2 \lambda_j^t.  
  \]
\end{prop}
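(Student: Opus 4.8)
The plan is to express the autocovariance as a quadratic form in the $\pi$-inner product and then diagonalize it in the eigenbasis $\{f_j\}$. First I would record that, because $\pi$ is stationary and $X_0\sim\pi$, every marginal satisfies $X_i\sim\pi$; hence $\rmE\,y(X_i)$ and $\rmE\,y(X_{i+t})$ both equal $\mu_{\pi,y}=\sum_{u\in\V}y(u)\pi(u)$. Next I would condition on $X_i$ and advance $t$ steps using the Markov property, writing the joint second moment as
\[
  \rmE\big[y(X_i)\,y(X_{i+t})\big] = \sum_{u,v\in\V}\pi(u)\,P^t(u,v)\,y(u)\,y(v) = \inner{y}{P^t y}_\pi,
\]
where $P^t y$ denotes the function $u\mapsto\sum_{v}P^t(u,v)y(v)$. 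This reduces the whole problem to evaluating $\inner{y}{P^t y}_\pi$ and subtracting off $\mu_{\pi,y}^2$.

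The crucial structural input is reversibility. Detailed balance, $\pi(u)P(u,v)=\pi(v)P(v,u)$, makes $P$ self-adjoint with respect to $\inner{\cdot}{\cdot}_\pi$, i.e. $\inner{g}{Ph}_\pi=\inner{Pg}{h}_\pi$ for all $g,h$. The spectral theorem then supplies exactly the real eigenvalues $\lambda_j$ and the $\pi$-orthonormal eigenbasis $\{f_j\}$ posited in the statement, together with $P^t f_j=\lambda_j^t f_j$. I would then expand $y=\sum_{j=1}^{|\V|}\inner{y}{f_j}_\pi f_j$, apply $P^t$ term by term, and invoke orthonormality $\inner{f_j}{f_k}_\pi=\delta_{jk}$ to obtain
\[
  \inner{y}{P^t y}_\pi = \sum_{j=1}^{|\V|}\inner{y}{f_j}_\pi^2\,\lambda_j^t.
\]

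Finally I would isolate the $j=1$ term and match it to the product of means. Since each row of $P$ sums to one, the constant function $\1$ satisfies $P\1=\1$, so $(f_1,\lambda_1)=(\1,1)$, and $\inner{\1}{\1}_\pi=\sum_u\pi(u)=1$ shows it is already normalized; consequently $\inner{y}{f_1}_\pi=\mu_{\pi,y}$ and its contribution is $\mu_{\pi,y}^2\lambda_1^t=\mu_{\pi,y}^2$, precisely the quantity subtracted in the covariance. Removing it leaves $\sum_{j=2}^{|\V|}\inner{y}{f_j}_\pi^2\lambda_j^t$, as claimed. The main obstacle here is conceptual rather than computational: one must justify the existence of a complete $\pi$-orthonormal eigenbasis from reversibility and confirm that the leading eigenpair is exactly $(\1,1)$ and simple, so that the $j=1$ term is cleanly the squared mean with no residual. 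Irreducibility together with the hypothesis $|\lambda_2|<1$ guarantees this, after which the remainder is bookkeeping in the orthonormal basis.
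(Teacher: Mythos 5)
Your proof is correct and takes essentially the same approach as the paper: both reduce the covariance via stationarity to $\rmE\left[y(X_1)y(X_{1+t})\right]-\mu_{\pi,y}^2$ and then diagonalize $P^t$ in the $\pi$-orthonormal eigenbasis, with the constant eigenpair $(\1,1)$ supplying exactly the squared mean that cancels. The only cosmetic difference is that you work in operator form, writing the second moment as $\inner{y}{P^t y}_\pi$ and expanding $y$, while the paper expands the kernel $P^t(u,v)$ pointwise (citing Lemma 12.2 of Levin et al.) and sums over $u,v$ --- the same spectral computation in coordinates.
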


In previous research, \cite{bassetti2006examples} and \cite{verdery2015network} used a
similar expression to compute the variance.  

%
%

\subsection{Anti-Cluster Random Walk; Constructing the Weights $W$} \label{sec:acrw}
This subsection describes a Markov model for AC-RDS. Section \ref{sec:theory}
then studies the spectral properties of the resulting AC-RDS Markov transition
matrix. To describe the model we need the following notation.  Let $\cdot$
denote  element-wise matrix multiplication and let $J_{K \times K}$ denote a
$K\times K$ matrix containing all ones. Finally, define the overbar operator
for a  $K\times K$ matrix $B$ as $\bar B:=J_{K \times K}-B$,  so that $\bA
=J_{N \times N}-A$.



This model creates a Markov transition matrix which can be expressed with
matrix notation.  Under the model, if $i$ has one coupon, then the probability
that $i$ refers $j$ is proportional to the $(i,j)^{th}$ element of the matrix
$(A \bA)\cdot A$. To see this, note that the $(i,j)^{th}$ element of $A \bA$ is
the number of nodes $\ell$ that are friends with $i$ but not friends with $j$,
that is
\[
  [A \bA]_{ij}
  = \sum_\ell A_{i\ell} (1 - A_{j\ell}).
\] 
Then, the element-wise multiplication ensures that $i$ is friends with $j$,
yielding the weight matrix $(A \bA)\cdot A$.  

Note that the weight matrix $(A \bA)\cdot A$ is not symmetric and, thus, does
not lead to a reversible Markov chain. However, we can use a second referral
request to augment the first request to ensure reversibility. To this end,
model the referral request ``Please refer someone that knows many people
that you do not know'' as follows:  if $i$ is friends with $j$, then the
probability that $i$ refers $j$ is  proportional to the number of people that
$j$ knows that $i$ does not know. In a similar fashion as above, this request
produces the weight matrix $(\bA A)\cdot A$. 


To implement AC-RDS, choose between $(A \bA)\cdot A$ and 
$(\bA A)\cdot A$ with equal probability by flipping a coin.  Consider the
matrix $\tilde W$ given by
\begin{equation} \label{eq:ac-rdsweights}
  \tilde W = (A \bA + \bA A)\cdot A.
\end{equation}
The $(i,j)^{th}$ element of $\tilde W$ is proportional to the probability that
$i$ refers $j$ in the process described above. By design, $\tilde W$ is
symmetric, making making $P_{\tilde W}$ a reversible Markov transition matrix. 

These ideas for connecting implementation instructions for AC-RDS with the
Markov model are summarized in Table~\ref{table:implementation}. The next
section studies the spectral properties of $P_{\tilde W}$ under a statistical
model for $\G$.
\begin{table}[h]
  \textbf{Implementation instructions compared to the Markov model}
  \centering
  \begin{tabular}{|p{0.19 \textwidth}|p{0.35\textwidth}|p{0.36\textwidth}|} 
	\hline
	\centering Flip a coin &  If heads (type \textbf{A}), & If tails (type \textbf{B}), \\ \hline
  	\centering Implementation Instructions & 
	\cellcolor{gray!10} Ask ``please refer contacts in the target population who don't know many of your contacts.'' & 
	\cellcolor{gray!10} Ask ``please refer contacts in the target population who have many contacts who don't know you.'' \\ \hline
  	\centering Markov model, starting from node $i$ & 
  	List all pairs of nodes $(j,k)$ such that,
  	$(i,j)\in \E$, $(i,k)\in\E$, and $(k,j)\notin \E$. Then choose a pair $(j,k)$
  	 uniformly and refer $j$ or $k$ uniformly at random. &
  	List all pairs of nodes $(j,k)$ such that $(i,j)\in \E$ and $(i,k)\notin \E$.
  	From this list, uniformly choose a node pair $(j,k)$. Refer $j$.\\ \hline
  \end{tabular}
	\caption{The correspondence between AC-RDS implementation instructions and
	the Markov model for the referral process. Referral requests A and B from
	Figure~\ref{fig:refreq} correspond to  the left and right columns,
	respectively, of this table.  The first row describes the verbal request
	given to a participant.  The second row describes the Markov model for this
	request, as discussed in Section~\ref{sec:acrw}.}
  \label{table:implementation}
\end{table}

Finally, we note that the transition matrix $P_{\tilde W}$ does not use
referral request C in Figure \ref{fig:refreq}, ``Please refer someone that does
not know the person that referred you."  Such a request cannot form a Markov
chain on the nodes in the network because it depends on the previous
participant. This non-Markovian behavior should not preclude the use of request
C in practice; however, it does make establishing theoretical results for
request C more difficult. In this paper, we focus on requests A and B and their
Markov transition matrix $P_{\tilde W}$. 

\section{Theoretical Results} \label{sec:theory}
To study the spectral properties of $P_{\tilde W}$ under a statistical model
for the underlying social network, we break the analysis into ``population
results'' and ``sampling results.''  
The ``population results" in this section correspond to using the (weighted)
adjacency matrix $\A = \rmE A$, where the expectation is with respect to the
statistical model for generating the network. The expected adjacency
matrix is a deterministic matrix and various combinatorial techniques can be
used to show its properties.
Define
\begin{equation}\label{eq:pw}
  \tilde \W = (\A \bar{\A} + \bar{\A} \A)\cdot \A.
\end{equation}
Define the Markov transition matrices $P_{\tilde \W}$ and $P_\A$ as in
\eqref{eq:defp}.  In these definitions, $P_\A$ corresponds to the population
matrix for the simple random walk (RDS) and $P_{\tilde \W}$ corresponds to the
population matrix for AC-RDS.

The ``sampling'' referred to in this section introduces an additional layer of
randomness to generate the underlying social network $\G$.  
The goal of ``sample results'' is to show that the random graph generated by
the generic model has similar properties to the expected graph. That is the
randomness of the graph doesn't significantly change the graph from the expected graph.
To refer to the
randomness of the Markov chain, this section will refer to ``anti-cluster
sampling,'' ``Markov sampling,'' or ``respondent-driven sampling.''

The population results will show that under various statistical models for the
underlying social network, the second eigenvalue of $P_{\tilde \W}$ is less
than the second eigenvalue of $P_\A$.   To extend these population results to a
network which is sampled from the model, the sampling results use concentration
of measure to show that $A$ and $\tilde W$ are close (under the operator norm) to $\A$ and $\tilde \W$, respectively.
Then, perturbation theorems show that the eigenvalues of $P_\A$ and $P_{\tilde W}$ are
close to the eigenvalues of $P_A$ and $P_{\tilde \W}$, respectively. Theorem~\ref{thm:var_acrds} 
combines these results with Proposition~\ref{lem:cov_eq} to show that AC-RDS
reduces the covariance between Markov samples. 

\subsection{Population Graph Results}\label{subsec:PopulationGraph}
Anti-cluster sampling is motivated by the need to readily escape communities in
a social network. The Stochastic Blockmodel (SBM) is a standard and popular
model that parameterizes communities in the social network
\citep{holland1983stochastic}. For this reason, the analyses below use the SBM
to study anti-cluster sampling.

\begin{definition}
 To sample a network from the \textbf{Stochastic Blockmodel}, assign each node
 $u \in \{1, 2,\ldots, N\}$ to a class $z(u) \in \{1, 2,\ldots, K\}$,
 where the $z(u)$ are independently generated from Multinomial$(\theta)$.  
 Conditionally on $z$, edges
 are independent and the probability of an edge between nodes $u$ and $v$ is
 $B_{z(u) z(v)}$, for some matrix $B \in [0,1]^{K \times K}$.
\end{definition}

The results below condition on the partition $z$. Conditional on this
partition, $\rmE[A|z]$ has a convenient block structure. Define the partition
matrix $Z \in \{0,1\}^{N \times K}$ such that $Z_{uk} = 1$ if $z(u) = k$,
otherwise $Z_{uk} = 0$.  Define $\A= \rmE[A|z]$ and note that
\[\A = Z B Z^T.\]
Let $\bar{\A} := J_{N\times N} - \A$.  Define the population weighting matrix
as in \eqref{eq:pw}.  The following lemma shows that $\tilde \W$
retains the block structure of $\A$.

\begin{lemma} \label{lem:ac_as_sbm} 
	Define $\bar{B} := J_{K\times K} - B$ and $\Theta  \in \R^{K\times K}$ as
	a diagonal matrix with $\Theta_{kk}$ equal to  the expected number of
	nodes in the $k$th block. Then, $\tilde \W = (\A \bar{\A} + \bar{\A}
	\A)\cdot \A$ can be expressed as
  \begin{align*}
    \tilde  \W = Z\left((B\Theta \bar{B} + \bar{B}\Theta B) \cdot B\right)Z^T.
  \end{align*}
\end{lemma}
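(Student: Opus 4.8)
The plan is to push the whole computation down to $K\times K$ matrices by exploiting the factorization $\A = ZBZ^T$ together with two elementary identities for the partition matrix $Z$. First I would record that $Z^TZ = \Theta$: since each row of $Z$ contains exactly one $1$, the entry $(Z^TZ)_{k\ell}$ counts the nodes lying in both block $k$ and block $\ell$, which vanishes off the diagonal and equals the size of block $k$ on the diagonal. Second, because each row of $Z$ sums to one,
\[
  \bigl(Z J_{K\times K} Z^T\bigr)_{uv} = \Bigl(\sum_k Z_{uk}\Bigr)\Bigl(\sum_\ell Z_{v\ell}\Bigr) = 1,
\]
so $Z J_{K\times K}Z^T = J_{N\times N}$.

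Using the second identity I would rewrite the complement as a conjugation of $\bar{B}$:
\[
  \bar\A = J_{N\times N} - \A = Z J_{K\times K}Z^T - ZBZ^T = Z\bar{B}Z^T.
\]
The two matrix products then collapse through $Z^TZ = \Theta$,
\[
  \A\bar\A = ZB(Z^TZ)\bar{B}Z^T = ZB\Theta\bar{B}Z^T,
  \qquad
  \bar\A\A = Z\bar{B}\Theta B Z^T,
\]
so that $\A\bar\A + \bar\A\A = Z M Z^T$ with $M := B\Theta\bar{B} + \bar{B}\Theta B$.

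The only step that is not pure matrix algebra is the Hadamard product with $\A$, and this is where the plan relies on the $0$--$1$ partition structure of $Z$. Both $ZMZ^T$ and $\A = ZBZ^T$ are block-constant: their $(u,v)$ entries equal $M_{z(u)z(v)}$ and $B_{z(u)z(v)}$, respectively. Hence their entrywise product has $(u,v)$ entry $M_{z(u)z(v)}B_{z(u)z(v)} = (M\cdot B)_{z(u)z(v)}$, which I would verify entrywise to make explicit that it equals $Z(M\cdot B)Z^T$; this identity
\[
  (ZMZ^T)\cdot(ZBZ^T) = Z\,(M\cdot B)\,Z^T
\]
fails for a general $Z$ but holds here because every node lies in exactly one block. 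Substituting $M = B\Theta\bar{B} + \bar{B}\Theta B$ then gives $\tilde\W = Z\bigl((B\Theta\bar{B} + \bar{B}\Theta B)\cdot B\bigr)Z^T$, as claimed. I expect the commuting of the Hadamard product past the $Z(\cdot)Z^T$ conjugation to be the only point requiring care; distributing the overbar and collapsing $Z^TZ$ to $\Theta$ are routine.
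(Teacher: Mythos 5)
Your proof is correct and follows essentially the same route as the paper's: establish $Z^TZ = \Theta$ and $\bar\A = Z\bar B Z^T$, collapse the products $\A\bar\A$ and $\bar\A\A$ through $\Theta$, and pull the Hadamard product inside the $Z(\cdot)Z^T$ conjugation. You spell out the entrywise justification of $(ZMZ^T)\cdot(ZBZ^T) = Z(M\cdot B)Z^T$, which the paper states without comment; that is the one step genuinely requiring the partition structure of $Z$, and making it explicit is a small improvement rather than a deviation.
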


The following lemma shows that under a certain class of Stochastic Blockmodels,
anti-cluster sampling decreases the probability of an in-block referral.  
\begin{lemma} \label{lem:ac_increases_leaving_prob} 
  For $0<r<p+r<1$, let $B = pI + rJ_{K\times K}$. 
  If $\Theta_{ll}r < \Theta_{kk}(p+r)$ for all $k \neq l$, then for any two nodes $u$ and
  $v$ with $z(u) = z(v)$,
  \[ P_{\tilde \W}(u,v) < P_\A(u,v). \]   
\end{lemma}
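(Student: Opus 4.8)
The plan is to push the whole statement down to the level of the $K \times K$ block matrices and reduce the $N \times N$ inequality to a single scalar inequality. By Lemma~\ref{lem:ac_as_sbm} we have $\tilde\W = Z\big((B\Theta\bar B + \bar B\Theta B)\cdot B\big)Z^T$ and $\A = ZBZ^T$, so both matrices are constant on block pairs. Writing $a = z(u) = z(v)$, $t_k := \Theta_{kk}$, and $M := (B\Theta\bar B + \bar B\Theta B)\cdot B$, the row through any node in block $a$ sums to $\sum_m t_m B_{am}$ (for $\A$) and $\sum_m t_m M_{am}$ (for $\tilde\W$). Hence $P_\A(u,v)$ and $P_{\tilde\W}(u,v)$ depend only on the blocks of $u,v$, and the claim becomes
\[
 \frac{M_{aa}}{\sum_m t_m M_{am}} < \frac{B_{aa}}{\sum_m t_m B_{am}}.
\]

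First I would record the two facts that make this tractable. Writing $C := B\Theta\bar B + \bar B\Theta B$ (symmetric), the final Hadamard product with $B$ gives $M_{aa} = B_{aa}C_{aa}$ and $M_{am} = B_{am}C_{am}$, and for $B = pI + rJ_{K\times K}$ one has $B_{aa} = p+r$ and $B_{am} = r$ for $m \ne a$. Substituting, cancelling the common factor $B_{aa} > 0$, cross-multiplying (all denominators are positive), and then cancelling the diagonal term $t_a B_{aa} C_{aa}$, the inequality collapses to the clean condition
\[
 \sum_{m\ne a} t_m\,(C_{am} - C_{aa}) > 0 .
\]
Thus the entire statement reduces to comparing the off-diagonal and diagonal entries of $C$, weighted by block sizes.

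Next I would compute $C_{aa}$ and $C_{am}$ in closed form from $B = pI + rJ_{K\times K}$ and $\bar B = (1-r)J_{K\times K} - pI$. This is a direct but bookkeeping-heavy step: each entry $(B\Theta\bar B)_{ab} = \sum_\ell B_{a\ell}t_\ell \bar B_{\ell b}$ is split into the $\ell = a$, $\ell = b$, and remaining terms, and symmetry of $B$ and $\bar B$ shows $B\Theta\bar B$ and $\bar B\Theta B$ agree on the diagonal. The outcome I expect is
\[
 C_{am} - C_{aa} = p\big[\, t_a\big(2(p+r)-1\big) + t_m(1-2r) \,\big],
\]
so the target sum equals $p\big[\, t_a\big(2(p+r)-1\big)S_{-a} + (1-2r)Q \,\big]$ with $S_{-a} = \sum_{m\ne a} t_m$ and $Q = \sum_{m\ne a} t_m^2$.

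Finally, the crux is to show this sum is strictly positive. Here I would use the identity $\big(2(p+r)-1\big) + (1-2r) = 2p$ to rewrite the bracket as $\big(2(p+r)-1\big)\big(t_a S_{-a} - Q\big) + 2pQ$, where $t_a S_{-a} - Q = \sum_{m\ne a} t_m(t_a - t_m)$ is a pure measure of block-size imbalance. For equal block sizes this term vanishes and the bracket reduces to $2pQ > 0$, giving the conclusion at once — this is the clean equal-block case emphasized in the abstract. The main obstacle is the unequal-block regime: when $p+r < 1/2$ and block $a$ is larger than its neighbours the imbalance term is negative and competes with $2pQ$. This is precisely where the size-balance hypothesis $\Theta_{ll}r < \Theta_{kk}(p+r)$ for all $k \ne l$ must enter, since it bounds the ratios $t_a/t_m$ (equivalently $t_{\max}/t_{\min} < (p+r)/r$) and is intended to keep $2pQ$ dominant over the negative imbalance contribution. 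I expect controlling this competition uniformly over the choice of reference block $a$ to be the delicate part of the argument; everything preceding it is mechanical.
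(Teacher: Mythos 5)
Your algebra is correct as far as it goes: the collapse of the claim to $\sum_{m\neq a} t_m\,(C_{am}-C_{aa})>0$ is valid (off the diagonal $B_{am}=r$ is constant, so it cancels after cross-multiplication), and I verified your closed form $C_{am}-C_{aa}=p\bigl[t_a\bigl(2(p+r)-1\bigr)+t_m(1-2r)\bigr]$, as well as the equal-block conclusion $2pQ>0$. But the step you defer as "delicate" is not delicate — it is impossible, and your own formula proves it. Take $K=2$, $p=r=0.1$, $t_1=1.9$, $t_2=1$. The stated hypothesis holds ($t_{\max}/t_{\min}=1.9<(p+r)/r=2$, both orderings of $\Theta_{ll}r<\Theta_{kk}(p+r)$ check out), yet for $a=1$ your sum is $p\,t_2\bigl[t_1\bigl(2(p+r)-1\bigr)+t_2(1-2r)\bigr]=0.1\bigl[1.9\cdot(-0.6)+0.8\bigr]=-0.034<0$. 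Direct computation confirms it: $C_{11}=0.788$, $C_{12}=0.754$, so the normalized in-block probability is $\approx 0.4204$ for $P_{\tilde\W}$ versus $\approx 0.4167$ for $P_\A$, i.e. $P_{\tilde\W}(u,v)>P_\A(u,v)$ for $u,v$ in the large block. So under the hypothesis as stated, the lemma fails when $p+r<1/2$ and the blocks are sufficiently unbalanced; no refinement of your last step can rescue it, because the hypothesis controls only one of the two summands of $C=B\Theta\bar B+\bar B\Theta B$.

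Comparing with the paper's own proof is instructive here. The paper argues per entry rather than through your aggregated sum: for the one-sided matrix $(B\Theta\bar B)\cdot B$ its ratio inequality reduces exactly to $p\bigl(\Theta_{kk}(p+r)-\Theta_{ll}r\bigr)>0$, which is where the stated hypothesis comes from. It then asserts that "going through the same procedure" yields the analogous inequality for the symmetrized matrix $(B\Theta\bar B+\bar B\Theta B)\cdot B$ — but the second summand reduces to the different condition $t_l(1-r)>t_k(1-p-r)$, and the symmetrized per-entry condition is precisely your bracket $t_k\bigl(2(p+r)-1\bigr)+t_l(1-2r)>0$, which the counterexample above violates. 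In other words, your aggregate route and the paper's per-entry route hit the same wall in the unbalanced regime; the paper asserts its way past it, while your honest flag, together with your closed form, actually exposes the gap. Nothing downstream is damaged: the paper's final step \eqref{lem:ineq:last} and Propositions~\ref{lem:spgap_popgraph} and~\ref{prop:epsilon} all assume equal block sizes, where your bracket equals $2p\,t^2>0$ and both arguments go through; your bracket also shows the conclusion holds whenever $p+r\ge 1/2$ (both coefficients nonnegative, not both zero), and a short computation shows the stated hypothesis suffices when $r\ge 1/2$, since then $t_l/t_k<(p+r)/r<\bigl(2(p+r)-1\bigr)/(2r-1)$. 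The honest repair is to restrict the lemma to these regimes, or to replace the hypothesis by $\Theta_{kk}\bigl(2(p+r)-1\bigr)+\Theta_{ll}(1-2r)>0$ for all $k\neq l$.
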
 
Note that if every block has an equal population, then the first assumption,
$0<r<p+r<1$, implies the second assumption $\Theta_{ll}r < \Theta_{kk}(p+r)$.
The next proposition uses Lemma~\ref{lem:ac_increases_leaving_prob} to show
that anti-cluster sampling reduces the second eigenvalue of the population
Markov transition matrix. 

\begin{prop}[Spectral gap of the population graph] \label{lem:spgap_popgraph}
  Under the SBM with $K$ blocks, let $B = p I + r J_{K\times K}$, for
  $0<r<p+r<1$.  If the $K$ blocks have equal size, then 
  \begin{equation}\label{eq:balance} 
    0 < \lambda_2(P_{\tilde \W}) + \epsilon < \lambda_2(P_{\A}) < 1,
  \end{equation}
  where $\epsilon >0$ depends on $K,p,$ and $r$, but is independent of $N$, the
  number of nodes in the graph.  Specifically, $\lambda_2(P_{\A}) =
  1/(R+1)$, where $R = Kr/p$. In the asymptotic setting where $K$ grows and $r$
  shrinks, while $p$ and $R$ stay fixed, 
  \begin{equation}\label{eq:smallworld}
    \lambda_2(P_{\tilde \W}) \rightarrow \frac{1}{c R+1}, \ \mbox{ with } \ c = \frac{R+1}{R+1 - p}.
  \end{equation}
\end{prop}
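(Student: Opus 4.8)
The plan is to exploit the fact that, for equal block sizes, both population matrices are built from $K\times K$ matrices lying in the two-dimensional algebra $\mathcal S = \{xI_K + yJ_{K\times K} : x,y\in\R\}$, which is closed under ordinary matrix multiplication (since $J_{K\times K}^2 = K J_{K\times K}$) and under the Hadamard product (since $I\cdot I = I$, $I\cdot J_{K\times K} = I$, and $J_{K\times K}\cdot J_{K\times K} = J_{K\times K}$). With equal blocks, $\Theta = (N/K)I_K$, so Lemma~\ref{lem:ac_as_sbm} gives $\tilde\W = ZMZ^T$ with $M = \frac{N}{K}\big((B\bar B + \bar B B)\cdot B\big)\in\mathcal S$. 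First I would compute $M$ explicitly: writing $B = pI + rJ_{K\times K}$ and $\bar B = (1-r)J_{K\times K} - pI$, one finds $B\bar B = \bar B B = -p^2 I + aJ_{K\times K}$ with $a = p - 2pr + rK - r^2K$, and then the Hadamard product with $B$ yields $M = xI_K + yJ_{K\times K}$ where $x = \tfrac{2N}{K}\,pS$, $y = \tfrac{2N}{K}\,ar$, and $S := a - p^2 - pr$. Likewise $\A = ZBZ^T$ with $B\in\mathcal S$.

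The second step reduces the spectral problem to $\mathcal S$. For any $C = xI_K + yJ_{K\times K}$ and equal block size $n = N/K$, every row sum of $ZCZ^T$ equals $n(x+yK)$, so $P = T^{-1}ZCZ^T = \frac{1}{n(x+yK)}ZCZ^T$. Since $ZCZ^T$ and $Z^TZ\,C = nC$ share the same nonzero eigenvalues, and $C$ has eigenvalues $x+yK$ (on $\mathbf 1$) and $x$ (multiplicity $K-1$), the spectrum of $P$ is $\{1,\ \tfrac{x}{x+yK}\ (\text{mult. } K-1),\ 0\ (\text{mult. } N-K)\}$. Applied to $B$ this gives $\lambda_2(P_\A) = \frac{p}{p+rK} = \frac{1}{R+1}$ with $R = Kr/p$, the stated exact formula; applied to $M$ it gives $\lambda_2(P_{\tilde\W}) = \frac{pS}{pS+arK}$, where the common factor $2N/K$ cancels, which is precisely why $\epsilon$ will be independent of $N$. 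One must still check that $\tfrac{x}{x+yK}$ is genuinely the second eigenvalue in the $|\lambda_i|$-ordering of Proposition~\ref{lem:cov_eq}, i.e.\ that it is nonnegative and strictly below $1$.

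This is the main obstacle: verifying the sign conditions $S>0$ and $a>0$, so that $P_{\tilde\W}$ is a bona fide reversible stochastic matrix with nonnegative entries and a positive, sub-unit second eigenvalue. I would treat $S = p(1-3r-p) + rK(1-r)$ as a downward quadratic in $p$; for $K=2$ its roots are exactly $p = 1-r$ and $p = -2r$, so $S>0$ on $0<p<1-r$, which is guaranteed by the hypothesis $p+r<1$. For $K>2$ the term $rK(1-r)\ge 2r(1-r)$ only increases $S$, so $S>0$ throughout, and then $a = S + p^2 + pr > 0$ follows automatically. (Here $K\ge 2$, the relevant case for community structure, is used.)

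With positivity established, the comparison is clean. Since all quantities are positive, $\frac{pS}{pS+arK} < \frac{p}{p+rK}$ cross-multiplies to $S(p+rK) < pS + arK$, i.e.\ $SrK < arK$, i.e.\ $S < a$; and $S = a - p^2 - pr < a$ holds strictly because $p^2 + pr > 0$. This yields $\lambda_2(P_{\tilde\W}) < \lambda_2(P_\A) = \frac{1}{R+1} < 1$ with strictly positive gap $\epsilon = \frac{p}{p+rK} - \frac{pS}{pS+arK}$ depending only on $K,p,r$. Finally, for the asymptotics I would substitute $r = Rp/K$ and let $K\to\infty$: then $a\to p(1+R)$, $S\to p(1+R-p)$, so $pS\to p^2(1+R-p)$ and $arK\to Rp^2(1+R)$, whence $\lambda_2(P_{\tilde\W})\to \frac{1+R-p}{(1+R)^2 - p}$, which equals $\frac{1}{cR+1}$ for $c = \frac{R+1}{R+1-p}$ since $cR+1 = \frac{(R+1)^2 - p}{R+1-p}$.
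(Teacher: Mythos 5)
Your proof is correct, and its computational core coincides with the paper's: the reduction to the two-parameter algebra $\{xI + yJ_{K\times K}\}$, the identity $B\bar B = -p^2 I + aJ_{K\times K}$, and the resulting $\tilde B \propto pS\,I + ar\,J_{K\times K}$ are exactly the paper's calculation, with your $a$ and $S$ being the paper's $u$ and $u - rp - p^2$, and your limit computation ($a \to p(R+1)$, $S \to p(R+1-p)$, so $arK/(pS) \to cR$) is the same as theirs. Where you genuinely diverge is the strict inequality \eqref{eq:balance}: the paper proves $\lambda_2(P_{\tilde \W}) < \lambda_2(P_{\A})$ by feeding the entrywise comparison of Lemma~\ref{lem:ac_increases_leaving_prob} (off-block transition probabilities are strictly larger under AC-RDS) into a Dirichlet-form/variational characterization of $\lambda_2$ for the reduced $K\times K$ Laplacians, whereas you derive the exact finite-$K$ eigenvalue $\lambda_2(P_{\tilde \W}) = pS/(pS + arK)$ and reduce the inequality to the one-line comparison $S < a$, equivalently $p(p+r) > 0$, which also hands you an explicit $N$-free $\epsilon = \frac{p}{p+rK} - \frac{pS}{pS+arK}$. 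Your route buys two things the paper leaves implicit: a closed form for $\lambda_2(P_{\tilde \W})$ at every finite $K$ (the paper writes $\lambda_2(P_{\tilde \W}) = (K\tilde r/\tilde p + 1)^{-1}$ but only evaluates it in the limit), and, more substantively, the sign verification $S>0$ (hence $a>0$) via the factorization of $S$ as a downward quadratic in $p$ with roots $1-r$ and $-2r$ at $K=2$, plus monotonicity in $K$. The paper tacitly needs this positivity when it applies Lemma~\ref{fact:lam2} to $\tilde B$ (that lemma is stated for nonnegative parameters, and positivity is what guarantees $pS/(pS+arK)$ lies in $(0,1)$ and is genuinely the second eigenvalue in the $|\lambda_i|$-ordering), so your sign analysis closes a small gap the paper glosses over. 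Conversely, the paper's variational argument is more robust: it extracts the strict ordering from entrywise domination alone, without any closed-form spectrum, and would survive model perturbations where your explicit diagonalization is unavailable.
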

For any single node, note that $R$ is roughly the expected number of
out-of-block edges divided by the expected number of in-block
edges. To see this, multiply the numerator and denominator of $Kr/p$
by the block population $N/K$.  As such, it is approximately the odds that a random
walker will change blocks.  When $R$ is large,  the Markov chain mixes quickly
and $\lambda_2(P_{\A})$ is small to reflect that.  

AC-RDS is most useful in social networks with tight communities, where the walk
is slow to mix; this corresponds to a larger value of $p$ and a smaller value
of $R$.  In this setting, $c$ in \eqref{eq:smallworld} is large, thus
making $\lambda_2(P_{\tilde \W})$ much smaller than $\lambda_2(P_{\A})$.  In
particular, if $p$ is close to one, then $c \approx 1 + R^{-1}$ becomes very
large for small values of $R$.  Notice that the second part of
Proposition~\ref{lem:spgap_popgraph} makes no assumption on $N$, the number
of nodes in the network.


The next proposition shows that anti-cluster sampling continues to perform
well, even when the community structure is exceedingly strong and standard
approaches will fail to mix well.  Here, the reduction of $\lambda_2$ from
anti-cluster sampling is dramatic.

\begin{prop} \label{prop:epsilon}
  Under the SBM with $2$ blocks of equal sizes, let $\epsilon>0$ and suppose
  that $B_{kk}=(1-\epsilon)$ and $B_{kl}=\epsilon$ for $k \neq l$. Then,
  \[\lim_{\epsilon \searrow 0} \lambda_2(P_\A) = 1\]
  and
  \[\lim_{\epsilon \searrow 0} \lambda_2(P_{\tilde \W}) = 1/3.\]
\end{prop}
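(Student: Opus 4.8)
The plan is to treat this as the exact, $K=2$ instance of the block structure in Lemma~\ref{lem:ac_as_sbm}, rather than as a consequence of the asymptotic formula \eqref{eq:smallworld}; the latter is derived in the regime where $K$ grows and $r$ shrinks, so it does not apply to two fixed blocks and the value $1/3$ must instead come from an exact spectral computation. Matching parameters, the hypotheses $B_{kk}=1-\epsilon$ and $B_{kl}=\epsilon$ correspond to $B=pI+rJ_{2\times 2}$ with $r=\epsilon$ and $p=1-2\epsilon$. With this identification the first limit is immediate from Proposition~\ref{lem:spgap_popgraph}: here $R=Kr/p=2\epsilon/(1-2\epsilon)$, so $\lambda_2(P_\A)=1/(R+1)=1-2\epsilon$, which tends to $1$ as $\epsilon\searrow 0$. (Equivalently, $\A=ZBZ^T$ with $B$ having eigenvalues $1$ and $1-2\epsilon$, so the block reduction below applied to $\A$ gives the same answer directly.)

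For the second limit I would first extract the $2\times 2$ block-level weight matrix. By Lemma~\ref{lem:ac_as_sbm}, $\tilde \W=ZMZ^T$ with $M=(B\Theta\bar B+\bar B\Theta B)\cdot B$, and for two equal blocks $\Theta=(N/2)I$. Since $\Theta$ is scalar and $B,\bar B$ are symmetric $2\times 2$ circulant matrices, they commute, so $M=N(B\bar B)\cdot B$. Carrying out this $2\times2$ Hadamard product yields a matrix of the form $\begin{pmatrix}\alpha&\beta\\\beta&\alpha\end{pmatrix}$ with $\alpha=2N\epsilon(1-\epsilon)^2$ and $\beta=N\epsilon\bigl[(1-\epsilon)^2+\epsilon^2\bigr]$, whose eigenvalues are $\alpha\pm\beta$.

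The key step is the spectral reduction from $P_{\tilde\W}$ to this $2\times2$ quotient. Because the blocks are equal and $M$ has constant row sum $s=\alpha+\beta$, every node has the same weighted degree $T_{uu}=(N/2)s$, so $T=tI$ with $t=(N/2)s$ and $P_{\tilde\W}=(1/t)ZMZ^T$ is in fact symmetric. The nonzero eigenvalues of $ZMZ^T$ coincide with those of $Z^TZ\,M=(N/2)M$, while the remaining $N-2$ eigenvalues vanish. Dividing by $t$, the two nonzero eigenvalues of $P_{\tilde\W}$ are $1$ (from $\alpha+\beta$) and $(\alpha-\beta)/(\alpha+\beta)=(1-2\epsilon)/\bigl[3(1-\epsilon)^2+\epsilon^2\bigr]$ (from $\alpha-\beta$), with all others equal to $0$. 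For small $\epsilon$ this second value is positive and exceeds the zeros, so it is indeed $\lambda_2(P_{\tilde\W})$, and letting $\epsilon\searrow 0$ sends numerator to $1$ and denominator to $3$, giving the claimed limit $1/3$.

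The main obstacle is not the arithmetic but verifying the spectral reduction cleanly: that constant row sums make $P_{\tilde\W}$ symmetric, that its nonzero spectrum descends exactly to the $2\times2$ matrix $\Theta M$, and that the remaining eigenvalues are precisely $0$ so that the $2\times2$ computation genuinely produces $\lambda_2$ and is not masking a larger hidden eigenvalue. Once these facts are in place, the two $2\times2$ Hadamard-and-eigenvalue computations for $M$ and for $B$ are routine.
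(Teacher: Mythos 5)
Your proof is correct and follows essentially the same route as the paper: reduce $\tilde \W$ to the $2\times 2$ block matrix $(B\Theta\bar B + \bar B\Theta B)\cdot B$ via Lemma~\ref{lem:ac_as_sbm}, compute its entries (diagonal $\propto 2(1-\epsilon)^2$, off-diagonal $\propto (1-\epsilon)^2+\epsilon^2$), and apply the block-constant eigenvalue formula $(p-r)/(p+r)$ for equal blocks --- which the paper invokes as Lemma~\ref{lem:lam2} and you rederive through the quotient-spectrum argument, including the useful extra check that the $N-2$ zero eigenvalues do not interfere. Incidentally, your exact expression $\lambda_2(P_{\tilde \W}) = (1-2\epsilon)/\bigl[3(1-\epsilon)^2+\epsilon^2\bigr]$ is the right one: the paper's displayed numerator $(1-\epsilon)^2+\epsilon^2$ contains a sign slip (it should be $(1-\epsilon)^2-\epsilon^2 = 1-2\epsilon$), though the limit $1/3$ is unaffected either way.
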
  
For any Markov transition matrix $P$, $\lambda_2(P) \le 1$.   The graph is
disconnected if and only if $\lambda_2=1$; this is the most extreme form of a
bottleneck.  In the above proposition, if $\epsilon = 0$, then the sampled
graph will contain two disconnected cliques, one for each block.  Under this
regime, both $P_A$ and $P_{\tilde W}$ will have second eigenvalues equal to
one. However, if $\epsilon$ converges to zero from above, then 
Proposition~\ref{prop:epsilon} shows that  $\lambda_2(P_{\tilde \W})$
approaches 1/3, while $\lambda_2(P_\A)$ approaches 1.  

Propositions~\ref{lem:spgap_popgraph} and~\ref{prop:epsilon}  assume
balanced block sizes (i.e. an equal number of nodes). To study unbalanced
cases, the necessary algebra quickly becomes uninterpretable.  We explore
the role of unbalanced block sizes with numerical experiments in
Section~\ref{sec:sim}.

\subsection{Sample Graph Results}\label{subsec:SampleGraph} 
Theorem~\ref{thm:concentration_ac_laplacian}  gives conditions which ensure
that the population eigenvalues, $\lambda_\ell(P_{\tilde \W})$, are close to
the sample eigenvalues, $\lambda_\ell(P_{\tilde W})$. As such, the
population results in the previous section appropriately represent the behavior
of Markov sampling (both AC-RDS and RDS) on a network sampled from the
Stochastic Blockmodel. \citet{chung2011spectra} prove a  similar result for
$\left|\lambda_\ell(P_{A}) - \lambda_\ell(P_{\A}) \right|$. 

\begin{theorem}[Concentration of the anti-cluster random walk] 
  \label{thm:concentration_ac_laplacian} Let $\G=(\V,\E)$ be a random graph
  with independent edges and $\A = \rmE A$ be the expected adjacency matrix.
  Let 
  $\D_i:=\sum_k\A_{ik}$,
  $F_{ij}:=\sum_k\A_{ik}(1-\A_{kj})$, and 
  $G_{ij}:=\sum_k(1-\A_{ik})\A_{kj}$. 
  Define $F_{\min}=\min_{i,j=1,\cdots,|\V|} F_{ij}$.
  If 
  $F_{\min} = \omega\left(\ln N\right)$ and
  there exits a constant $c_1$ such that $F_{ij} + G_{ij} \geq c_1 \D_i$
  for all $i,j \in \{1,\cdots,|\V| \}$,
  then with probability at
  least $1-\e$,
  \[
    \left\| T^{-\half} \tilde W T^{-\half} - \T^{-\half}
    \tilde \W \T^{-\half} \right\|^2
    \leq \frac{c_2 \ln\frac {10N}{\e}}{F_{\min}},
  \] 
  where $c_2$ is a constant, 
  $\left\| \cdot \right\|$ denotes the operator
  norm, $T$ is a diagonal matrix with the row sums of $\tilde W$ on its
  diagonal, and $\T$ is defined in the same way with respect to $\tilde \W$.
	Moreover, with probability at least $1-\e$,
  \[
   \left | \lambda_\ell(P_{\tilde W}) - \lambda_\ell(P_{\tilde \W}) \right|^2 
    =O \left( 
    \frac{\ln\frac {10N}{\e}}{F_{\min}}
    \right),
    \ \mbox{ for all } \ell \in 2, \dots,N.
  \]
\end{theorem}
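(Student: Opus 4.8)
The plan is to reduce the whole statement to the single operator-norm bound and then read off the eigenvalue bound. Since $\tilde W$ and $\tilde \W$ are symmetric, the reversible transition matrices $P_{\tilde W} = T^{-1}\tilde W$ and $P_{\tilde \W} = \T^{-1}\tilde \W$ are similar (by conjugation with $T^{\half}$ and $\T^{\half}$, respectively) to the symmetric matrices $S := T^{-\half}\tilde W T^{-\half}$ and $\mathcal S := \T^{-\half}\tilde \W \T^{-\half}$, hence share their spectra. Weyl's inequality then gives $|\lambda_\ell(P_{\tilde W}) - \lambda_\ell(P_{\tilde \W})| \le \|S - \mathcal S\|$ for every $\ell$, so squaring turns the first display into the second. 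Everything therefore rests on proving $\|S - \mathcal S\|^2 = O\!\left(\ln(N/\e)/F_{\min}\right)$ with high probability.

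To bound $\|S - \mathcal S\|$ I would split it with the triangle inequality into a ``change of weights'' part and a ``change of normalizer'' part, using the identity
\[
 S - \mathcal S = \T^{-\half}(\tilde W - \tilde \W)\T^{-\half} + \big(T^{-\half}\tilde W T^{-\half} - \T^{-\half}\tilde W \T^{-\half}\big).
\]
The second term is controlled once I show the diagonal normalizers concentrate, i.e. $T_{ii}/\T_{ii} = 1 + o(1)$ uniformly. Each $T_{ii} = \sum_j \tilde W_{ij}$ is a scalar low-degree polynomial in the independent Bernoulli entries of $A$, so a Bernstein/Chernoff argument, together with the hypothesis $F_{ij}+G_{ij}\ge c_1\D_i$ (which forces $\T_{ii} = \sum_j (F_{ij}+G_{ij})\A_{ij} \ge c_1\D_i^2 \gtrsim F_{\min}$), gives the required relative concentration and lets this normalizer-swap term be absorbed into the target rate.

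The first term is the heart of the matter. The difficulty is that $\tilde W = (A\bA + \bA A)\cdot A$ is a degree-three polynomial in the independent entries of $A$, so it is not a sum of independent matrices and matrix Bernstein does not apply verbatim. My plan is to substitute $A = \A + \Delta$, where $\Delta := A - \A$ is the centered noise, and expand $\tilde W$ into terms organized by their degree in $\Delta$. The degree-zero term reproduces $\tilde \W$ up to an $O(1)$-per-entry bias (because $\rmE\tilde W\neq\tilde \W$, owing to self-interactions $A_{ij}^2 = A_{ij}$ and coincidences in the common-neighbor counts); its normalized operator norm is of lower order than $\sqrt{\ln N / F_{\min}}$ and is harmless. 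The degree-one-in-$\Delta$ terms are the dominant fluctuation: each is a deterministic matrix built from $\A$ multiplied by a matrix linear in the independent entries of $\Delta$, so these are exactly the objects to which a Chung--Radcliffe / matrix Bernstein bound applies \citep{chung2011spectra}, with the variance proxy controlled by row sums of order $F_{\min}$. This is where $F_{\min} = \omega(\ln N)$ enters, producing the $\sqrt{\ln(N/\e)/F_{\min}}$ scale after dividing by the $\T^{-\half}$ factors.

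The remaining work is bookkeeping on the degree-two and degree-three terms in $\Delta$, which must be shown to be genuinely lower order. I would handle these by decoupling the repeated noise factors (treating them as conditionally independent copies) and applying crude operator-norm estimates, using $F_{\min} = \omega(\ln N)$ to certify that each such term is $o(\sqrt{F_{\min}})$ and hence negligible after normalization. I expect this step, controlling the quadratic and cubic noise contributions uniformly and confirming that their normalized operator norms fall strictly below the target rate, to be the main obstacle, since it is precisely the nonlinearity of the anti-cluster weight matrix that distinguishes this result from the linear random-walk concentration of \citet{chung2011spectra}. Assembling the three pieces (lower-order bias, dominant linear fluctuation, higher-order remainder) and the normalizer concentration via the triangle inequality yields the operator-norm bound, and Weyl's inequality then closes out the eigenvalue claim.
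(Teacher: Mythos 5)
Your reduction is exactly the paper's: symmetry of $\tilde W$ and $\tilde\W$ makes $P_{\tilde W}$ and $P_{\tilde\W}$ similar to $T^{-\half}\tilde W T^{-\half}$ and $\T^{-\half}\tilde\W\T^{-\half}$ (Lemma~\ref{lem:symrw}), Weyl's inequality converts the operator-norm bound into the eigenvalue bound, and your triangle-inequality split into a ``change of weights'' term $\T^{-\half}(\tilde W-\tilde\W)\T^{-\half}$ and a ``change of normalizer'' term is the identical first move. Your treatment of the normalizer (scalar Bernstein for $|T_{ii}-\T_{ii}|$, using $\T_{ii}=\sum_j\A_{ij}(F_{ij}+G_{ij})\geq c_1\D_i^2$) and your use of the Chung--Radcliffe matrix Bernstein inequality for the part of the fluctuation that is linear in $A-\A$ both match the paper's Parts~2 and~3.

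The genuine gap is the piece you yourself flag as the main obstacle: the degree-two and degree-three terms in your expansion $A=\A+\Delta$, which you propose to handle by ``decoupling'' and ``crude operator-norm estimates.'' That program is not carried out, and it faces a concrete technical problem: the higher-degree terms involve Hadamard products of noise with noise, such as $(\Delta\bar\Delta)\cdot\A$ and $(\A\bAc)\cdot\Delta$-type cross terms with repeated indices, and there is no clean operator-norm bound for a Hadamard product of two non-PSD random matrices, so the ``crude estimates'' are liable to lose factors that destroy the $\sqrt{\ln N/F_{\min}}$ rate. The paper sidesteps the degree expansion entirely with the exact algebraic identity
\[
  \tilde W-\tilde\W \;=\; (\tA-\tAc)\cdot A \;+\; \tAc\cdot(A-\A),
  \qquad \tA:=A\bA+\bA A,\ \ \tAc:=\A\bAc+\bAc\A,
\]
so that all of the nonlinearity is confined to the first term, which is then controlled without any matrix concentration: using $\|M\|\leq\||M|\|$ and the fact that the operator norm of a nonnegative matrix is at most its maximum row sum (Lemma~\ref{lem:posspup}, after Lemma~\ref{lem:diag_norm_invariant} to commute the $\T^{-\half}$ factors), the bound reduces to row sums $\sum_j A_{ij}\bigl|\sum_k A_{ik}\bA_{kj}-\A_{ik}\bAc_{kj}\bigr|$. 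For fixed $(i,j)$ the summands $\{A_{ik}\bA_{kj}\}_k$ involve pairwise distinct edges $(i,k)$ and $(k,j)$, so they are already independent across $k$ --- no decoupling is needed --- and two successive scalar Bernstein bounds (one over $k$, one replacing $\sum_j A_{ij}\Delta_{F_{ij}}$ by $\sum_j\A_{ij}\Delta_{F_{ij}}$) give $r_i\lesssim\sqrt{\ln(N/\de)/F_{\min}}$ directly. One smaller inaccuracy in your sketch: the degree-zero term of your expansion is \emph{exactly} $\tilde\W$; the bias $\rmE\tilde W\neq\tilde\W$ lives in the expectations of the quadratic coincidence terms (e.g.\ $\rmE[\Delta_{ij}\Delta_{ik}]$ at $j=k$), not in the degree-zero term, so your accounting of where that $O(1)$-per-entry discrepancy sits would need correction even within your own scheme.
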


\begin{remark}
The theorem uses standard asymptotic notation, which we recall here for
convenience. We write $f(n) = O \left(g(n) \right)$ to indicate that
$\left| f \right|$ is bounded above by $g$ asymptotically, that is
$$
 \limsup _{n\to \infty }{\frac {\left|f(n)\right|}{g(n)}}<\infty.
 $$
We write $f(n) = \omega \left( g(n) \right)$ to indicate that $f$ dominates $g$ asymptotically, that is
 $$
  \lim _{n\to \infty }\left|{\frac {f(n)}{g(n)}}\right|=\infty. 
 $$
\end{remark}

\begin{remark}
$F_{ij}$ gives the number of friends of node $i$ that are not in the friend
list of node $j$. So $F_{\min} = \omega\left(\ln N\right)$ ensures that the
number of individuals that a node can refer under AC-RDS grows with a rate
faster than $\ln N$. Roughly speaking, it is similar to the sparsity condition
required for  concentration results of random graphs with independent edges.
Since $\A$ is a symmetric matrix, $F_{ij}=G_{ji}$ and, consequently,
\[\min_{i,j=1,\cdots,|\V|} F_{ij} = \min_{i,j=1,\cdots,|\V|} G_{ij}.\]

The condition on $c_1$ ensures that the ratio $\frac{\D_i}{F_{ij}+G_{ij}}$
stays bounded. These sampling results are sufficiently general to apply to all
of the models studied in the previous section. 
\end{remark}

Theorem~\ref{thm:var_acrds} presents the asymptotic behavior of AC-RDS in reducing
the correlation among samples collected from a random graph under a Stochastic
Blockmodel. The theorem is an aggregation of all the previous results in the paper.  
The result is asymptotic in the size of the population, not in the size of the sample.  

\begin{theorem}[Dependency reduction property of AC-RDS] \label{thm:var_acrds}
  Let $\G$ be a random graph with $N$ nodes sampled from a Stochastic Blockmodel with 
  $B = pI_{K\times K} + r J_{K \times K}$, for $0<r<p+r<c<1$. Further assume an equal number of nodes in each of the $K$ blocks.  
  Let $(X_i)_{i=1}^n$ and $(X^{ac}_i)_{i=1}^n$ be two Markov chains with transition matrix $P_A$ and $P_{\tilde W}$, respectively. 

  The parameters $p,r$ and $K$ can change with $N$.  If $\ln(N)/(pK + rN)
  \rightarrow 0$, then asymptotically almost surely, for all $i$,
  $i+t \in \{1, \dots, n\}$, and $t\neq 0$, 
  \[
    \mathrm{Cov}( y(X^{ac}_i), y(X^{ac}_{i+t}) ) < \mathrm{Cov}( y(X_i), y(X_{i+t}) ), 
  \] 
  where  $y:\V\rightarrow \R$ is any bounded node feature.
\end{theorem}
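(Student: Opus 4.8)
The plan is to treat the population transition matrices as the anchor, establish the strict covariance inequality there, and then transfer it to the random graph, handling the graph randomness as a small operator-norm perturbation. Two features of the statement simplify the bookkeeping: the claim is asymptotic in $N$ while the sample size $n$ (hence the lag set $\{1,\dots,n\}$) is held fixed, so it suffices to treat each fixed $t\ge 1$ and take a union over the finitely many lags; and, by reversibility and stationarity, the covariance depends only on $|t|$, so we may assume $t\ge 1$.

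First I would carry out the population computation. With equal block sizes, Lemma~\ref{lem:ac_as_sbm} gives $\A = ZBZ^T$ and (since $\Theta\propto I_K$) $\tilde\W \propto Z\,C\,Z^T$ with $C=(B\bar B+\bar B B)\cdot B$. Because $B=pI+rJ$ and $\bar B=(1-r)J-pI$ are both linear in $J$ they commute, so $B\bar B+\bar B B$, and hence its Hadamard product $C$ with $B$, is again of the form $\alpha I_K+\beta J_K$. Consequently both $P_\A$ and $P_{\tilde\W}$ equal a scalar multiple of $Z M Z^T$ with $M\in\{B,C\}$ of the form $\alpha I+\beta J$; each therefore has the uniform stationary distribution (constant row sums), spectrum $\{1,\ \lambda_2\ (\text{multiplicity }K-1),\ 0\ (\text{multiplicity }N-K)\}$, and---crucially---\emph{the same} nontrivial eigenspace $\{Zw:\ w\perp\mathbf 1_K\}$ of block-constant, mean-zero vectors (the eigenvectors of $\alpha I+\beta J$ do not depend on $\alpha,\beta$). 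Writing $\Pi$ for the shared $\pi$-orthogonal projection onto this eigenspace and $V:=\|\Pi y\|_\pi^2=\sum_{j=2}^{K}\inner{y}{f_j}_\pi^2$, Proposition~\ref{lem:cov_eq} collapses (the eigenvalue-$0$ terms vanish for $t\ge1$) to
\[
  \mathrm{Cov}^{\mathrm{pop}}_{A}(t)=\lambda_2(P_\A)^{\,t}\,V,\qquad
  \mathrm{Cov}^{\mathrm{pop}}_{\mathrm{ac}}(t)=\lambda_2(P_{\tilde\W})^{\,t}\,V .
\]
Proposition~\ref{lem:spgap_popgraph} supplies $0<\lambda_2(P_{\tilde\W})+\epsilon<\lambda_2(P_\A)<1$ with $\epsilon$ independent of $N$, so for every $t\ge1$ the population gap is at least $\big(\lambda_2(P_\A)^{t}-\lambda_2(P_{\tilde\W})^{t}\big)V\ge \epsilon\,\lambda_2(P_\A)^{\,t-1}V$, a fixed positive number whenever $V>0$.

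Second I would pass to the sampled graph. Rather than tracking eigenvectors, I would use the symmetric normalized matrices: writing $S:=T^{-1/2}\tilde W T^{-1/2}$ and $\mathcal S:=\T^{-1/2}\tilde\W\,\T^{-1/2}$ (and their RDS analogues), one has the exact identity $\mathrm{Cov}_{\mathrm{ac}}(t)=(\mathrm{tr}\,T)^{-1}\langle w,\,S^{\,t}w\rangle$ with $w:=T^{1/2}(y-\mu_{\pi,y}\mathbf 1)$, a fixed vector orthogonal to the leading eigenvector of $S$. The hypothesis $\ln(N)/(pK+rN)\to 0$ is exactly what verifies the sparsity condition $F_{\min}=\omega(\ln N)$ and the ratio condition of Theorem~\ref{thm:concentration_ac_laplacian}, and the analogous hypothesis of \citet{chung2011spectra}; these give, a.a.s., $\|S-\mathcal S\|\to0$ and the corresponding bound for the RDS walk, while the row sums $T$ concentrate on $\T$. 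For each fixed $t$, continuity of $X\mapsto\langle w,X^{t}w\rangle$ then yields $\mathrm{Cov}_{\mathrm{ac}}(t)=\mathrm{Cov}^{\mathrm{pop}}_{\mathrm{ac}}(t)+o(1)$ and $\mathrm{Cov}_{A}(t)=\mathrm{Cov}^{\mathrm{pop}}_{A}(t)+o(1)$. Because the population gap from the previous paragraph is a fixed positive constant for each $t$, it dominates these $o(1)$ errors once $N$ is large; a union over the finitely many $t\in\{1,\dots,n\}$ transfers the strict inequality to the sampled chains, a.a.s.

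I expect the main obstacle to be this second step---transferring the covariance, not merely the eigenvalues, from the population to the sample. The delicate point is that on the random graph the two chains have different eigenvectors and different (only near-uniform) stationary laws, so the clean identity $\mathrm{Cov}(t)=\lambda_2^{\,t}V$ holds only in the limit; recasting each covariance as a quadratic form in the operator-norm-concentrated symmetric matrix $S$ is what makes the transfer tractable without a direct eigenvector (Davis--Kahan) perturbation argument, and it is here that the row-sum concentration of $T$ must be pinned down. A separate, more cosmetic point is the degenerate case $V=0$ (a feature whose block averages are all equal): there both population covariances vanish, and the asserted strict inequality should be read as applying to features carrying a nonzero community-aligned component $\Pi y$, which is precisely the regime AC-RDS is designed for.
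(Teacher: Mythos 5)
Your Part 1 coincides with the paper's own first step: both collapse the population covariances to $\lambda_2^t V$ with a shared coefficient $V=\sum_{j=2}^K \inner{y}{\bar f_j}_{\bar\pi}^2$ — the paper via Parseval's identity on the common span of block-constant eigenvectors, you via the observation that both transition matrices are normalizations of $Z(\alpha I_K+\beta J_K)Z^T$; these are the same argument in different clothes, and both then invoke Proposition~\ref{lem:spgap_popgraph} for the $N$-independent gap. Your Part 2, however, is genuinely different. The paper transfers the spectral decomposition itself: eigenvalue errors via Theorem~\ref{thm:concentration_ac_laplacian} with $\epsilon=1/N^2$ and Borel--Cantelli, and eigenvector/inner-product errors via the projections $Q,\bar Q$ and two applications of Davis--Kahan (one for the nontrivial eigenspace, one for $\pi^{\half}$ versus $\bar\pi^{\half}$). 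You bypass eigenvectors entirely by writing each lag-$t$ covariance as the exact quadratic form $(\mathrm{tr}\,T)^{-1}\langle w, S^t w\rangle$ with $S=T^{-\half}\tilde W T^{-\half}$ and $w=T^{\half}(y-\mu_{\pi,y}\1)$, then using $\|S^t-\mathcal{S}^t\|\le t\|S-\mathcal{S}\|$ (valid since both operator norms are at most one, with $\mathcal{S}=\T^{-\half}\tilde\W\T^{-\half}$) together with row-sum concentration; the identity itself is correct, since $w$ is orthogonal to the leading eigenvector $T^{\half}\1$ of $S$ and $(\mathrm{tr}\,T)^{-1}\|w\|^2=\mathrm{Var}_\pi(y)$ stays bounded. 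This buys you a shorter argument with no eigengap $\delta$ bookkeeping and no Davis--Kahan; what it costs is that the uniform relative concentration $\max_i|T_{ii}/\T_{ii}-1|\to 0$, which you correctly flag as needing to be pinned down, is not a stated conclusion of Theorem~\ref{thm:concentration_ac_laplacian} — it is, however, exactly the bound $|T_{ii}-\T_{ii}|\le 40\sum_j\A_{ij}\bigl(\sqrt{F_{ij}\ln\frac N\de}+\sqrt{G_{ij}\ln\frac N\de}\bigr)$ established in Part~3 of that theorem's proof, so citing or re-deriving it (it also gives $\mu_{\pi,y}\to\mu_{\bar\pi,y}$ for bounded $y$) closes the gap. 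Finally, your caveat about $V=0$ is well taken and applies equally to the paper: its Part~1 tacitly requires $V>0$ (for a block-constant or constant feature both covariances coincide asymptotically and the strict inequality fails), so the theorem's ``any bounded node feature'' should indeed be read as features carrying a nonzero community-aligned component.
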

\begin{remark}
The quantity $p\frac NK + rN$ is $\D_{\min}$, the minimum expected degree. The
condition $\ln(N)/(p\frac NK + rN) \rightarrow 0$ is needed to use
Theorem~\ref{thm:concentration_ac_laplacian}. Note that
$F_{ij}+G_{ij}>2cD_{\min}$
for all $i,j \in \{1,\cdots,|\V| \}$. 
\end{remark}

\section{Numerical Experiments} \label{sec:sim} 
We conduct three sets of numerical experiments to compare the performance
of AC-RDS with standard RDS. The first set investigates the impact of
unequal block sizes on the results of Propositions~\ref{lem:spgap_popgraph}
and~\ref{prop:epsilon}. The second set investigates the impact of community
structures and homophily using the Stochastic Blockmodel. In the third set,
we consider an empirical social network with unknown community structure.
Finally, we consider two relaxations of the Markov model to allow for more
realistic settings: sampling without replacement and preferential
recruitment. 

\subsection{The Role of Unequal Block Sizes} \label{subsec:sim_uneq}
In this experiment, we numerically calculate the eigenvalues of $P_{\A}$ and
$P_{\tilde \W}$ under varying SBM parameterizations with $K=2$. Given
$\theta$ and $B$ in the definition of the SBM, we can use results from
\citet{rohe2011spectral} (see the proof of Lemma 3.1) to compute the $K$
non-zero eigenvalues of the transition matrix. 

Consider the setting of Propositions \ref{lem:spgap_popgraph} and
\ref{prop:epsilon} with $K=2$ blocks. These results assume that the blocks
contain an equal number of nodes; here we explore the role of unequal block
sizes. As a measure of unbalance, we use the ratio of the largest block size to
the smallest block size. The results of the study are displayed in Figure
\ref{fig:unbalance}.  The horizontal axis in both panels gives this ratio of
unbalance; when this value is large (farther to the right), the blocks are
exceedingly unbalanced. The vertical axis controls the expected number of
in-block versus out-of-block edges with a parameter $\epsilon$. In the left
panel, $\epsilon$ plays the dual role as in Proposition \ref{prop:epsilon}. In
the right panel, $\epsilon$ does not control the in-block probabilities (i.e.
the diagonal of $B$); here, the diagonal of $B$ is set to $.8$ across all
experiments.

The spectral gap is given by $1-\lambda_2$, we are interested in exploring the ratio
\begin{equation} \label{eq:ratiosg}
  \mbox{ratio of spectral gaps} = \frac{1-\lambda_2( P_{\tilde \W} )}{1 - \lambda_2( P_{\A} )}.
\end{equation}
For a range of unbalances and values of $\epsilon$, Figure~\ref{fig:unbalance}
plots the ratio of spectral gaps. In all of the parameterizations, this value is greater
than one, indicating that anti-cluster sampling decreases $\lambda_2$ relative
to the random walk model of RDS, even with unequal blocks. For example, the contour at 5.3 represents
the class of models such that anti-cluster sampling increases the spectral gap
by over five-fold.
\begin{figure}[h!] 
  \centering
  \textbf{Anti-cluster sampling decreases the sampling dependence.}
  \includegraphics[width=6.5in]{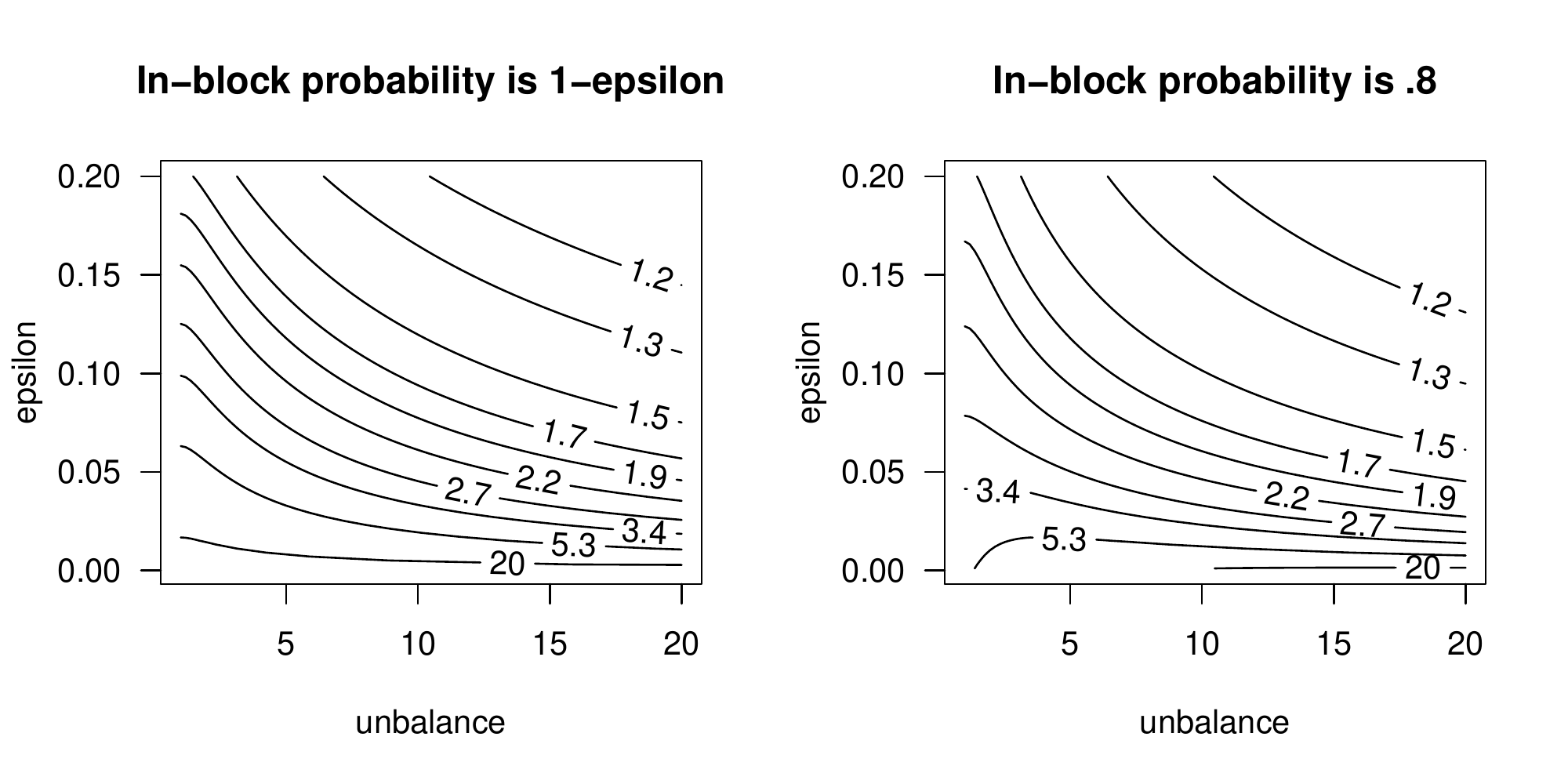} 
	\caption{The ratio of spectral gaps under different parameters of two-block SBM.
	Results for the numerical experiment described in Section
	\ref{subsec:sim_uneq}. This experiment examines the impact of unequal block
	sizes in the setting of Propositions \ref{lem:spgap_popgraph} and
	\ref{prop:epsilon}. As a measure of unbalance (the x-axis), we use the ratio
	of the largest block size to the smallest block size. For a range of SBM
	parameterizations (as described in the text),  these two panels display the
	ratio of spectral gaps as given in \eqref{eq:ratiosg}. All values are greater
	than one, indicating that anti-cluster sampling will increase the spectral
	gap, thus decreasing the dependence between adjacent samples. The benefits of
	anti-cluster sampling are especially prominent when $\epsilon$ is small; this
	corresponds to a model setting in which there are drastically fewer edges
	between blocks.} \label{fig:unbalance} \end{figure}

\subsection{Random Networks} \label{subsec:sim_rand_net}
Here we investigate the impact of community structures and homophily using the
Stochastic Blockmodel. We use a SBM with $2000$ nodes and $50$ communities of
equal size to generate the underlying social network.   To illustrate the
impact of community structures, we vary the ratio of the expected number of in-block edges divided by the expected number of 
out-of-block edges.  This ratio also controls the probability of generating an out-of-community
referral. For example, with the ratio equal to one, the probability of an
out-of-community referral is $1/2$. We examine values of this ratio between $1/2$ to $4$. 
To do this, we fix  the in-block probabilities to $0.9$ and change the out-of-block probabilities.

We simulate Markovian referral trees in which each participant refers exactly
three members with replacement. The three referrals are samples from the
neighbors of the participant. RDS uses uniform samples, whereas AC-RDS uses
non-uniform samples based on the weights described in \eqref{eq:ac-rdsweights}.
To show the effect of the communities, we choose the binary node feature to
be based on the community membership. The value is set to zero if the node
belongs to communities $1$ through $25$, otherwise, the value is set to one. 
For both designs, we use the RDS II estimator to estimate the community
proportion, where the inclusion probabilities are the stationary distribution
of the simple random walk.

The datasets are simulated in the following way. First we generate a
realization of an SBM and compute the stationary distribution of the simple
random walk. We simulate the referral procedure of RDS and AC-RDS starting from
a uniformly selected node and continuing until a certain number of samples are
collected, either $1\%$, $5\%$, or $10\%$ of the total nodes. We compute the
RDS II estimates of the feature from samples collected by both procedures.

This study is based on $5000$ simulated datasets. Figure \ref{fig:rdssbmtree}
displays box plots for the $5000$ RDS II estimates of the proportion in
different settings. Comparing RDS to AC-RDS, we see  that AC-RDS collects more
representative samples. Additionally, as we increase the degree of homophily,
the performance of AC-RDS suffers less. In (a) and (b), the chance that
participants make referrals outside of their community is relatively high,
$2/3$ and $1/2$, respectively. In these cases, both designs perform similarly.
However, in (c) and (d), where there is a smaller chance of cross-community
referral, there is a stronger referral bottleneck.  In this regime, AC-RDS
collects more representative samples by encouraging
participants to leave their communities more often. This is exactly the
intended outcome of AC-RDS. In fact, at the population level, this is the
result proven in Lemma \ref{lem:ac_increases_leaving_prob}.

\begin{figure}[H] 
  \centering\textbf{Estimates with samples from the SBM collected under AC-RDS and RDS.}
  \includegraphics[scale=0.6]{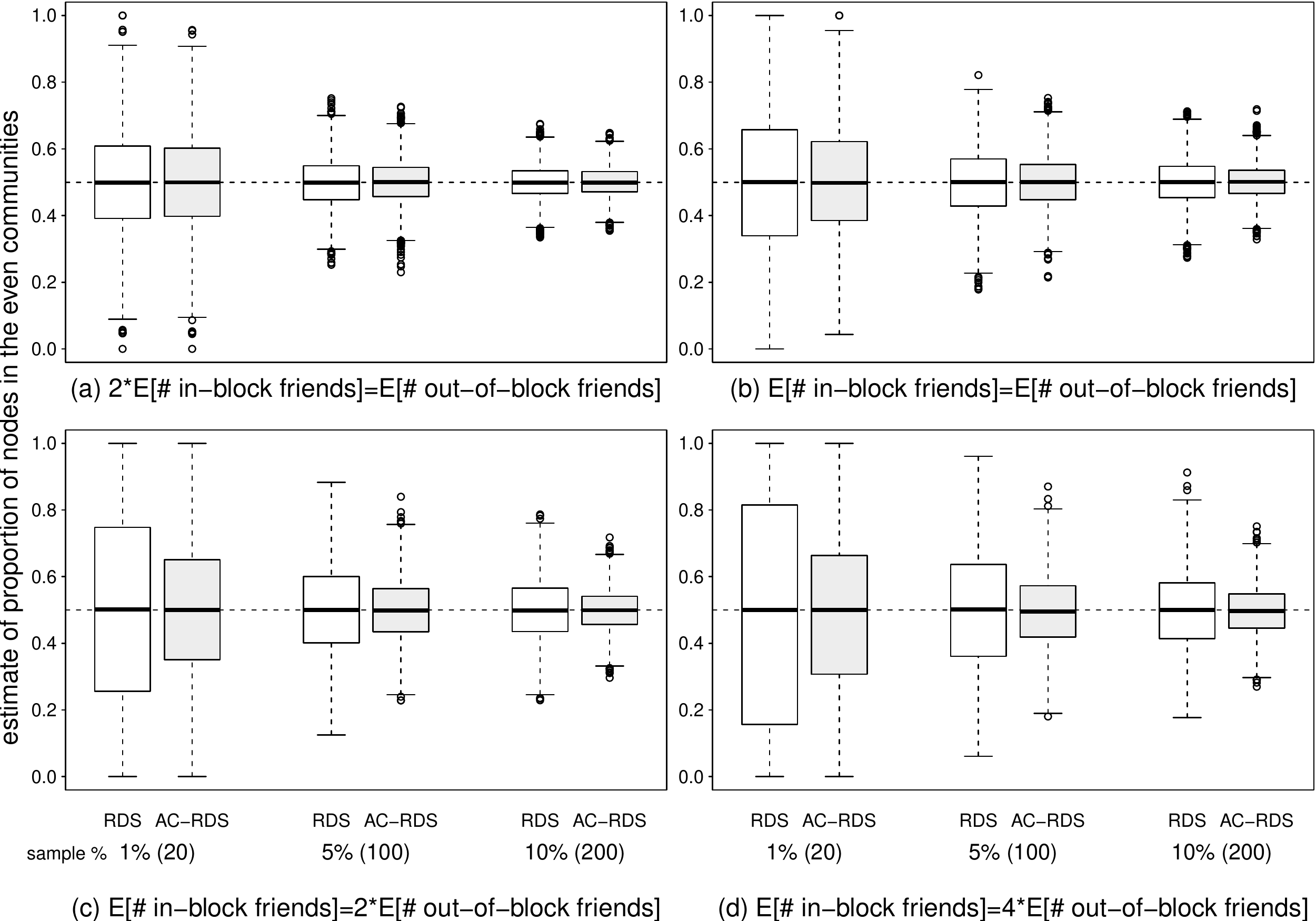} 
	\caption{Simulation results for the random network study described in Section
	\ref{subsec:sim_rand_net}.  The box plots display the estimated proportions
	across the $5000$ simulated datasets. The four panels correspond to four
	different strengths of referral bottlenecks.  The referral bottleneck is the
	strongest in the lower right panel.  Within a panel, there are three pairs of
	box plots, corresponding to three different sample sizes. In this setting,
	AC-RDS dramatically reduces the interquartile range of the estimator. 
}
\label{fig:rdssbmtree}
\end{figure}


\subsection{Add-Health Networks} \label{subsec:sim_addhealth_net} 
This set of simulations is based on friendship networks from the National
Longitudinal Survey of Adolescent(available at
http://www.cpc.unc.edu/addhealth), which we refer to as the Add-Health Study.
In the study, the students were asked to list up to five friends of each
gender, and whether they had any interaction within a certain period of time.
The reported friendships were then combined into an undirected network. That
is, an edge connecting two students means that either student, not necessarily
both, reported a friendship.  We use the four largest networks in the dataset.
Table \ref{table:addhealthinfo} contains summary information for the largest
connected component of these four networks. We use gender as the binary node
feature and focus on estimating the proportion of males in the population.
\begin{table}[h]
  \centering
  \begin{tabular}{l|l l l l l}
    School id & \# Nodes &\# Edges & CC    & covariance & covariance$^{ac}$\\ \hline
    School 36 & 2152     &  7986   & 0.178 & 0.0260 & 0.0056    \\ \hline
    School 40 & 1996     &  8522   & 0.144 & 0.0265 & 0.0030    \\ \hline
    School 41 & 2064     &  8646   & 0.139 & 0.0243 & 0.0042    \\ \hline
    School 50 & 2539     & 10455   & 0.141 & 0.0276 & 0.0069    \\
  \end{tabular}
  \caption{Network characteristics for the four largest friendship networks in
    the Add-Health study. This table provides characteristics for the largest
    connected component of each network. An edge between student nodes indicates
    that either student reported a friendship. The clustering coefficient (CC) is
    the ratio of the number of triangles and connected triplets. The last two
    columns represent the covariance of the samples collected under RDS and AC-RDS,
  respectively.}
\label{table:addhealthinfo}
\end{table}

We simulate the referral procedure of RDS and AC-RDS starting from a uniformly
selected node and continuing until a certain number of samples are collected,
either $1\%$, $5\%$, or $10\%$ of the total nodes. In these simulations, each
participant refers exactly three members with replacement.  We compute the RDS
II estimate of the male proportion using the node degree for the weights.  
Similar to the simulations in \cite{goel2010assessing} and \cite{baraff2016estimating}, 
these simulations are performed with replacement.  

This study is based on $10,000$ simulated samples. Figure
\ref{fig:rdsaddhealthtree} and \ref{fig:rdsaddhealthtreeTwoUnifW} display box
plots for the $10,000$ RDS II estimates of the male proportion under different
settings. Notice that in Figure \ref{fig:rdsaddhealthtree} the interquartile
range of AC-RDS with a 5\% sample is often comparable to the interquartile
range of a standard RDS with a sample that is twice as large. In Figure \ref{fig:rdsaddhealthtreeTwoUnifW},
only type B request is considered in the implementation of AC-RDS.

\begin{figure}[H] 
	\textbf{Estimates with samples from the Add-Health friendship networks
		collected under AC-RDS and  RDS.}
  \centering
  \includegraphics[scale=0.65]{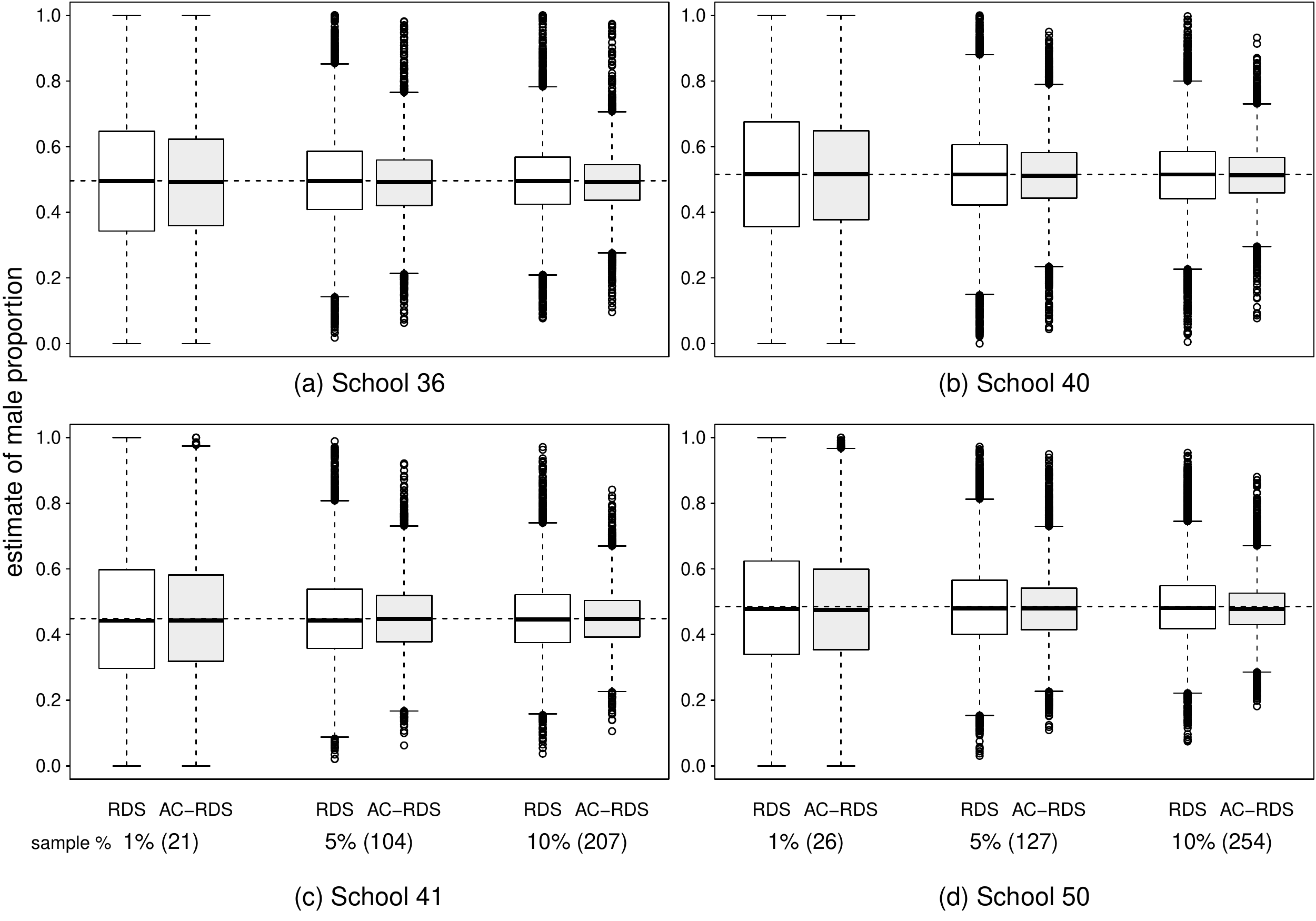} 
	\caption{Simulation results based on the Add-Health study described in
	Section \ref{subsec:sim_addhealth_net}.  The box plots display the estimated
	proportion of men across the $10,000$ simulated samples.  Each panel
	corresponds to a different network from the study.  Within a panel, there are
	three pairs of box plots, corresponding to three different sample sizes.  The
	results compare the RDS II estimator based upon (1) a standard RDS sample and
	(2) an AC-RDS sample. Notice that the interquartile range of AC-RDS with a
	5\% sample is often comparable to the interquartile range of a standard RDS
	with a sample that is twice as large.}
  \label{fig:rdsaddhealthtree}
\end{figure}

\begin{figure}[H] 
	\textbf{Estimates with samples from the Add-Health friendship networks
	collected under AC-RDS type \textbf{B} and  RDS.}
  \centering
  \includegraphics[scale=0.65]{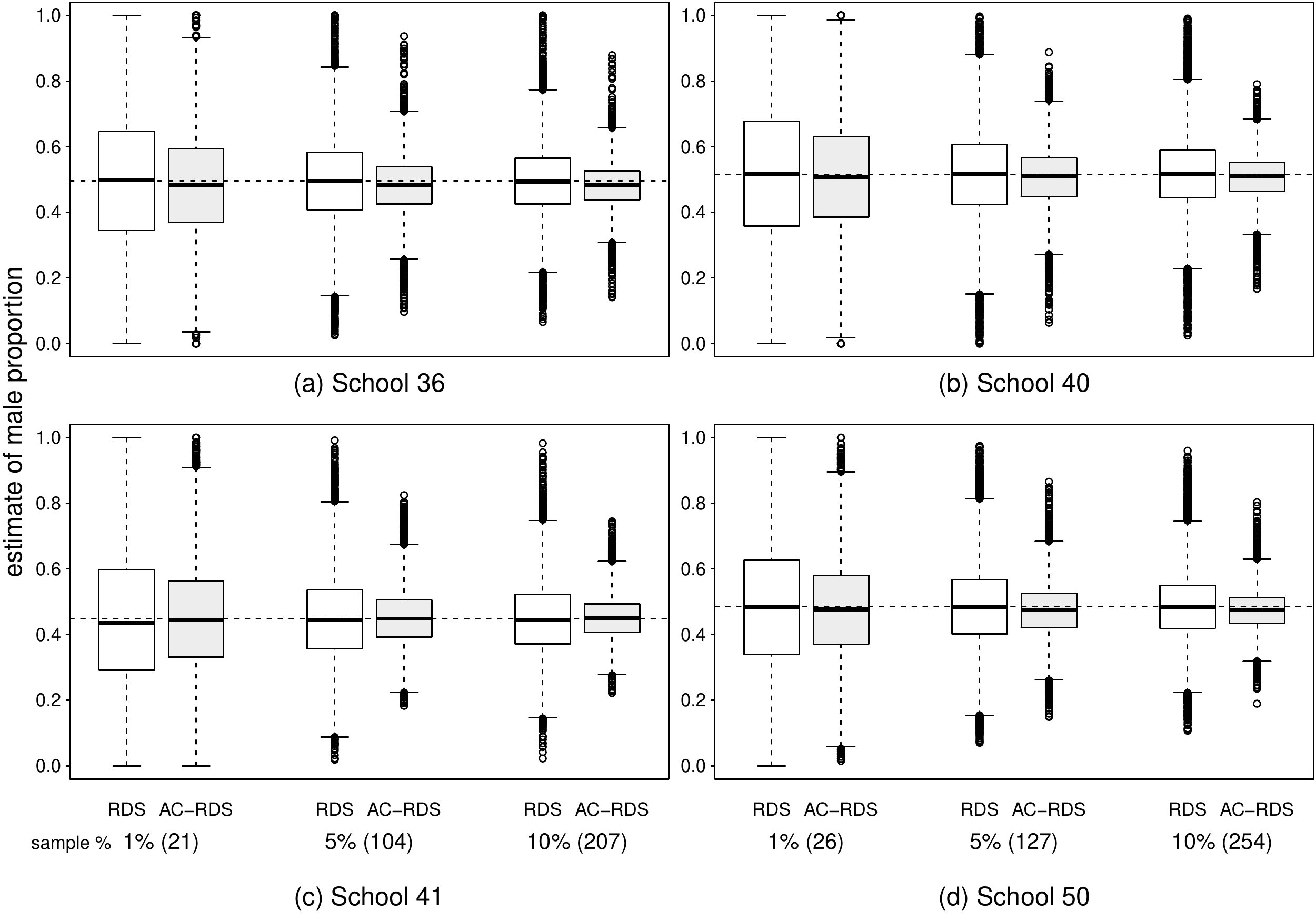} 
	\caption{Simulation results based on the Add-Health study described in
	Section \ref{subsec:sim_addhealth_net}.  The box plots display the estimated
	proportion of men across the $10,000$ simulated samples.  Each panel
	corresponds to a different network from the study.  Within a panel, there are
	three pairs of box plots, corresponding to three different sample sizes.  The
	results compare the RDS II estimator based upon (1) a standard RDS sample and
	(2) an AC-RDS type \textbf{B} sample.}
  \label{fig:rdsaddhealthtreeTwoUnifW}
\end{figure} 

\newpage 

\subsection{Without Replacement Sampling} \label{subsec:sim_wo_replacement}
We consider the impact on AC-RDS when simulating the sample with and without replacement 
from the underlying network.  In the Random Networks simulation
model, there is only a small difference between the two sampling
settings.  This is likely because the network is dense. In smaller networks,
one expects there to be a greater difference between with and without replacement 
sampling.  In fact, in the Add-Health simulation model, under a
without replacement setting and a referral rate of one or two, the trees die
quickly and often do not collect enough samples to attain $1\%$ of the total
nodes. Figure~\ref{fig:sim_unif_seed_WO} displays plots for the $10,000$
RDS II estimates of the male proportion under the without replacement setting.
In the simulation study of Add-Health networks type B implementation of AC-RDS
collects more representative samples compare to the two types combined implementation.

\begin{figure}[H]
	\centering
	\textbf{Estimates with samples from the Add-Health friendship networks
	collected under AC-RDS type B and RDS \textit{without replacement}.}
	\includegraphics[width=1\textwidth]{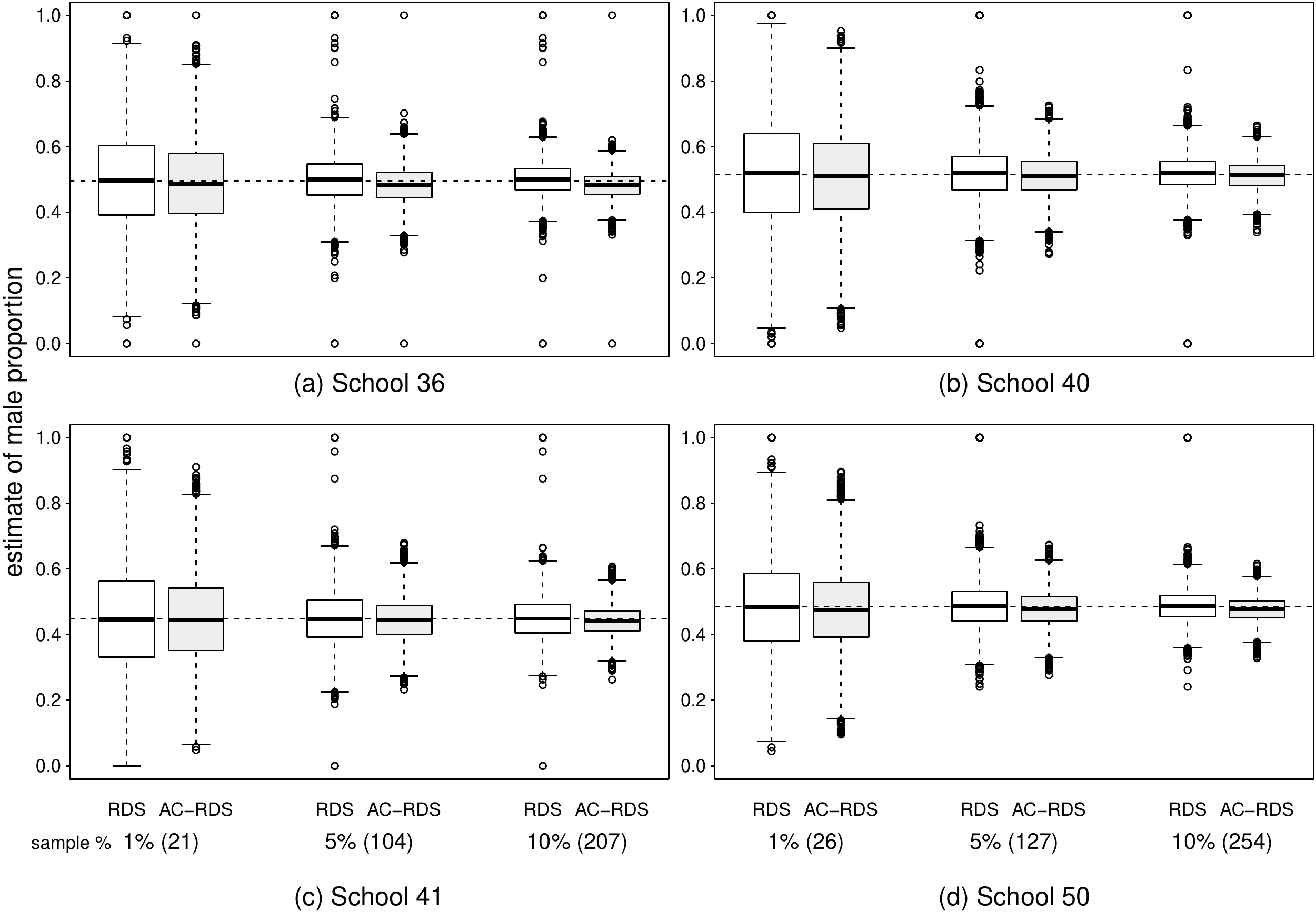}
	\caption{Without replacement simulation results based on the Add-Health
	study described in Section \ref{subsec:sim_addhealth_net}.  The box plots
	display the estimated proportion of men across the $10,000$ simulated
	samples.  Each panel corresponds to a different network from the study.
	Within a panel, there are three pairs of box plots, corresponding to three
	different sample sizes.  The results compare the RDS II estimator based upon
	(1) a standard RDS sample and (2) an AC-RDS type B sample. Notice that the
	interquartile range of AC-RDS with a 5\% sample is often comparable to the
	interquartile range of a standard RDS with a sample that is twice as large. } 
	\label{fig:sim_unif_seed_WO}
\end{figure}

\subsection{Non-uniform seeds} \label{subsec:sim_nonuniform_seed}
We consider the impact of non-uniform (biased) seed nodes on AC-RDS 
and standard RDS when simulating the sample with and without replacement from the underlying network.
Figure~\ref{fig:sim_nonunif_seed} displays plots for the $10,000$ RDS II
estimates of the male proportion under the biased seed nodes.

\begin{figure}[H]
	\centering
	\textbf{\small{Estimates with samples from the Add-Health friendship networks
	collected under AC-RDS type \textbf{B} and  RDS with non-uniform seed nodes.}}
	\begin{tabular}{m{0.01\textwidth} m{1\textwidth}}
		\textbf{A} & \includegraphics[width=0.85\textwidth]{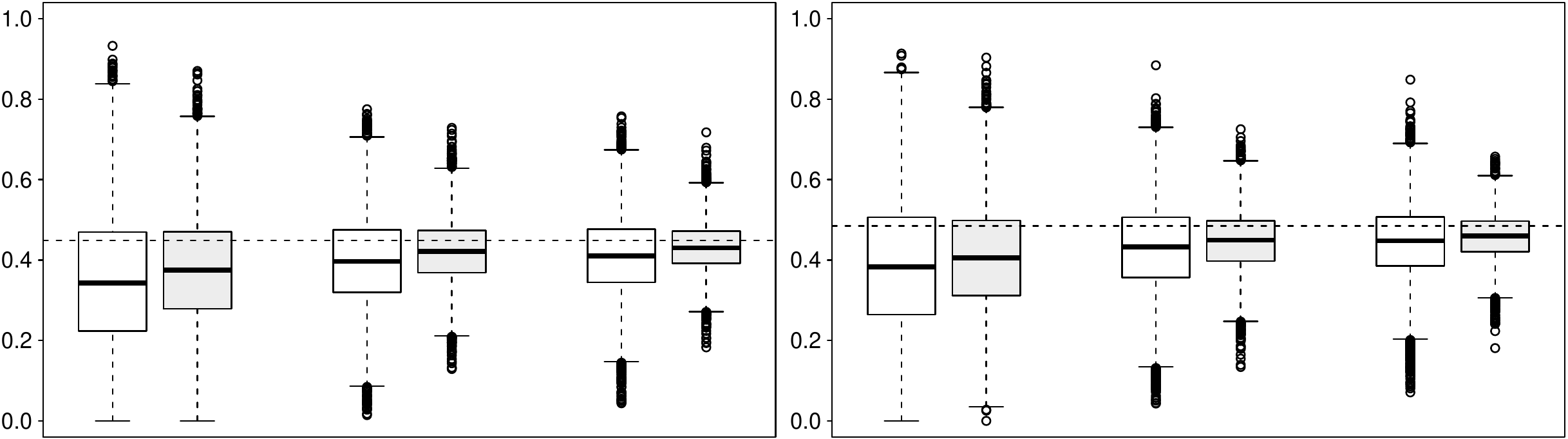}  \\
		\textbf{B} & \includegraphics[width=0.85\textwidth]{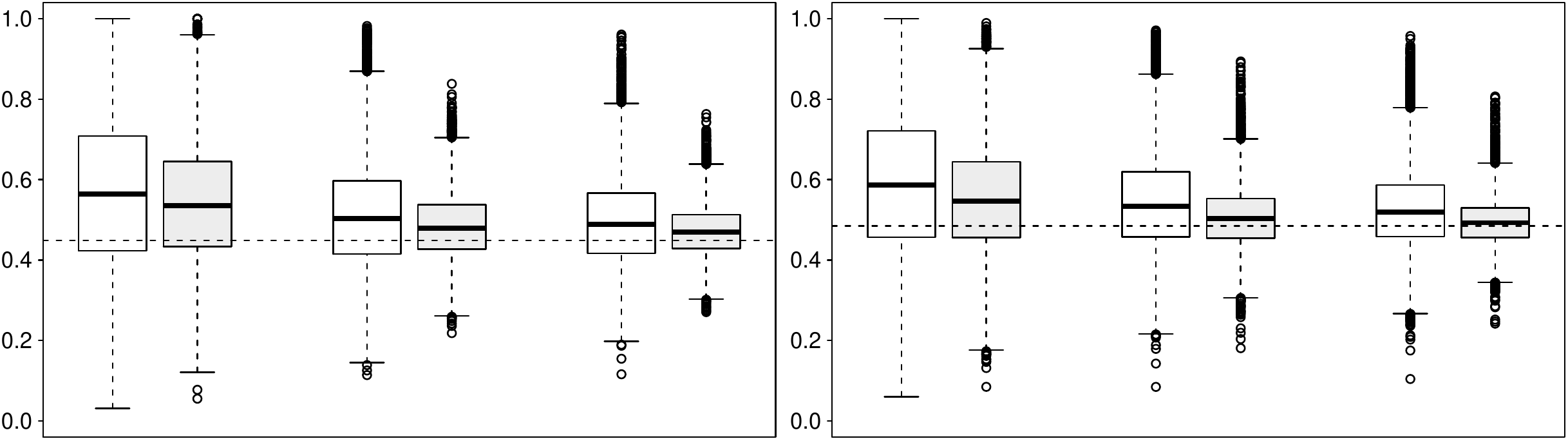}  \\
		\textbf{C} & \includegraphics[width=0.85\textwidth]{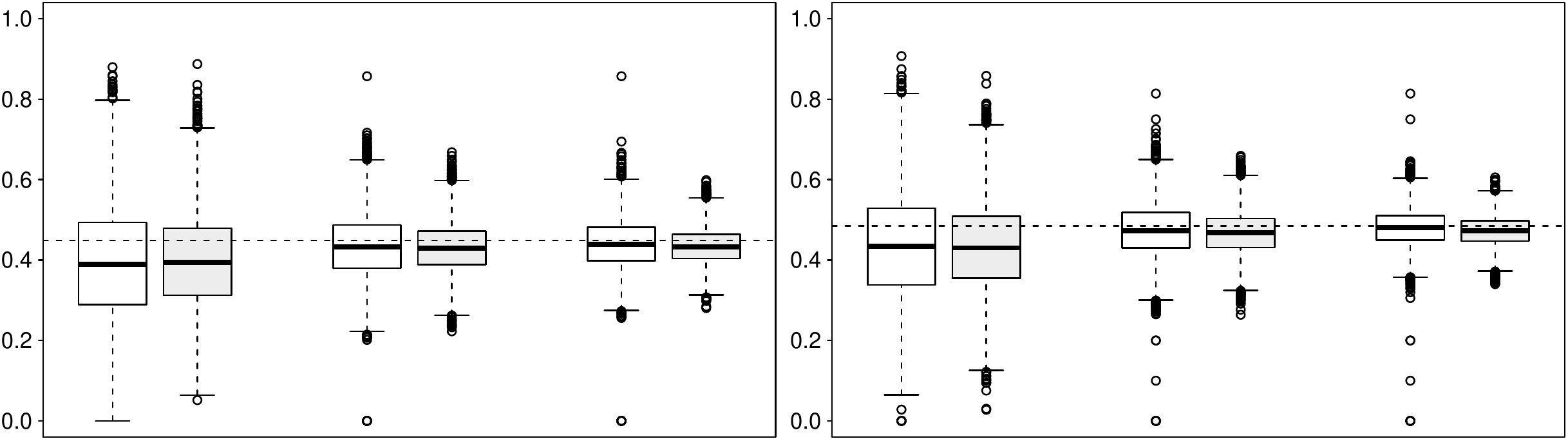} \\
		\textbf{D} & \includegraphics[width=0.85\textwidth]{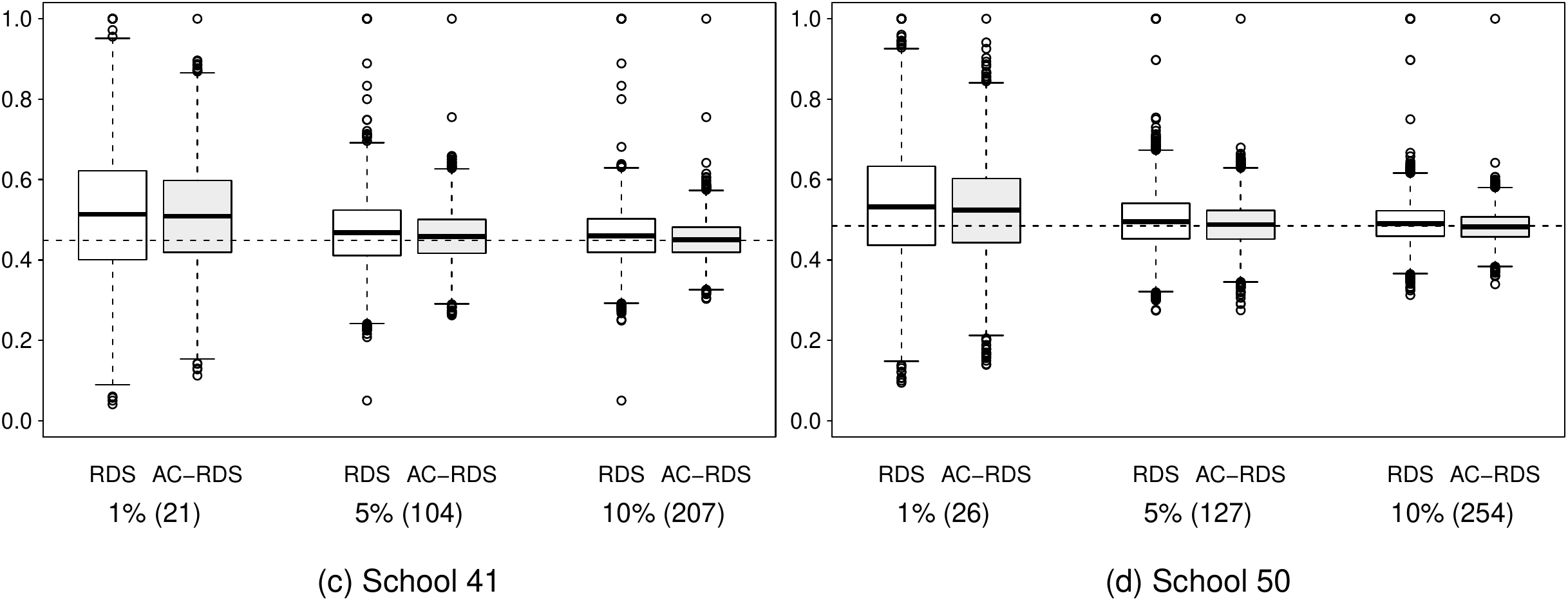}
	\end{tabular}
	\caption{\small{Simulation results based on the Add-Health study described in
	Figure~\ref{fig:rdsaddhealthtreeTwoUnifW}. Each column corresponds to a
	different network from the study. In Row A and B the samples are collected
	with replacement and in Row C and D without replacement. In Row A and C the
	seed node is chosen from node with female attributes and in Row B and D from
	nodes with male attributes.}} 
	\label{fig:sim_nonunif_seed}
\end{figure}


\section{Issues remaining} \label{sec:implementation}
The aim of this research is to highlight how referral requests have the
potential to alter referral patterns in a way that makes the resulting sample
more representative of the target population.  The Markov models for RDS and
AC-RDS capture important features of reality, but both are necessarily an
approximation to the practicalities of gathering a sample from a marginalized
and hard-to-reach population.  These gaps between ``theory'' and ``practice''
have the potential to make AC-RDS either more or less desirable. If AC-RDS is
to be implemented in the field, there are several issues that must be explored. 
\begin{enumerate}
\item If there are pockets of the marginalized target population which are
	particularly hard-to-reach, AC-RDS has the potential to both help and hinder
	the sampling of these populations.  Novel referral requests could help by
	encouraging participants to refer friends from different communities,
	potentially exposing a new community to the researchers.  Alternatively,
	because AC-RDS referral requests are likely more difficult for participants,
	it could reduce the number of referrals that are made, making it more
	difficult to reach a target sample size.

\item Because AC-RDS leads to a reversible Markov chain, there exist
	formulations for the sampling weights, akin to the Volz-Heckathorn estimator
	\citep{volz2008probability}. 
	Successive Sampling model.  The formulation of the sampling weights for
	AC-RDS could follow a similar argument as the Volz-Heckathorn weights.
	Because the Volz-Heckathorn estimator assumes a reversible Markov transition
	matrix $P_A$, the stationary distribution is proportional to the row sums of
	$A$ (i.e. the node degrees).  Since AC-RDS also assumes a reversible Markov
	transition matrix $P_{\tilde W}$, the stationary distribution of AC-RDS is
	proportional to the row sums of $\tilde W$ (i.e. $\pi$ given in
	\eqref{eq:stationary}).  In both cases ($A$ and $\tilde W$), the weights
	require asking participants questions about their local social network. 
	Recently, \cite{verdery2016new} introduced data collection methods,
	survey questions, and estimators for RDS to estimate clustering properties
	of the underlying social network. Their estimators are designed to count the number of
	connected triplets and triangles which a participant belongs to. The collected data is
	the main part of estimating the sampling weights in AC-RDS.

\item Preferential recruitment, the tendency of participants to refer
	particular friends, leads to the violation of uniform referral assumption. AC-RDS
	gives participants some instructions for the new referrals. These
	instructions, since they are more specific, may lead to a referral process
	that satisfies the initial assumptions more.  However, studying the reactions
	of members of a hidden population to this type of requests and the impact of
	preferential recruitment on AC-RDS requires rigorous field study that we
	will address in future research.

\item More generally, it is necessary to investigate how human subjects
	consider both standard and non-standard referral requests.  Because it is
	practically infeasible to use random number generators to ensure participants
	refer \textit{randomly chosen} friends, all statistical approaches to RDS
	\textit{assume} that participants refer a random collection of friends.
	Whether these statistical models lead to adequate approximations of the
	actual referral process is an empirical question that has received some
	attention and deserves more. 
	In practice, there are many conditions that are often appended to this
	request. These conditions help define your contacts (e.g. as people you
	(i) know on a first name basis, (ii) have seen in the last month, and (iii)
	fit the eligibility criteria for the study).  Page 330 of \cite{RDSmanual}
	and Appendix Q in \cite{CDCmanual} give further discussion on this topic.  Of
	particular interest is that \cite{RDSmanual}, in the section titled ``Script
	for explaining the recruitment process,’’ says 
	\begin{quote} If possible, try and give the coupons to different types of
		people who you know. (e.g.  different ages, different levels of income,
		from different locations in this city).
	\end{quote} The AC-RDS requests provide a formalization for this exact concept.
	For example, \cite{wejnert2008web} 
	designed a web-based method to sample undergraduate students and study the
	effectiveness and efficacy of RDS; \cite{mccreesh2012evaluation} compared
	an RDS sample in Uganda with a total population survey on the same
	population;  \cite{mccoy2013improving} studied how manipulating incentives
	might change referral patterns; \cite{gile2014diagnostics} proposed
	statistical diagnostics to examine the convergence properties; and
	\cite{arayasirikul2015qualitative} performed qualitative follow-up interviews
	to ask participants about difficulties in finding referrals. Similar
	techniques could be used to evaluate whether novel referral requests provide
	a more representative sample. 
\end{enumerate}

\section{Discussion} \label{sec:final} 
In respondent-driven sampling, bottlenecks create dependencies between the samples;
successive samples are more likely to belong to the same community.  Because of
these dependencies, bottlenecks increase the variability
of the resulting estimators.  While researchers cannot alter the social network to
diminish bottlenecks, researchers can use novel implementations of RDS to implicitly
encourage participants to refer friends in different communities.  In
comparison to other such techniques in the literature, AC-RDS does not require
participants to reveal sensitive information, nor does it require \textit{a
priori} knowledge on what forms the bottlenecks (e.g.  race, gender,
neighborhood, some combination of these factors, or some entirely different
factors). In a closer look, AC-RDS, similar to the ``Search'' mode of NSM, increases
the referrals that are more likely to lead to the unexplored parts of the
network. NSM aims to efficiently explore the network by targeting the best nodes for each
sampling wave, while AC-RDS tries to find and explore the best local edges. Direct
comparison of these two methods requires human subject experiments and is
beyond the scope of the current paper.

We call this approach anti-cluster RDS. This terminology stems from two
distinct, but related, definitions of ``clustering'' in networks. First, the
classical use of ``clustering'' in social networks is the clustering
coefficient, a summary statistic of a network which describes the propensity of
nodes to form triangles. This idea of ``clustering'' is a local measure. The
second form of ``clustering'' is more global and is often used synonymously
with community structure; the idea is that ``clusters'' of individuals form
communities. Both of these types of clusters emerge due to homophily, the
tendency of individuals to become friends with people who are similar. As such,
homophily produces a local-global duality in ``clustering.'' AC-RDS requests
are built upon local structures in the network (which of your friends are
friends) and immediately access the global network patterns, which could be
unknown to the researchers and/or participants.

This paper shows that AC-RDS is analytically tractable under the Markov model.
One key benefit of the specific construction is that $X_i^{ac}$ is reversible
and its Markov transition matrix can be expressed with the underlying adjacency
matrix and standard matrix operations \eqref{eq:ac-rdsweights}.  A key limitation of the
Markov model is that it samples with replacement, while in practice the
sampling is done without replacement.  For further discussion of this topic,
see \cite{ott2016unequal}. The simulations in Section \ref{sec:sim} show that
the key insights from the Markov model continue to hold under the sampling
without replacement model, so long as the sample size is not comparable to the
population size.

Section \ref{sec:theory} studies theoretical properties of AC-RDS. We first
argue that AC-RDS can be approximated by a reversible Markovian process.
Propositions \ref{lem:spgap_popgraph} and \ref{prop:epsilon} show that AC-RDS
can decrease $\lambda_2$, the second eigenvalue of the Markov
transition matrix, on the population graph. Theorem
\ref{thm:concentration_ac_laplacian} shows that these gains from Propositions
\ref{lem:spgap_popgraph} and \ref{prop:epsilon} will continue to hold if the
graph is sampled with independent edges. In addition, Theorem
\ref{thm:var_acrds} shows that AC-RDS reduces the covariance of the samples in
the referral tree under the Stochastic Blockmodel with equal block sizes.

Finally, in Section \ref{sec:implementation} we discuss some of the gaps
between theory and practice, acknowledging that more work needs to be done
before AC-RDS could be implemented in the field. For example, it is not clear
how participants will actually respond to AC-RDS requests. Addressing this
issue requires human subject experiments that are beyond the scope of the
current paper. We are addressing this problem in concurrent research.

\bibliographystyle{natbib}
\bibliography{designedRDS2}

\newpage
\section*{Appendix}
\appendix 
This appendix provides the proofs contained in the main document. We begin by
presenting some preliminary lemmas. We then provide the proofs for the results
given in Sections~\ref{sec:variance},~\ref{subsec:PopulationGraph}, 
and \ref{subsec:SampleGraph}.

\section{Preliminary Lemmas} 
This section contains lemmas which are used to prove our main results.
Lemmas~\ref{lem:ac_as_sbm} and~\ref{lem:ac_increases_leaving_prob} are
contained in the main paper; we start the preliminary results with 
Lemma~\ref{lem:diag_norm_invariant}. First we state two standard results, given
here for convenience.
\begin{lemma} \label{lem:diag_norm_invariant} 
  Let $A$ be a symmetric matrix and $D$ a diagonal matrix. Then 
  \[\|DA\|  = \|D^{\half}AD^{\half}\|.  \]
\end{lemma}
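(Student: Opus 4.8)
The plan is to realize $DA$ and $D^{\half}AD^{\half}$ as similar matrices and then exploit the symmetry of the second one. First I would assume that $D$ is invertible, i.e.\ every diagonal entry is strictly positive, so that $D^{\half}$ and $D^{-\half}$ are well-defined real diagonal matrices. The elementary conjugation identity
\[
  D^{-\half}(DA)D^{\half} = D^{\half}AD^{\half}
\]
then shows that $DA$ and $D^{\half}AD^{\half}$ are similar and hence share the same spectrum, with multiplicities. Since $A=A^{T}$ and $D^{\half}$ is symmetric, $D^{\half}AD^{\half}$ is a symmetric matrix, so its operator norm equals its spectral radius, $\|D^{\half}AD^{\half}\| = \max_{j}\lvert\lambda_j(D^{\half}AD^{\half})\rvert$. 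Combining the two observations gives $\|D^{\half}AD^{\half}\| = \max_{j}\lvert\lambda_j(DA)\rvert$, the largest eigenvalue magnitude of $DA$.

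The remaining step, and the one I expect to be the main obstacle, is to identify $\max_{j}\lvert\lambda_j(DA)\rvert$ with $\|DA\|$ itself. This is \emph{not} automatic: $DA$ is a product of a diagonal and a symmetric matrix and is generally non-normal, so its largest singular value (its operator norm in the Euclidean sense) can strictly exceed its spectral radius. The identity therefore really lives in the weighted geometry in which $DA$ is self-adjoint: in the intended application $DA$ is a transition-type matrix $T^{-1}\tilde W$, which is self-adjoint with respect to $\inner{\cdot}{\cdot}_\pi$, and for such an operator the induced operator norm coincides with its spectral radius. I would make this dependence on the weighted inner product explicit rather than treating $DA$ as an arbitrary matrix; this is precisely what lets one replace the mixing-relevant norm of the non-symmetric $T^{-1}\tilde W$ by the Euclidean operator norm of the symmetric $D^{\half}AD^{\half}$ and so apply Weyl-type perturbation bounds to the symmetric representatives.

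Finally, I would dispose of the degenerate case in which $D$ has zero diagonal entries, so that $D^{-\half}$ fails to exist. The cleanest route is a continuity/limiting argument: replace $D$ by $D+\delta I$, which is invertible for every $\delta>0$, apply the identity above, and let $\delta \searrow 0$, using that eigenvalues and the relevant norms depend continuously on the matrix entries. Alternatively one can restrict to the principal submatrix indexed by the strictly positive diagonal entries of $D$, checking that the deleted rows and columns contribute only zero rows/columns and therefore change neither side. Both reductions are routine once the similarity argument and the self-adjointness identification in the previous paragraph are in place.
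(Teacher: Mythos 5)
The paper gives no proof of this lemma at all: it is stated in the appendix as one of ``two standard results, given here for convenience.'' So there is no in-paper argument to match, and your similarity-plus-symmetry route is the standard one. More importantly, the obstacle you flag in your second paragraph is genuine, and you are right not to wave it away: as literally written, with $\|\cdot\|$ the Euclidean operator norm (the norm the paper itself fixes in Theorem~\ref{thm:concentration_ac_laplacian}), the identity is \emph{false}. For instance, with
\[
A = \begin{pmatrix} 0 & 1 \\ 1 & 0 \end{pmatrix}, \qquad D = \begin{pmatrix} 4 & 0 \\ 0 & 1 \end{pmatrix},
\]
one has $DA = \begin{pmatrix} 0 & 4 \\ 1 & 0 \end{pmatrix}$, whose operator norm is $4$, while $\|D^{\half} A D^{\half}\| = 2$. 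What the conjugation identity $D^{-\half}(DA)D^{\half} = D^{\half} A D^{\half}$ actually proves is $\rho(DA) = \|D^{\half} A D^{\half}\|$, where $\rho$ is the spectral radius; equivalently, as you say, $\|DA\|$ must be read as the operator norm induced by the weighted inner product $\inner{x}{y}_{D^{-1}} = x^T D^{-1} y$, with respect to which $DA$ is self-adjoint. This corrected reading is also the one under which the paper's own use of the lemma is coherent: in Part~1 of the proof of Theorem~\ref{thm:concentration_ac_laplacian}, the quantity being controlled is the Euclidean norm of the symmetric conjugate $\T^{-\half} M \T^{-\half}$, and the subsequent row-sum bound (Lemma~\ref{lem:posspup}) is itself a Perron--Frobenius bound on $\lambda_1$, i.e.\ a spectral-radius statement, so the whole chain goes through once every intermediate ``norm'' of a non-symmetric matrix such as $\T^{-1}M$ is interpreted as a spectral radius. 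Your diagnosis therefore identifies a real (if harmless, given how the lemma is deployed) imprecision in the paper's statement.

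One small slip in your handling of singular $D$: if $D_{ii} = 0$, then row $i$ of $DA$ vanishes but column $i$ generally does not (with $D = \begin{pmatrix} 1 & 0 \\ 0 & 0 \end{pmatrix}$ and $A$ as above, $DA = \begin{pmatrix} 0 & 1 \\ 0 & 0 \end{pmatrix}$), so deleting those rows and columns \emph{does} change the Euclidean operator norm of $DA$. The restriction argument is valid only under the spectral-radius reading, where the zero rows make $DA$ block triangular and the deleted indices contribute only zero eigenvalues. Your continuity route via $D + \delta I$ is the cleaner of the two, again for the spectral-radius version, since eigenvalues depend continuously on the entries. In short: your proof is correct for the statement the paper actually needs, and is more careful than the paper, which asserts the lemma without proof in a form that conflates $\|DA\|$ with $\rho(DA)$.
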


\begin{lemma}[Bernstein's inequality] \label{lem:bernineq} 
  Let $X_1,\cdots ,X_N$ be independent random variables and $|X_i - \rmE X_i| \leq
  S$ for $i=1,\cdots, N$. Let $\sigma^2 := \sum_{i=1}^N \rmE[X_i - \rmE X_i]^2$. Then
  for all $t \geq 0$,
  \[
    \Pr\left(\left| \sum_{i=1}^N X_i - \rmE X_i \right| \geq t \right) 
    \leq 2 \exp\left(-\frac{\frac 12 t^2}{\sigma^2 + \frac 13 St}\right).
  \]
\end{lemma}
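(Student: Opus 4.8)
The plan is to prove this via the standard Chernoff / exponential-moment method, with the characteristic Bernstein bookkeeping on the higher moments. First I would reduce to the centered case: set $Y_i := X_i - \rmE X_i$, so that $\rmE Y_i = 0$, $|Y_i| \le S$, and $\sum_{i=1}^N \rmE Y_i^2 = \sigma^2$. Writing $S_N := \sum_{i=1}^N Y_i$, it suffices to bound $\rmPr(S_N \ge t)$; the two-sided statement then follows by applying the same argument to $-Y_i$ and taking a union bound, which produces the factor of $2$.

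Second, for any $\lambda > 0$, Markov's inequality applied to $e^{\lambda S_N}$ gives $\rmPr(S_N \ge t) \le e^{-\lambda t}\prod_{i=1}^N \rmE[e^{\lambda Y_i}]$, using independence. The heart of the argument is to bound each moment generating function $\rmE[e^{\lambda Y_i}]$. Expanding the exponential and using $\rmE Y_i = 0$, I would write $\rmE[e^{\lambda Y_i}] = 1 + \sum_{k \ge 2}\frac{\lambda^k \rmE[Y_i^k]}{k!}$ and then control the moments by $|\rmE[Y_i^k]| \le S^{k-2}\,\rmE[Y_i^2]$ for $k\ge 2$, which follows from $|Y_i| \le S$.

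Third --- and this is the step I expect to require the most care --- I would extract the precise constant $1/3$ in the stated bound. The key elementary inequality is $k! \ge 2\cdot 3^{k-2}$ for $k \ge 2$, which turns the moment sum into a geometric series: for $\lambda S < 3$, $\sum_{k\ge 2}\frac{\lambda^k S^{k-2}}{k!} \le \frac{\lambda^2}{2}\sum_{k \ge 2}\left(\frac{\lambda S}{3}\right)^{k-2} = \frac{\lambda^2/2}{1-\lambda S/3}$. Combined with $1+x \le e^x$, this yields $\rmE[e^{\lambda Y_i}] \le \exp\!\left(\frac{\lambda^2 \rmE[Y_i^2]/2}{1-\lambda S/3}\right)$, and multiplying over $i$ gives $\rmPr(S_N \ge t) \le \exp\!\left(-\lambda t + \frac{\lambda^2 \sigma^2/2}{1-\lambda S/3}\right)$.

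Finally I would optimize the free parameter by choosing $\lambda = t/(\sigma^2 + St/3)$, which automatically satisfies $\lambda S < 3$. Substituting this value collapses the exponent to exactly $-\frac{\half t^2}{\sigma^2 + \frac13 St}$ after routine simplification, completing the one-sided bound; the symmetrization noted above then gives the stated two-sided inequality. The only genuinely delicate point is the moment-to-geometric-series reduction that pins down the constant $1/3$; everything after the choice of $\lambda$ is bookkeeping.
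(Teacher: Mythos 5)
Your proof is correct: the moment bound $|\rmE Y_i^k|\le S^{k-2}\rmE Y_i^2$, the inequality $k!\ge 2\cdot 3^{k-2}$ yielding the geometric series for $\lambda S<3$, the choice $\lambda = t/(\sigma^2+St/3)$ (which satisfies $\lambda S<3$ and collapses the exponent to exactly $-\frac{\half t^2}{\sigma^2+\frac13 St}$), and the symmetrization giving the factor $2$ all check out. The paper itself states this lemma without proof, as one of two ``standard results, given here for convenience,'' and your argument is precisely the canonical Chernoff-method proof that the citation implicitly invokes, so there is no discrepancy to flag.
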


We use the following result from \cite{rohe2011spectral} 
in the proof of Proposition \ref{lem:spgap_popgraph}.
\begin{lemma}\label{fact:lam2}[\cite{rohe2011spectral}]
Under the Stochastic Blockmodel, if $B = pI + rJ$ and there are an equal number
of nodes in each block, then 
\[
\lambda_i(P_{\A}) = \left\{\begin{array}{cl}1 & i = 1 
  \\(Kr/p + 1)^{-1} & i = 2, \dots, K \\0 & o.w.\end{array}\right.
\]
\end{lemma}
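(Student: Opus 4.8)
The plan is to exploit the special structure induced by equal block sizes together with $B = pI + rJ$, which forces every node to have the same expected degree. This collapses the diagonal scaling $T$ to a scalar multiple of the identity, so that $P_{\A}$ is just a rescaling of the symmetric matrix $\A$, and reduces the whole computation to finding the spectrum of $\A = ZBZ^T$, which in turn reduces to the spectrum of the small $K \times K$ matrix $B$.

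First I would compute the row sums of $\A$. Writing $m = N/K$ for the common block size and using $\A = ZBZ^T$, the degree of a node $u$ in block $k$ is $\sum_v \A_{uv} = m \sum_l B_{kl}$. Since $B_{kl} = p\,\delta_{kl} + r$, the row sum $\sum_l B_{kl} = p + Kr$ does not depend on $k$. Hence every node has the same expected degree $m(p+Kr)$, so $T = m(p+Kr)\,I_N$ and
\[
  P_{\A} = T^{-1}\A = \frac{1}{m(p+Kr)}\,\A = \frac{K}{N(p+Kr)}\,\A.
\]
Because $T$ is a scalar matrix, $P_{\A}$ is a scalar multiple of the symmetric matrix $\A$; in particular it is diagonalizable with real eigenvalues equal to the eigenvalues of $\A$ scaled by $\tfrac{K}{N(p+Kr)} = \tfrac{1}{m(p+Kr)}$.

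Next I would compute the spectrum of $\A = ZBZ^T$. Since the columns of $Z$ are orthogonal indicators of equally sized blocks, $Z^TZ = m\,I_K$. Using the standard fact that $ML$ and $LM$ share their nonzero eigenvalues (with $M = ZB$ and $L = Z^T$), the nonzero eigenvalues of $\A = (ZB)Z^T$ coincide with those of $Z^T(ZB) = (Z^TZ)B = mB$, while the remaining $N-K$ eigenvalues are zero. It then remains to diagonalize $B = pI + rJ_{K\times K}$: since $J_{K\times K}$ has eigenvalue $K$ (on $\mathbf{1}$) and eigenvalue $0$ with multiplicity $K-1$, the matrix $B$ has eigenvalue $p+Kr$ once and eigenvalue $p$ with multiplicity $K-1$. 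Therefore $\A$ has nonzero eigenvalues $m(p+Kr)$ (once) and $mp$ (multiplicity $K-1$).

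Finally I would apply the rescaling $\tfrac{1}{m(p+Kr)}$. The eigenvalue $m(p+Kr)$ of $\A$ maps to $1$, giving $\lambda_1 = 1$; each eigenvalue $mp$ maps to $\tfrac{p}{p+Kr} = (Kr/p + 1)^{-1}$, covering $i = 2,\dots,K$; and the zero eigenvalues of $\A$ stay zero, covering the remaining indices. A multiplicity check, $1 + (K-1) + (N-K) = N$, confirms that all $N$ eigenvalues are accounted for. There is no genuine obstacle here beyond bookkeeping; the only step requiring a little care is the reduction of the nonzero spectrum of $ZBZ^T$ to that of $mB$ and keeping the multiplicities consistent with the equal-block-size assumption that gives $Z^TZ = m\,I_K$.
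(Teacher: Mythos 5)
Your proof is correct, but it takes a genuinely different route from the paper's. The paper argues by explicit eigenvector construction on $P_{\A} = Z B_L Z^T$: it verifies that $\1_N$ is an eigenvector with eigenvalue $1$, that $Z b_i$ is an eigenvector with eigenvalue $ps/(Nr+sp) = (Kr/p+1)^{-1}$ for any $b_i \perp \1_K$ (giving multiplicity at least $K-1$), and then caps the number of nonzero eigenvalues at $K$ via the rank bound $\mathrm{rank}(P_{\A}) \le \min(\mathrm{rank}(Z), \mathrm{rank}(B_L), \mathrm{rank}(Z^T)) \le K$. You instead exploit the constant expected degree to collapse $T$ to the scalar matrix $m(p+Kr)I_N$, so that $P_{\A}$ is a rescaling of $\A$, and then reduce the nonzero spectrum of $\A = (ZB)Z^T$ to that of $Z^T(ZB) = mB$ via the standard fact that $ML$ and $LM$ share nonzero eigenvalues with algebraic multiplicity, finishing by diagonalizing $B = pI + rJ$ through the spectrum of $J$. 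Your route is arguably cleaner on bookkeeping: the multiplicities come automatically from the $K\times K$ reduction, with no separate rank argument needed. What the paper's construction buys is explicit knowledge of the eigenvectors themselves — $\1_N$ and $\{Zb_i\}$, whose span is the column space of $Z$ regardless of $p$ and $r$ — which is reused later: Part 1 of the proof of Theorem~\ref{thm:var_acrds} cites precisely this construction to conclude that $P_{\A}$ and $P_{\tilde\W}$ have identical top-$K$ eigenspaces, which is what makes the Parseval step work. Your argument recovers the same eigenvectors implicitly (for a nonzero eigenvalue, an eigenvector $w$ of $mB$ lifts to $ZBw \propto Zw$ as an eigenvector of $\A$), but you would need to state this lifting explicitly if your version of the proof were to support that downstream use.
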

For completeness we include the proof here. 
\begin{proof}

  The matrix $B \in \R^{k \times k}$ is the sum of two matrices,
  \[B =  pI + rJ_{k}\1_{k}^T,\]
  where $I_{k} \in \R^{k \times k}$ is the identity matrix, $\1_{k} \in \R^{k}$
  is a vector of ones, $r\in (0,1)$ and $p \in (0,1-r)$. 
  Let $Z \in \{0,1\}^{N \times K}$ be such that $Z^T \1_N = s \1_K$ for some $s
  \in \R$. This guarantees that all $K$ blocks have equal size $s$. The
  Stochastic Blockmodel has the population adjacency matrix, $\A =Z BZ^T$.
  Moreover, $P_{\A} = Z B_L Z^T$ for 

  \[B_L =  \frac{1}{Nr + sp} \left(pI_{K} + r\1_{K}\1_{K}^T\right).\]

  The eigenvalues are found by construction.
  \begin{itemize}
    \item The constant vector $\1_N$ is an eigenvector with eigenvalue $1$;
    \begin{align*}
	    Z B_L Z^T \1_N 
        &=\frac{s}{Nr + sp} Z  \left(pI_{K} + r\1_{K}\1_{K}^T\right) 1_K \\
        &=\frac{s}{Nr + sp} Z  (p + K r)\1_{K} 
        +\frac{s(p + K r)}{Nr + sp} \1_{N} = \1_N,
      \end{align*}
      where the last line follows because $N = sK$.
    \item Let $b_2, \dots, b_K \in \R^K$ be a set of orthogonal vectors which
      are also orthogonal to $\1_K$.  For any $i$, $Z b_i$ is an eigenvector
      with eigenvalue $(Kr/p + 1)^{-1}$,
      \[Z B_L Z^T (Z b_i)  = 
        Z B_L s I_{K \times K} b_i  = 
        \frac{s}{Nr + sp} Z  \left(pI_{K} + r\1_{K}\1_{K}^T\right) b_i  = 
      \frac{p s}{Nr + sp} (Z   b_i).\]
      Because $Zb_i$ and $Zb_j$ are orthogonal for $i \ne j$, the multiplicity
      of the eigenvalue $(Kr/p + 1)^{-1}$ is at least $K-1$. 
  \end{itemize}
  Because $rank(P_{\A}) \le min( rank(Z) , rank(B_L) ,rank( Z^T))\le K$, there
  are at most $K$ nonzero eigenvalues.  The results follow.
\end{proof}

The following result is used for the computation of the eigenvalues in the proof of Proposition \ref{prop:epsilon}. 
\begin{lemma}\label{lem:lam2}
  Let $P$ be a block constant Markov transition matrix, with blocks of identical sizes.  Let $P$ contain the block values
  \[P = \left(\begin{array}{c|c}p & r \\\hline r & p\end{array}\right),\]
  then 
  \[\lambda_2(P) = \frac{p-r}{p+r}.\]
\end{lemma}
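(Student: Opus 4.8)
The plan is to recognize $P$ as the Kronecker product of a $2\times 2$ matrix with an all-ones block, and to let the stochasticity of $P$ supply the normalization that produces the stated ratio. First I would fix notation: since the two blocks have identical size, say $m$, we have $P \in \R^{2m\times 2m}$ and, writing $J_m$ for the $m\times m$ all-ones matrix,
\[
  P = \begin{pmatrix} p\,J_m & r\,J_m \\ r\,J_m & p\,J_m \end{pmatrix} = B \otimes J_m,
  \qquad B = \begin{pmatrix} p & r \\ r & p \end{pmatrix}.
\]
Because $P$ is a Markov transition matrix, each row sums to one; every row has $m$ entries equal to $p$ and $m$ entries equal to $r$, so $m(p+r)=1$, i.e.\ $m = 1/(p+r)$. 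This normalization is the one genuinely load-bearing ingredient, and it is exactly what converts the raw eigenvalue computed below into the claimed ratio.

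Next I would read the spectrum off the Kronecker structure. The eigenvalues of $J_m$ are $m$ (eigenvector $\mathbf{1}_m$) and $0$ (with multiplicity $m-1$), while those of $B$ are $p+r$ (eigenvector $(1,1)^T$) and $p-r$ (eigenvector $(1,-1)^T$). Hence the eigenvalues of $P = B\otimes J_m$ are the products $\lambda(B)\lambda(J_m)$, namely $m(p+r)$, $m(p-r)$, and $0$ with multiplicity $2m-2$. Substituting $m(p+r)=1$ gives the eigenvalues $1$, $(p-r)/(p+r)$, and a $(2m-2)$-fold zero. If I preferred to avoid the Kronecker machinery, I would exhibit these eigenvectors directly: $\mathbf{1}_{2m}$ is the stationary eigenvector with eigenvalue $1$; the sign vector $\bigl((\mathbf{1}_m)^T,\,-(\mathbf{1}_m)^T\bigr)^T$ satisfies $Pv = m(p-r)\,v$ by a one-line row computation; and any vector supported on a single block with zero coordinate sum is annihilated by $P$, accounting for the remaining $2m-2$ zeros and completing the spectrum.

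Finally I would order the eigenvalues by magnitude to identify $\lambda_2$. The only nonzero eigenvalues are $1$ and $(p-r)/(p+r)$, and since $|p-r| < p+r$ for $p,r>0$, the latter has modulus strictly less than $1$ and strictly exceeds the remaining zeros; it is therefore the second largest in absolute value, yielding $\lambda_2(P) = (p-r)/(p+r)$. I do not anticipate a real obstacle: the computation is routine once the block/Kronecker decomposition is in hand, and the only point requiring care is remembering that stochasticity forces $m(p+r)=1$, which is precisely what replaces the unnormalized $m(p-r)$ by the normalized ratio.
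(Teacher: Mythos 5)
Your proof is correct, but it takes a genuinely different route from the paper. The paper disposes of this lemma in one line by specializing its Lemma on SBM eigenvalues (the result quoted from Rohe et al., proved there by exhibiting eigenvectors of $Z B_L Z^T$: the constant vector with eigenvalue $1$, the vectors $Zb_i$ with $b_i \perp \mathbf{1}_K$ with eigenvalue $(Kr/p+1)^{-1}$, and a rank bound for the zeros) to $K=2$. Note that this citation silently requires a parameter translation: in that lemma the parametrization is $B = pI + rJ$, so its diagonal block value is $p+r$; to recover $\frac{p-r}{p+r}$ one must substitute $p \mapsto p-r$ into $(2r/p+1)^{-1}$. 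Your argument avoids both the general lemma and this translation by writing $P = B \otimes J_m$ and multiplying spectra, with the row-stochasticity constraint $m(p+r)=1$ supplying the normalization — a self-contained and, for this special case, cleaner derivation; your fallback eigenvector list ($\mathbf{1}_{2m}$, the sign vector, and block-supported mean-zero vectors) is essentially the paper's construction specialized to two blocks. What the paper's route buys is reuse and generality: the $K$-block lemma is needed anyway for Proposition 2 and Theorem 2, so deriving the $K=2$ case from it costs nothing there. Two trivial remarks on your write-up: the strict inequality $|p-r| < p+r$ needs $p,r>0$, which the lemma statement does not say explicitly but which holds in every application; and if $p=r$ the second eigenvalue ties with the zeros, in which case the claimed formula still returns the correct value $0$, so nothing breaks.
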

\begin{proof}
This follows from Lemma \ref{fact:lam2} using $K=2$.
\end{proof}

\begin{lemma}[Operator norm of non-negative irreducible matrices] 
  \label{lem:posspup}
  Let $A \in R^{N\times N}$ be a non-negative, irreducible matrix. Define
  $r_i(A):=\sum_{j=1}^N A_{ij}$. Then 
  \[\|A\| \leq \max_i r_i(A).\] 
    \begin{proof}
By the Perron-Frobenius theorem, $A$ has a real leading eigenvalue.  Additionally, for any $y \in \R^N$, $\mu \in \R$,  with $y \geq 0$, and $\mu \geq 0$, if $Ay \leq \mu y$,  then $\lambda_1(A) \leq \mu$.  Take $y = \1$ and $\mu = \max_i r_i(A)$, then
\[ \|A\|  = \lambda_1(A) \leq \max_i r_i(A). \]
  \end{proof}
\end{lemma}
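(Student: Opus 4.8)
The plan is to bound the operator norm by the Perron root of $A$ and then bound that root by the largest row sum. Since $A$ is non-negative and irreducible, the Perron--Frobenius theorem guarantees that its spectral radius is attained at a simple real eigenvalue $\lambda_1(A) > 0$, which I will call the leading eigenvalue, and that $A$ admits a strictly positive right eigenvector and a strictly positive left eigenvector. For the matrices to which this lemma is applied, $A$ is symmetric, so its operator norm coincides with its spectral radius, giving $\|A\| = \lambda_1(A)$. This is the single place where I would invoke symmetry, and it is the step I would be most careful to justify: for a general non-symmetric non-negative matrix the operator norm (the largest singular value) can exceed the largest row sum even though the spectral radius never does.

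The core of the argument is a sub-invariance (Collatz--Wielandt) comparison: if $A \ge 0$ entrywise, $y \ge 0$ is nonzero, $\mu \ge 0$, and $Ay \le \mu y$ componentwise, then $\lambda_1(A) \le \mu$. I would prove this in one line by pairing the inequality with the strictly positive left Perron eigenvector $w$, for which $w^\top A = \lambda_1(A)\, w^\top$: applying $w^\top$ to $Ay \le \mu y$ yields $\lambda_1(A)\, w^\top y \le \mu\, w^\top y$, and dividing by the strictly positive scalar $w^\top y$ gives $\lambda_1(A) \le \mu$.

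With the comparison in hand, the remainder is immediate. I would take $y = \mathbf{1}$, the all-ones vector, so that the $i$-th entry of $A\mathbf{1}$ is exactly the row sum $r_i(A)$ and hence $A\mathbf{1} \le \left(\max_i r_i(A)\right)\mathbf{1}$ componentwise. Setting $\mu = \max_i r_i(A)$ in the comparison gives $\lambda_1(A) \le \max_i r_i(A)$, and combining this with $\|A\| = \lambda_1(A)$ completes the proof. The only genuine obstacle is the identification $\|A\| = \lambda_1(A)$ from the first paragraph: the row-sum bound on the spectral radius is classical and robust, whereas equating the spectral radius with the operator norm is what forces the (implicit) symmetry hypothesis under which the lemma is actually used downstream.
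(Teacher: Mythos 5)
Your proof takes essentially the same route as the paper's: Perron--Frobenius plus the sub-invariance (Collatz--Wielandt) comparison $Ay \le \mu y \Rightarrow \lambda_1(A) \le \mu$, applied with $y = \mathbf{1}$ and $\mu = \max_i r_i(A)$. You go slightly beyond the paper in two sound ways: you actually prove the comparison step by pairing with the strictly positive left Perron eigenvector (the paper merely asserts it), and you correctly flag that the identification $\|A\| = \lambda_1(A)$ silently requires symmetry or reading $\|\cdot\|$ as the spectral radius --- indeed, for a general non-negative irreducible matrix the spectral norm can strictly exceed the largest row sum, a caveat the paper's proof glosses over even though the lemma is only ever invoked in settings where the relevant matrix is a diagonal conjugation of a symmetric one.
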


\begin{lemma}\label{lem:symrw}
  For any $W \in R^{N \times N}$, define diagonal matrix $T$ to contain the row
  sums down the diagonal, $T_{uu} = \sum_{v} W(u,v)$.  If $T_{uu} > 0$ for all
  $u$, then the eigenvalues of $P_W = T^{-1}W$ are equal to the eigenvalues of
  $L_W = T^{-1/2}WT^{-1/2}$.  
\end{lemma}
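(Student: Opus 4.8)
The plan is to exhibit $P_W$ and $L_W$ as similar matrices, from which equality of eigenvalues follows immediately. The hypothesis $T_{uu}>0$ for all $u$ is exactly what is needed to make this work: since $T$ is diagonal with strictly positive entries, it is invertible, and its symmetric square root $T^{1/2}$ (the diagonal matrix with entries $\sqrt{T_{uu}}$) is well defined and invertible, with inverse $T^{-1/2}$.

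First I would record the definitions $P_W = T^{-1}W$ and $L_W = T^{-1/2}WT^{-1/2}$, and then compute the conjugation of $P_W$ by $T^{1/2}$:
\[
  T^{1/2} P_W T^{-1/2}
  = T^{1/2}\left(T^{-1}W\right)T^{-1/2}
  = T^{-1/2}WT^{-1/2}
  = L_W,
\]
where the middle step uses $T^{1/2}T^{-1} = T^{-1/2}$, valid because $T$ is diagonal and invertible. This displays $L_W$ as $S P_W S^{-1}$ with $S = T^{1/2}$, so $P_W$ and $L_W$ are similar.

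Finally I would invoke the standard fact that similar matrices have identical characteristic polynomials, hence identical eigenvalues with the same algebraic multiplicities. Concretely, $\det(L_W - \mu I) = \det\!\big(S(P_W-\mu I)S^{-1}\big) = \det(P_W - \mu I)$, so the two matrices share every eigenvalue.

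There is essentially no obstacle here beyond bookkeeping: the only subtlety is ensuring that $T^{1/2}$ and $T^{-1/2}$ exist, which is guaranteed by the assumption $T_{uu}>0$, and that the diagonal matrices commute appropriately in the conjugation step. I would note in passing that this lemma is what justifies studying the symmetric matrix $L_W$ (the normalized-Laplacian-type object used in the concentration and perturbation arguments of Section~\ref{subsec:SampleGraph}) in place of the non-symmetric transition matrix $P_W$, since only the eigenvalues—not the eigenvectors—need to be transferred between the two.
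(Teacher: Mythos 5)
Your proof is correct and takes essentially the same route as the paper: the paper transfers eigenpairs via $x \mapsto T^{-1/2}x$, which is precisely your similarity $L_W = T^{1/2} P_W T^{-1/2}$ read at the level of eigenvectors. If anything, your characteristic-polynomial formulation is slightly more complete, since it gives both inclusions and equality of algebraic multiplicities in one stroke, whereas the paper's argument explicitly shows only that each eigenvalue of $L_W$ is an eigenvalue of $P_W$.
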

\begin{proof}
  Let $x,\lambda$ be an eigenpair of $L_W$,
  \begin{equation*}
    T^{-1/2}WT^{-1/2} x = \lambda x \implies
    T^{-1/2}\left(T^{-1/2}W\left(T^{-1/2} x\right)\right) = \lambda
    \left(T^{-1/2}x\right),
  \end{equation*}
  where the left hand side is $P_W (T^{-1/2}x)$.
  This implies that $T^{-1/2}x, \lambda$ is an eigenpair of $P_W$.  
\end{proof}


\section{Design Effect and Variance} 
Here we provide the proof of Proposition \ref{lem:cov_eq} from Section \ref{sec:variance}.
\begin{proof}[Proof of Proposition \ref{lem:cov_eq}] \label{proof:lem:cov_eq}
  Lemma 12.2 in \cite{levin2009markov} shows that (i) $f_j$ and $\lambda_j$ are
  real valued and (ii) the $f_j$ are orthonormal with respect to
  $\inner{f_\ell}{f_j}_\pi$.  Because $\lambda_2<1$, $f_1$ is the constant
  vector.  We can express the covariance as
  \begin{align}\label{equ:Prop1:a}
    \mathrm{Cov}\left(y(X_i),y(X_{i+t})\right) 
    &= \rmE\left[ ( y(X_i)-\rmE[y(X_i)] )
        (y(X_{i+t})-\rmE[y(X_{i+t})] )\right] \notag \\
    &= \rmE\left[ y(X_i) y(X_{i+t}) \right] - \rmE^2[y(X_1)] \notag \\
    &= \rmE\left[ y(X_1) y(X_{1+t}) \right] - \rmE^2[y(X_1)].
  \end{align}
  Consider the first term of \eqref{equ:Prop1:a}
  \begin{align*}
    \rmE [y(X_1)y(X_{1+t})]
    &= \sum_{u,v\in \V} y(u)y(v) \Pr(X_1=u, X_{1+t}=v) \\
    &= \sum_{u,v\in \V} y(u)y(v) \pi_u P^t(u,v)\\
    &= \sum_{u,v\in \V} y(u)y(v) \pi_u \pi_v 
         \sum_{j=1}^{|\V|} f_j(u)f_j(v) \lambda^t_j \\
    &= \sum_{u,v\in \V} y(u)y(v) \pi_u \pi_v 
      \{ 1+\sum_{j=2}^{|\V|} f_j(u)f_j(v)\lambda^t_j \} \\
    &= \sum_{u,v\in \V} y(u)y(v) \pi_u \pi_v 
     + \sum_{j=2}^{|\V|} \lambda^t_j \sum_{u,v\in \V} y(u)y(v) \pi_u \pi_v f_j(u) f_j(v) \\
    &= \rmE^2[y(X_1)] + \sum_{j=2}^{|\V|} \inner{y}{f_j}_\pi^2 \lambda^t_j.\\
  \end{align*} 
  Hence,
  \[ 
    \mathrm{Cov}\left(y(X_i),y(X_{i+t})\right) 
    =\sum_{j=2}^{|\V|} \inner{y}{f_j}_\pi^2 \lambda^t_j. 
  \]

\end{proof} 


\section{Population Graph Results} \label{app:population_results}
Here we provide the proofs of the results given in Section \ref{subsec:PopulationGraph}---Lemmas~\ref{lem:ac_as_sbm},~\ref{lem:ac_increases_leaving_prob} and  Propositions~\ref{lem:spgap_popgraph},~\ref{prop:epsilon}.

\begin{proof}[Proof of Lemma \ref{lem:ac_as_sbm}] 
  From the definition of $Z$ and $\bAc$ it follows that $Z^TZ = \Theta$ and $\bAc=J_{n\times n} - ZBZ^T=Z\bar{B}Z^T$. Then, 
  \[\A \bar{\A } = ZBZ^TZ\bar{B}Z^T=ZB\Theta \bar{B}Z^T\] 
  and similarly, 
  \[ \bar{\A}\A=Z\bar{B}\Theta BZ^T. \] 
  Hence, 
  \begin{align*}
    (\A \bar{\A} + \bar{\A}\A) \cdot \A &= \left( Z(B\Theta \bar{B} + \bar{B}\Theta B)Z^T \right) \cdot (ZBZ^T)  \\
    &=Z \left( (B\Theta \bar{B} + \bar{B}\Theta B)\cdot B \right) Z^T. 
  \end{align*}
\end{proof}

\begin{proof}[Proof of Lemma  \ref{lem:ac_increases_leaving_prob}] 
We first show that 
  \begin{align} \label{lem:diag:ineq1}
    \frac{[(B\Theta \bar{B})\cdot B]_{kl}}{[(B\Theta \bar{B})\cdot B]_{ll}} > \frac{B_{kl}}{B_{kk}} = \frac{r}{p+r}.
  \end{align}
  We have
  \begin{align*}
    [(B \Theta \bar{B}) \cdot B]_{kl} &= r(\Theta_{kk} (p+r)(1-r) + \Theta_{ll} r(1-p-r)+ \sum_{m \neq k,m\neq l} \Theta_{mm} r(1-r)) \\
    [(B \Theta \bar{B}) \cdot B]_{kk} &= (p+r)(\Theta_{kk} (p+r)(1-p-r)+\sum_{m \neq k} \Theta_{mm} r(1-r)).
  \end{align*} 
  With the above, we rewrite \eqref{lem:diag:ineq1} as follows:
  \begin{align}
    \frac{r ( \Theta_{kk} (p+r)(1-r) + \Theta_{ll} r(1-p-r)+ \sum_{\substack{m\neq k\\ m\neq l}} \Theta_{mm} r(1-r))} 
    {(p+r)(\Theta_{kk} (p+r)(1-p-r)+\sum_{m \neq k} \Theta_{mm} r(1-r))}   
    &> \frac{r}{p+r} \label{pr:diageq21} \\
    p(\Theta_{kk}(p+r) - \Theta_{ll}r)  &> 0, \label{pr:diageq22}
  \end{align} 
	where \eqref{pr:diageq21} to \eqref{pr:diageq22} follows from algebraic
	manipulation. Note that \eqref{pr:diageq22} is always true because of the
	lemma assumptions. In addition, by going through the same procedure, it can
	be shown that  
  \[ 
    \frac{[(B\Theta \bar{B} + \bar{B}\Theta B )\cdot B]_{kl}}
    {[(B\Theta \bar{B}+\bar{B}\Theta B)\cdot B]_{kk}} 
    > \frac{B_{kl}}{B_{kk}}. 
  \]
  In terms of the expected adjacency matrices, the above statement is equivalent
  to the following result. Suppose that nodes $k$ and $m$ belong to the same block
  and $l$ belongs to a different block, then
  \begin{equation} \label{lem:diag:ineq4}
    \frac{\tilde \W_{kl}}{\tilde \W_{km}} > \frac{\A_{kl}}{\A_{km}}.
  \end{equation}
  Now, we show $P_{\tilde \W}(u,v) < P_{\A}(u,v)$, when $u$ and $v$ belong to the same block. We have
  \begin{align*}
    \sum_{w\in \V} P_{\tilde \W}(u,w) &= \sum_{w\in \V} P_{\A}(u,w) = 1 \\
    \sum_{w\in \V} [\T^{-1} \tilde \W]_{uw} &= \sum_{w\in \V} [\D^{-1} \A]_{uw}.
  \end{align*} 	

  Assume $u$ and $v$ belong to block $\C$ of size $|\C|$. Factor out the
  transition probability between $u$ and $v$. Then,
  \[
    [\T^{-1} \tilde \W]_{uv} \left(|\C| + \sum_{w \notin \C} 
    \frac{[\T^{-1} \tilde \W]_{uw}}{[\T^{-1} \tilde \W]_{uv}} \right) 
    = [\D^{-1} \A]_{uv} \left( |\C| + \sum_{w \notin \C} 
    \frac{[\D^{-1} \A]_{uw}}{[\D^{-1} \A]_{uv}} \right).
  \]
 Since the summations are along the rows, we have
  \[ 
    [\T^{-1} \tilde \W]_{uv} \left(|\C| + \sum_{w \notin \C} \frac{\tilde \W_{uw}}{\tilde \W_{uv}} \right) 
    = [\D^{-1} \A]_{uv} \left(|\C| + \sum_{w \notin \C } \frac{\A_{uw}}{\A_{uv}} \right).
  \]
  Therefore, based on inequality \eqref{lem:diag:ineq4}, 
  \[ [\T^{-1} \tilde \W]_{uv} < [\D^{-1} \A]_{uv}. \]
  
  Now consider the case where $\Theta_{kk} = \Theta_{ll}$ for all $k$ and $l$, then for $w \notin \C$ 
  \begin{align} \label{lem:ineq:last}
    [\T^{-1} \tilde \W]_{uw} > [\D^{-1} \A]_{uw}. 
  \end{align}
\end{proof}

\begin{proof}[Proof of Proposition \ref{lem:spgap_popgraph}] 
The first part of this proof focuses on the inequality
$\lambda_2(P_{\tilde \W}) < \lambda_2(P_{\A})$. 
To this end, define $\B^{RW} := \D^{-1}_{k\times k} B,$ 
and $\B^{AC} := \T^{-1}_{k\times k} [(B\Theta \bar{B} + \bar{B}\Theta B)\cdot B]$. 
Since $\Theta_{kk}$ are all equal, then $\B^{RW}$ and $\B^{AC}$ 
are symmetric matrices and have equal row sum. Hence,
  \begin{align*}
    \lambda_2(P_{\A}) &= \lambda_2(\D^{-1} \A ) = \lambda_2(\B^{RW}), \\
    \lambda_2(P_{\tilde \W}) &= \lambda_2(\T^{-1} \tilde \W) = \lambda_2(\B^{AC}).
  \end{align*}		 

  Let $f:\{1,2,\cdots ,k\} \rightarrow \R$ and $r$ be the row sum of
  $\B^{RW}$ and $\B^{AC}$. Then  $I - \frac 1r \B^{AC}$ and $I - \frac 1r
  \B^{RW}$ are Laplacian matrices. Therefore,
  \begin{align*}
    \lambda_2( I - \frac 1r\B^{AC}) &= \inf_{ \substack{f:\sum_{u} f(u) = 0 \\ 
     f:\sum_u f^2(u) = 1} } \frac{1}{2r} {\sum_{u,v\;u\neq v} \B_{uv}^{AC} (f(v) - f(u))^2}\\
    &>  \inf_{\substack{f:\sum_{u} f(u) = 0 \\ 
  f:\sum_u f^2(u) = 1} } \frac{1}{2r} \sum_{u,v\;u\neq v} 
  \B_{uv}^{RW} (f(v) - f(u))^2  = \lambda_2( I - \frac 1r\B^{RW}),
  \end{align*}
  where the inequality follows from inequality \eqref{lem:ineq:last} and the fact
  that  $\B^{AC}_{uv} > \B^{RW}_{uv}$ for $u \neq v$. So we conclude that
  \[
  \lambda_2(\B^{AC}) < \lambda_2(\B^{RW})
  \] 
   and, therefore 
   \[
   \lambda_2(P_{\tilde \W}) < \lambda_2(P_{\A}).
   \]
This result is extended in the calculations below. 

The fact that $\lambda_2(P_{\A}) = 1/(R+1)$  follows immediately from Lemma \ref{fact:lam2}.

The rest of the proof is dedicated to equation \eqref{eq:smallworld} in the statement of the proposition.  From
Lemma \ref{lem:ac_as_sbm}, $ \tilde \W = Z \tilde B Z^T$ for $\tilde B =  (B\Theta
\bar{B} + \bar{B}\Theta B)\cdot B$.  Define $r' = 1-r$. Note that $\Theta = N/K I$, so it can be temporarily ignored as a constant. 

We have,
\[ B \bar{B} = (r'J - pI)(rJ+pI) = (r'rK + r'p - pr)J - p^2I.\]
Now, define $u = (r'rK + r'p - pr)$ and write 
\[ (B \bar{B}) \cdot B = (uJ - p^2I) \cdot(rJ + pI)  = p(u - rp - p^2)I + ur J .\]
Reincorporating the constants from  $\Theta = N/K I$ and a 2 to account for
$\bar{B} B$, it follows that $\tilde B = \tilde p I + \tilde r J$, for
\[ \tilde p = 2p(N/K)(u - rp - p^2) \quad \mbox{and} \quad \tilde r = 2(N/K)ur. \]
Note that $\tilde r$ and $\tilde p$ depend on the block populations $N/K$ and
thus the number of nodes in the graph $N$.  However, this term cancels out in
the ratio $\tilde r/\tilde p$.  So, neither $\lambda_2(P_{\tilde \W})$ nor
$\lambda_2(P_{\A})$ depend on $N$.  As such, 
\[\lambda_2(P_{\tilde \W}) + \epsilon < \lambda_2(P_{\A}) \]
for some $\epsilon > 0$ that is independent of $N$.  

As $K$ grows and $r$ shrinks, $u \rightarrow p(R+1)$ and 
  \[ 
    \tilde p \rightarrow 2 p (N/K)(p(R+1) - p^2) \quad \mbox{and} \quad 
    \tilde r \rightarrow 2 rp (N/K) (R+1).
  \]

  Using Lemma \ref{fact:lam2} on $\tilde B$, 
  \[\lambda_2(P_{\tilde \W}) = \frac{1}{K(\tilde r/ \tilde p) +1}.\]
  Then, 
  \[
    \frac{K\tilde r}{\tilde p} \rightarrow 
    \frac{Krp(R+1)}{p(p(R+1) - p^2)} = 
    \frac{Kr(R+1)}{p(R+1 - p)} = 
    R \frac{R+1}{R+1 - p},
  \]
  which concludes the proof.
\end{proof}

\begin{proof}[Proof of Proposition \ref{prop:epsilon}]
  Both $P_{\A}$ and $P_{\tilde \W}$ satisfy the conditions of Lemma
  \ref{lem:lam2}. It is only necessary to compute the probabilities. For
  $P_{\A}$, $p = 1-\epsilon$ and $r = \epsilon$.  So, \[\lambda_2(P_{\A}) =
  \frac{1-2\epsilon}{1} \rightarrow 1.\] To compute $\lambda_2\left(P_{\tilde \W} \right)$,
  notice that it is only necessary to determine $p$ and $r$ up to
  proportionality.  Under the assumed model, $\bar B_{11} = \epsilon$, $\bar
  B_{12} = 1-\epsilon$, and $\Theta  \propto I$.  Moreover, the matrix $(B\Theta \bar
  B + \bar B \Theta  B) \cdot B$ contains the elements $p = 2(1-\epsilon)^2$ and
  $r = (1-\epsilon)^2 + \epsilon^2$ for $P_{\tilde \W}$.  
  By Lemma \ref{lem:lam2}.  
  \[
    \lambda_2(P_{\tilde \W}) = \frac{2(1-\epsilon)^2
    -(1-\epsilon)^2 + \epsilon^2}{2(1-\epsilon)^2 + (1-\epsilon)^2 + \epsilon^2}
    = \frac{(1-\epsilon)^2 +\epsilon^2}{3(1-\epsilon)^2 +\epsilon^2} \rightarrow
    1/3.
  \]
\end{proof}



\section{Sampled Graph Results} \label{app:sampled_results}

Here we provide the proofs of Theorems \ref{thm:concentration_ac_laplacian} 
and \ref{thm:var_acrds} from  Section \ref{subsec:SampleGraph}.

\begin{proof}[Proof of Theorem \ref{thm:concentration_ac_laplacian}]
  By Lemma \ref{lem:symrw}, and Weyl's inequality, 
  \begin{equation*}
    \left| \lambda_\ell(P_{\tilde W}) - \lambda_\ell(P_{\tilde \W}) \right| =
    \left| \lambda_\ell(T^{-\half} \tilde W T^{-\half}) - \lambda_\ell(\T^{-\half}
    \tilde \W \T^{-\half}) \right| 
    \le 
    \left \| T^{-\half} \tilde W T^{-\half} - \T^{-\half} \tilde \W \T^{-\half} \right \|.
  \end{equation*}
  The rest of the proof studies the right hand side of this inequality.  

  For convenience and compactness, we introduce the following notation, 
  \begin{align*}
    N &:= |\V|, \\
    \tA       &:= (A\bA + \bA A), \\
    \tAc      &:= (\A\bAc + \bAc\A), \\
    \tilde W  &:= (A\bA+\bA A)\cdot A = \tA\cdot A,\\
    \tilde \W &:= (\A \bAc + \bAc\A)\cdot \A = \tAc \cdot \A. 
  \end{align*}

  By the triangle inequality,
  \begin{equation*}
    \| T^{-\half} \tilde W T^{-\half} - \T^{-\half} \tilde \W \T^{-\half} \|
    \leq \| \T^{-\half}(\tilde W - \tilde \W)\T^{-\half} \|  
    + \|T^{-\half} \tilde W T^{-\half} - \T^{-\half} \tilde W \T^{-\half} \|.
  \end{equation*}
  Also,
  \begin{align*}
    \| \T^{-\half}(\tilde W - \tilde \W)\T^{-\half} \| 
       &= \| \T^{-\half}( \tA\cdot A - \tAc\cdot\A)\T^{-\half} \| \\
    &\leq \| \T^{-\half}((\tA - \tAc)\cdot A)\T^{-\half} \| 
       + \| \T^{-\half}((A-\A)\cdot \tAc)\T^{-\half} \|.  
   \end{align*}
   
The remainder of the proof is divided into four parts. The terms $\| \T^{-\half}((\tA - \tAc)\cdot A)\T^{-\half}\|$, $\|\T^{-\half}((A-\A)\cdot \tAc)\T^{-\half}\|$, and $\|T^{-\half} \tilde W T^{-\half} - \T^{-\half} \tilde W \T^{-\half} \|$ are bounded in Part 1,2, and 3, respectively. Finally, Part 4 combines these bounds and completes the argument.

 \noindent \textbf{Part 1.} Note that $\T$ is a diagonal matrix and  $\tA$ and $\tAc$ are both
  symmetric. Therefore, we apply Lemma \ref{lem:diag_norm_invariant} to obtain
  \begin{align}\label{equ:Thm1:P1:eq1}
    \| \T^{-\half}((\tA - \tAc) \cdot A)\T^{-\half} \| 
    &= \| \T^{-1}(\tA - \tAc)\cdot A \| \notag \\
    &\leq \| \T^{-1}(A\bA-\A\bAc)\cdot A \| 
      + \| \T^{-1}(\bA A-\bAc\A)\cdot A \|.
  \end{align} 
  
It is sufficient to prove an upper bound for the first term in \eqref{equ:Thm1:P1:eq1}. The same bound will hold for the second term. We have
  \begin{equation}
    \|\T^{-1} (A\bA  - \A \bar{\A}) \cdot A \|  
    \leq \| \T^{-1} |A\bA  - \A \bar{\A}| \cdot A \|, 
  \end{equation}
  where $|\cdot|$ is the element-wise absolute value operator. The inequality follows from the fact that for any matrix $M$,
  $\|M\|\leq \left\||M| \right \|$ \citep[e.g.][Theorem 2.5]{mathias1990spectral}.
  
  We begin by bounding  the row sums of $|A\bA - \A\bAc|\cdot A$ with a concentration
  inequality. Then we use Lemma \ref{lem:posspup} to bound the operator
  norm. Define the row sum mapping $r_i$, so that for a matrix $C$, $r_i(C)$ equals the sum of the $i^{th}$ row of $C$. We have
  \begin{align} \label{app:thm1:part1:ineq3}
    r_i\left(\T^{-1} |A\bA  - \A \bar{\A}| \cdot A\right) &= 
    \frac{1}{\T_{ii}} \sum_j  A_{ij} 
      \left| \sum_k A_{ik}\bA_{kj} - \A_{ik} \bar{\A}_{kj}\right|.
  \end{align} 
  
  Define $F_{ij}=\sum_k \A_{ik}\bAc_{kj}$ and $G_{ij}=\sum_k \bAc_{ik}\A_{kj}$.
  For fixed $i$ and $j$, the random variables $\left\{ A_{ik}\bA_{kj} \right\}_k$ are independent with
  expected value $\rmE [A_{ik}\bA_{kj}] = \A_{ik} \bar{\A}_{kj}$ and 
  variance 
  \begin{equation*}
    \sigma^2_{ij} = \sum_k \rmE (A_{ik}\bA_{kj} - \A_{ik}\bar{\A}_{kj})^2  
    \leq \sum_k \rmE (A_{ik}\bA_{kj})^2 + (\A_{ik}\bar{\A}_{kj})^2 
    \leq 2F_{ij}.
  \end{equation*}
  Let $\Delta_{F_{ij}} := \sqrt{10F_{ij}\ln\frac{2N^2}{\de}}$. 
  By Bernstein's Inequality  and the union bound, 
  \begin{align} \label{app:thm1:part1:bernseq_ineq}
    \Pr \left( \left| \sum_k A_{ik}\bA_{kj} - 
    \A_{ik} \bar \A_{kj} \right| 
    \geq \Delta_{F_{ij}} \right) 
    &\leq 2 \exp  \left(-\frac{\frac 12 \Delta_{F_{ij}}^2}
    {\sigma_i^2 + \frac 13 S\Delta_{F_{ij}}} \right) \\
    &= 2 \exp\left(-\frac{ 5F_{ij}\ln\frac{2N^2}{\de}}
    {4F_{ij}+\frac 13 S\Delta_{F_{ij}}} \right) \notag\\	
    &\leq \frac{\de}{N^2} , \notag
  \end{align}
  where the last inequality follows from the assumption that $F_{\min} \gg \ln N$.
  So, with high probability,
  \begin{align*}
    \sum_j A_{ij} 
    \left|\sum_k A_{ik}\bA_{kj} - \A_{ik} \bar{\A}_{kj}\right| 
     \leq \sum_j A_{ij} \Delta_{F_{ij}}.
  \end{align*}
  
  Now we bound $|\sum_j (A_{ij}-\A_{ij}) \Delta_{F_{ij}}|$. We have $\rmE[A_{ij} \Delta_{F_{ij}}]=\A_{ij}\Delta_{F_{ij}}$ and
  \begin{equation*} 
    \sum_j \rmE[A_{ij}\Delta_{F_{ij}}- \A_{ij}\Delta_{F_{ij}}]^2
    \leq 2\sum_j \A_{ij}\Delta^2_{F_{ij}}.
  \end{equation*}
  By Bernstein's Inequality, the following holds with high probability
  \begin{equation*}
    \left|\sum_j (A_{ij}-\A_{ij}) \Delta_{F_{ij}} \right|
    \leq \sqrt{2\sum_j \A_{ij}\Delta^2_{F_{ij}} }.
  \end{equation*}
   Consequently, 
   \begin{align} \label{app:thm1:part1:ineq4}
    \sum_j A_{ij} 
    \left|\sum_k A_{ik}\bA_{kj} - \A_{ik} \bar{\A}_{kj}\right| 
    &\leq \sum_j \A_{ij}\Delta_{F_{ij}}
    + \sqrt{2\sum_j \A_{ij}\Delta^2_{F_{ij}} } \\
    &\leq 2\sum_j \A_{ij}\sqrt{10F_{ij}\ln\frac{2N^2}{\de}} \notag \\
    &\leq 10\sum_j \A_{ij}\sqrt{F_{ij}\ln\frac N\de}.\notag
  \end{align}
  Furthermore, 
  \begin{align}\label{app:thm1:part1:ineq5}
    \T_{ii} = \sum_{j} \tilde \W_{ij} 
    = \sum_{j} \A_{ij} \sum_k \A_{ik}\bAc_{kj} + \bAc_{ik}\A_{kj}
    = \sum_{j} \A_{ij}(F_{ij} + G_{ij}). 
  \end{align}
  From \eqref{app:thm1:part1:ineq3}, \eqref{app:thm1:part1:ineq4} 
  and \eqref{app:thm1:part1:ineq5}, 
  \begin{align}
    r_i\left(\T^{-1} |A\bA  - \A \bar{\A}| \cdot A\right) 
    \leq \frac{10\sum_j \A_{ij} \sqrt{F_{ij}\ln\frac N\de} }
    {\sum_{j} \A_{ij}F_{ij}} 
    \leq \frac{10\sum_j \A_{ij} F_{ij}\sqrt{\frac{\ln\frac N\de}{F_{ij}}}}
    {\sum_{j} \A_{ij}F_{ij}} 
    \leq 10 \sqrt{\frac{\ln\frac N\de}{F_{\min}}}.
  \end{align}
  Following the same steps, we obtain
  \begin{align}
    r_i\left(\T^{-1} |\bA A  - \bAc\A| \cdot A\right) 
    \leq \frac{10\sum_j \A_{ij}\sqrt{G_{ij}\ln\frac N\de}}
    {\sum_{j} \A_{ij}G_{ij}} 
    \leq 10 \sqrt{\frac{\ln\frac N\de}{G_{\min}}}.
  \end{align}
  Therefore, 
  \begin{equation} \label{th:eq:1:fibound}
    \| \T^{-\half}((\tA-\tAc)\cdot A)\T^{-\half} \| 	
    \leq \frac{10\ln^\half\frac N\de}
    {\min\{F^\half_{\min},G^\half_{\min}\}}.
  \end{equation}	

\noindent \textbf{Part 2.} We have
{\small
  \begin{equation}\label{app:thm1:part2:ineq0}
    \| \T^{-\half}((A-\A)\cdot \tAc)\T^{-\half} \| 
    \leq \| \T^{-\half}((A-\A)\cdot (\A\bAc))\T^{-\half} \| 
      + \| \T^{-\half}((A-\A)\cdot (\bAc\A))\T^{-\half} \|. 
  \end{equation}}
Similar to Part 1, it is sufficient to prove an upper bound for the first term in \eqref{app:thm1:part2:ineq0}. The same bound will hold for the second term. 
  
Let $J$ be the $N \times N$ square matrix comprised of all ones. We have
{\small
  \begin{align}\label{app:thm1:part2:ineq1}
    \|\T^{-\half}((A-\A)\cdot \A \bAc) \T^{-\half}\| 
    &= \|\T^{-\half}((A-\A)\cdot \A (J-\A))\T^{-\half}\| \notag \\
    &= \|\T^{-\half}((A-\A) \cdot (\A J) - 
       (A-\A) \cdot \A\A) \T^{-\half}\| \notag\\
    &\leq \|\T^{-\half}\D(A-\A)\T^{-\half}\| 
     + \|\T^{-\half}((A-\A)\cdot \A\A)\T^{-\half}\| \notag\\
    &= \|\T^{-\half}\D^\half (A-\A)\D^\half\T^{-\half}\| 
     + \|\T^{-\half}((A-\A) \cdot \A\A)\T^{-\half}\|. 
  \end{align} }
Consider the first term in \eqref{app:thm1:part2:ineq1}. For $i,j \in \left\{ 1,\cdots,N \right\}$,  define $A^{ij}\in\{0,1\}^{N\times N}$ to be the matrix
with one at elements $(i,j)$ and $(j,i)$, and zero everywhere else. We then have, 
  \begin{align*}
    \T^{-\half} \D^{\half}(A-\A) \D^{\half} \T^{-\half} 
    =\sum_{i=1}^N \sum_{j>i}^N  \sqrt{\frac{\D_{ii} \D_{jj}}{\T_{ii} \T_{jj}}} 
    (A_{ij} - \A_{ij}) A^{ij}.
  \end{align*} The right hand side is a sum of independent, symmetric matrices.
  Therefore, we can apply Theorem 5 of \cite{chung2011spectra} to bound it. Let
\begin{equation*} 
M  := \max_{ij=1,\cdots,N} \left\| \sqrt{\frac{\D_{ii} \D_{jj}}{\T_{ii} \T_{jj}}}  (A_{ij} - \A_{ij}) A^{ij} \right\|    \leq \max_{ij=1,\cdots,N}  \sqrt{\frac{\D_{ii} \D_{jj}}{\T_{ii}\T_{jj}}},
\end{equation*} 
  and   
  \begin{align*}
    v^2 &:= \left\| \sum_{i=1}^N \sum_{j>i}^N \mathrm{Var} \left( \sqrt{\frac{\D_{ii}
      \D_{jj}}{\T_{ii} \T_{jj}} } (A_{ij} - \A_{ij} ) A^{ij} \right) \right\| \\ 
    &= \left\| \sum_{i=1}^N \sum_{j>i}^N \left[ \frac{\D_{ii} \D_{jj}}{\T_{ii}
      \T_{jj}}  (\A_{ij}-\A_{ij}^2) A^{ii} \right] \right\| \notag\\
    &\leq \max_{i=1,\cdots,N} \left( \sum_{j=1}^N \left[ \frac{\D_{ii}
       \D_{jj}}{\T_{ii} \T_{jj}}  (\A_{ij}-\A_{ij}^2) \right] \right) \notag\\
    &\leq \max_{i=1,\cdots,N} \left( \sum_{j=1}^N \left[ \frac{\D_{ii}
       \D_{jj}}{\T_{ii} \T_{jj}}  \A_{ij} \right] \right)
    \leq \max_{ij=1,\cdots,N} \frac{\D_{ii}^2 \D_{jj}} {\T_{ii}\T_{jj}}.\notag
  \end{align*}

  Define
  \begin{equation*}
    \Delta := \max_{ij=1,\cdots,N} 2\sqrt{\frac{\D_{ii}^2 \D_{jj} 
          \ln(2N/\de)} {\T_{ii}\T_{jj}}}.
  \end{equation*} 
  Note that
  \[
    M \Delta = \max_{ij} 
    \sqrt{\frac{\D_{ii} \D_{jj}}{\T_{ii}\T_{jj}}}
    \sqrt{\frac{\D_{ii}^2 \D_{jj} 
    \ln(2N/\de)} {\T_{ii}\T_{jj}}} 
    = \max_{ij} \frac{\D_{ii}^2\D_{jj}}{\T_{ii}\T_{jj}} 
    \sqrt{\frac{\ln (2N/\delta) }{\D_{ii}}}
    \leq v^2 \sqrt{\frac{\ln(2N/\delta)}{\D_{\min}}}.
  \]
  
  Therefore, applying Theorem 5 in \cite{chung2011spectra} yields
    \begin{align} \label{app:thm1:part2:ineq2}
    \Pr\left( \left\| \sum_{i=1}^N \sum_{j>i}^N 
      \sqrt{\frac{\D_{ii} \D_{jj}} {\T_{ii}\T_{jj}}} 
      (A_{ij}-\A_{ij}) A^{ij} \right\| 
      \geq \Delta\right) &\leq 2N \exp\left(-\frac{\Delta^2}{2v^2 + 2 M \Delta/3} \right) \\
    &\leq \de. \notag
  \end{align}

  For the second term of \eqref{app:thm1:part2:ineq1}, we obtain
  \begin{align*}
    \T^{-\half} \left( (A-\A)\cdot \A\A \right) \T^{-\half} 
    = \sum_{i=1}^N \sum_{j>i}^N \sqrt{ \frac{1} {\T_{ii}\T_{jj}}} 
    \left( A_{ij}-\A_{ij} \right) \left(\sum_{k=1}^N \A_{ik}\A_{kj} \right) A^{ij}.
  \end{align*} 
  Because $\left|\sum_k \A_{ik} \A_{kj} \right| \le \sqrt{\D_{ii} \D_{jj}}$, we obtain the
  same bound as \eqref{app:thm1:part2:ineq2}. Namely, 
  \begin{align*}
    \Pr\left( \left\| \sum_{i=1}^N \sum_{j>i}^N \sqrt{\frac{1}{\T_{ii}\T_{jj}}} 
    (A_{ij}-\A_{ij}) (\sum_{k=1}^N \A_{ik}\A_{kj}) A^{ij} \right\| \geq \Delta \right) 
    &\leq 2N \exp\left(-\frac{\Delta^2}{2v^2 + 2 M \Delta/3 }\right) \\
    &\leq \de.
  \end{align*}
  In addition,
  \begin{align*}
    \T_{ii} &= \sum_j \A_{ij}  \sum_k \A_{ik}\bAc_{kj}+ \bAc_{ik}\A_{kj}\\
            &= \sum_j \A_{ij} (F_{ij} + G_{ij})\\
            &\geq \sum_j\A_{ij} c_1 \D_{ii} 
            \geq c_1 \D_{ii}^2,
  \end{align*}
where the inequality follows from the assumption that $F_{ij} + G_{ij} > c_1 \D_{ii}$ for all $i,j \in \left\{ 1,\cdots,N \right\}$.

Combining the above results, yields 
  \begin{equation}\label{equ:thm1:part2:T1} 
    \|\T^{-\half} ((A-\A) \cdot \A \bAc)\T^{-\half} \| 
    \leq 4 \sqrt{\frac{\ln \frac N\de}{c_1 \D_{\min}}}.
  \end{equation}
 
 As noted above, the second term in \eqref{app:thm1:part2:ineq0} satisfies the same bound, so that
   \begin{equation}\label{equ:thm1:part2:T2} 
    \| \T^{-\half}((A-\A)\cdot (\bAc\A))\T^{-\half} \|
    \leq 4 \sqrt{\frac{\ln \frac N\de}{c_1 \D_{\min}}}.
  \end{equation}
 
  Combining \eqref{app:thm1:part2:ineq0}, \eqref{equ:thm1:part2:T1}, and \eqref{equ:thm1:part2:T2}, yields 
  \begin{equation}
    \| \T^{-\half}((A-\A)\cdot \tAc)\T^{-\half} \| 
     \leq 8\sqrt{\frac{\ln \frac N\de}{c_1 \D_{\min}}}.
  \end{equation}

\noindent \textbf{Part 3.} First we bound $\left|T_{ii}-\T_{ii} \right|$ and then we bound
  $\left\| \T^{-\half} T^{+\half} -I \right\|$. 
  
  We have
  \begin{align}\label{equ:thm1:P3A:1} 
    \left|T_{ii}-\T_{ii} \right| &= \left| r_i(\tA\cdot A) - r_i(\tAc\cdot \A) \right| \notag \\
     &\leq \left|r_i((A \bA)\cdot A) - r_i((\A\bAc)\cdot \A) \right| 
     + \left|r_i((\bA A)\cdot A) - r_i((\bAc\A)\cdot \A) \right|.
  \end{align}
  Consider the first term in \eqref{equ:thm1:P3A:1},
  \begin{align} \label{app:thm1:part3:ineq0}
    \left|r_i((A\bA)\cdot A) - r_i((\A\bAc)\cdot \A) \right| 
    &= \left|\sum_j A_{ij} \sum_k A_{ik}\bA_{kj} - \sum_j \A_{ij} \sum_k \A_{ik} \bAc_{kj} \right| \\
    &\leq \sum_j A_{ij} \left| \sum_k A_{ik}\bA_{kj} - \A_{ik}\bar{\A}_{kj} \right|
    + \left| \sum_j (A_{ij} - \A_{ij}) \sum_k \A_{ik} \bAc_{kj} \right|. \notag
  \end{align} 

  To bound the first term of \eqref{app:thm1:part3:ineq0},
  we use \eqref{app:thm1:part1:bernseq_ineq} and \eqref{app:thm1:part1:ineq4}. With probability at least
  $1-\de$,
  \begin{equation} \label{app:thm1:part3:ineq1}
    \sum_j A_{ij} \left| \sum_k A_{ik}\bA_{kj} - \A_{ik}\bar{\A}_{kj} \right|  
    \leq 10\sum_j\A_{ij}\sqrt{F_{ij}\ln\frac N\de}.
  \end{equation}
  
  Consider the second term in \eqref{app:thm1:part3:ineq0},
  \begin{align*}
    \left| \sum_j(A_{ij}-\A_{ij}) \sum_k\A_{ik} \bAc_{kj} \right|
     &= \left| \sum_k \sum_j (A_{ij}-\A_{ij})\A_{ik} \bAc_{kj} \right| \\ 
     &\leq \sum_k \A_{ik} \left|\sum_j A_{ij}\bAc_{kj} - \A_{ij}\bAc_{kj} \right| \\ 
     &= \sum_k \A_{ik} \left| \sum_j A_{ij}\bAc_{jk} - \A_{ij}\bAc_{jk} \right|\\
     &= \sum_j \A_{ij} \left| \sum_k A_{ik}\bAc_{kj} - \A_{ik}\bAc_{kj} \right|. 
  \end{align*}
  
Note that,  $\rmE[A_{ik}\bAc_{kj}]= \A_{ik}\bAc_{kj}$. In addition, we can obtain the same upper bound for the variance to use \eqref{app:thm1:part1:bernseq_ineq}. Hence, with probability at least $1-\de$, we have
  \begin{equation} \label{app:thm1:part3:ineq2}
    \sum_j \A_{ij} \left| \sum_k A_{ik}\bAc_{kj} - \A_{ik}\bar{\A}_{kj} \right| 
    \leq 10\sum_j\A_{ij}\sqrt{F_{ij}\ln\frac N\de}.
  \end{equation}
                                             
  From \eqref{app:thm1:part3:ineq1} and \eqref{app:thm1:part3:ineq2}, we have
  \begin{equation}
    \left| r_i\left((A\bA)\cdot A \right) - r_i\left((\A\bAc)\cdot \A \right) \right| 
    \leq 20\sum_j \A_{ij}\sqrt{F_{ij}\ln\frac N\de}.
  \end{equation} 
  For the second term in \eqref{equ:thm1:P3A:1}, following the same steps yields 
  \begin{equation}
    \left| r_i \left((\bA A)\cdot A \right) - r_i \left((\bAc\A)\cdot \A \right) \right| 
    \leq 20\sum_j \A_{ij} \sqrt{G_{ij}\ln\frac N\de}.
  \end{equation} 
  Therefore,
  \begin{align*}
    \left |T_{ii} - \T_{ii} \right| \leq 40\sum_j \A_{ij}
    \left(\sqrt{F_{ij}\ln\frac N\de} + \sqrt{G_{ij}\ln\frac N\de} \right).
  \end{align*}		

  Now we consider $\| \T^{-\half} T^{+\half} -I\|$. We have
  \begin{align*}
    \left\| \T^{-\half} T^{+\half} -I \right \| &\leq 
    \max_{i=1,\cdots,N}\left| \sqrt \frac{T_{ii}}{\T_{ii}}-1 \right| \\
    &\leq \max_{i=1,\cdots,N}\left| \frac{T_{ii}}{\T_{ii}}-1 \right| \\
    &\leq \max_{i=1,\cdots,N} \frac{40\sum_j \A_{ij}
    \left( \sqrt{F_{ij}\ln\frac N\de} + \sqrt{G_{ij}\ln\frac N\de} \right)}
    {\sum_j \A_{ij}(F_{ij}+G_{ij})} \leq \frac{40\ln^\half\frac N\de}
    {\min\{G^\half_{\min},F^\half_{\min}\}}.
  \end{align*}
  Furthermore, 
  \begin{equation}
   \left \| \T^{-\half} T^{+\half} \right \| \leq 1 +
    \frac{40\ln^\half\frac N\de}{\min\{G^\half_{\min},F^\half_{\min}\}}
    <2,
  \end{equation}
   where the last inequality follows from the theorem's assumptions.

  Define the Laplacian matrix $L^{ac} := T^{-\half} \tilde W T^{-\half}$. So,
  \begin{align*}
      \| T^{-\half} \tilde W T^{-\half} - \T^{-\half} \tilde W \T^{-\half} \| 
    &= \| T^{-\half} \tilde W T^{-\half} - \T^{-\half}T^{+\half}T^{-\half}  
       \tilde W T^{-\half}T^{+\half}\T^{-\half} \| \\
    &= \| I-L^{ac} - \T^{-\half} T^{+\half} \{I - L^{ac}\} T^{+\half}T^{-\half} \| \\
    &= \| \{\T^{-\half} T^{+\half} -I\}\{I-L^{ac}\}T^{+\half} \T^{-\half} + 
        \{I-L^{ac}\} \{I-T^{+\half}\T^{-\half}\}\| \\
    &\leq  \| \T^{-\half} T^{+\half} -I\| \cdot \| T^{+\half} \T^{-\half}\| + 
    \| I - T^{+\half} \T^{-\half}\|, 
  \end{align*}
  where the inequality follows from the fact that $\|I-L^{ac}\| \leq 1$. 
  Now,
  \begin{align*}
    \| T^{-\half} \tilde W T^{-\half} - \T^{-\half} \tilde W \T^{-\half} \| 
    \leq 
    \frac{120\ln^\half\frac N\de}{\min\{G^\half_{\min},F^\half_{\min}\}}. 
  \end{align*}
 %

  \noindent \textbf{Part 4.} Let $\e := 10\de$. Combining the results of the three preceding parts yields 
  \begin{equation*}
    \left \| T^{-\half} \tilde W T^{-\half} - \T^{-\half} \tilde \W \T^{-\half} \right \|
    \leq
      \frac{10\ln^\half\frac N\de}{\min\{G^\half_{\min},F^\half_{\min}\}}
      + \frac{8\ln^\half\frac N\de}{c_1^\half \D^\half_{\min}} 
    + \frac{120\ln^\half\frac N\de}{\min\{G^\half_{\min},F^\half_{\min}\}}.
  \end{equation*} 
  Note that $G_{\min} = F_{\min}$ and $D_{\min}\geq F_{\min}$. So
  \begin{equation*}
    \left \| T^{-\half} \tilde W T^{-\half} - \T^{-\half} \tilde \W \T^{-\half} \right \|
    \leq \frac{138\ln^\half\frac {10N}{\e}}{\min\{c_1^\half \D^\half_{\min}, 
      F^\half_{\min}\}}
    \leq \frac{138\ln^\half\frac {10N}{\e}}{c_1^\half F^\half_{\min}},
  \end{equation*} 
  with probability at least $1-\e$. 
\end{proof}


\begin{proof}[Proof of Theorem \ref{thm:var_acrds}]
Define ${f_j}$ as the $j^{th}$ eigenvector of $P_A$ with respect to the inner product $\inner{\cdot}{\cdot}_{\pi}$; similarly, define ${f^{ac}_j}$ as the $j^{th}$ eigenvector of $P_{\tilde W}$ with respect to the inner product $\inner{\cdot}{\cdot}_{\pi^{ac}}$. From Proposition \ref{lem:cov_eq}, to prove the theorem, it is sufficient to show that
  \[ 
    \sum_{j=2}^{|\V|}  \inner{y}{f^{ac}_j}_{\pi^{ac}}^2 \lambda_j(P_{\tilde W})^t  <  
    \sum_{j=2}^{|\V|}  \inner{y}{f_j}_{\pi}^2 \lambda_j(P_A)^t.  
  \]
We break the proof into two steps. In the first step, we show that  the above holds true in the population, i.e. we compare the Markov chains on $P_{\tilde \W}$ and $P_\A$.  In the second step, we show that the sample quantities converge almost surely to the population quantities. 

\noindent \textbf{Part 1.} In this step, we show that
  \[ 
    \sum_{j=2}^{|\V|}  \inner{y}{\bar f^{ac}_j}_{\bar \pi}^2
    \lambda_j(P_{\tilde \W})^t  +\epsilon <  \sum_{j=2}^{|\V|}
    \inner{y}{\bar f_j}_{\bar\pi}^2 \lambda_j(P_\A)^t.  
  \]

We begin by analyzing the eigenpairs of the transition matrices. From Lemma \ref{fact:lam2}, for $i = 2, \dots, K$,
  \begin{equation}
    \lambda_i(P_\A) = \lambda_2(P_\A) \quad \mbox{ and } \quad
    \lambda_i(P_{\tilde \W} ) = \lambda_2(P_{\tilde \W}).
  \end{equation}
  Moreover, for $i>K,  \lambda_i(P_\A) =  \lambda_i(P_{\tilde \W} ) = 0$.
  Under the theorem conditions, $P_{\tilde
  \W}$ and $P_\A$ have the same stationary distribution; refer to this as
  $\bar\pi$ (in fact, this distribution is uniform on the nodes). Define
  $\bar f_j$ and $\bar f^{ac}_j$ as the $j$th eigenvectors, with respect to $\inner{\cdot}{\cdot}_{\bar\pi}$, of $P_\A$ and
  $P_{\tilde \W}$, respectively. Therefore, we have
  \[
    \sum_{j=2}^{|\V|}  \inner{y}{\bar f_j}_{\bar\pi}^2
    \lambda_j(P_\A)^t =   \lambda_2(P_\A)^t \sum_{j=2}^K
    \inner{y}{\bar f_j}_{\bar\pi}^2,
  \] 
  and 
  \[
    \sum_{j=2}^{|\V|} \inner{y}{\bar f^{ac}_j}_{\bar\pi}^2 \lambda_j(P_{\tilde \W})^t =
    \lambda_2(P_{\tilde \W})^t \sum_{j=2}^K \inner{y}{\bar f^{ac}_j}_{\bar\pi}^2.
  \] 
  Proposition \ref{lem:spgap_popgraph} shows thats  $\lambda_2(P_{\tilde \W})^t +\epsilon  < \lambda_2(P_{\A})^t $, where
  $\epsilon$ does not change asymptotically as $|\V|$ grows.  Thus, Part 1 will be finished after showing that $\sum_{j=2}^K
  \inner{y}{\bar f^{ac}_j}_{\bar\pi}^2 = \sum_{j=2}^K
  \inner{y}{\bar f_j}_{\bar\pi}^2$.  To compare these terms, note that the
  construction of the eigenvalues in the proof of Lemma \ref{fact:lam2} shows
  that the span of the sets $\{\bar f^{ac}_j\cdot\bar\pi^\half: j = 1, \dots, K\}$ and $\{\bar f_j\cdot\bar\pi^\half: j = 1, \dots, K\}$ are identical. 
  Therefore, Parseval's Identity implies,
  \[ 
    \sum_{j=1}^K \inner{y}{\bar f^{ac}_j}_{\bar\pi}^2 = \sum_{j=1}^K
    \inner{y}{\bar f_j}_{\bar\pi}^2.
  \]
  Note that $\bar f^{ac}_1=\bar f_1=\1$ because these are the lead eigenvectors of Markov transition matrices. Thus, 
  \[ 
    \sum_{j=2}^K \inner{y}{\bar f^{ac}_j}_{\bar\pi}^2 = \sum_{j=2}^K
    \inner{y}{\bar f_j}_{\bar \pi}^2.
  \] 
 
\noindent \textbf{Part 2.} To ease notation, let $\lambda_j :=\lambda_j(P_A)$ 
  and $\bar{\lambda}_j:=\lambda_j(P_\A)$. Finally, let $N := |\V|$ denote the size of the graph. This part of the proof shows that, as $N \rightarrow \infty$,
  \[
    \left| \sum_{j=2}^{|\V|} \inner{y}{\bar f_j}_{\bar\pi}^2 \bar{\lambda}_j^t 
    - \inner{y}{f_j}_{\pi}^2 \lambda_j^t \right|  \xra{as} 0.
  \]
The corresponding proof for the anti-cluster random walk follows from a similar argument.  
  
  We have
  \begin{align}
    \left| \sum_{j=2}^{|\V|} \inner{y}{\bar f_j}_{\bar\pi}^2 \bar{\lambda}_j^t 
    - \inner{y}{f_j}_{\pi}^2 \lambda_j^t \right| 
    &= \left| \sum_{j=2}^{|\V|} \inner{y}{\bar f_j}_{\bar\pi}^2
    \bar{\lambda}_j^t - \inner{y}{f_j}_{\pi}^2 (\bar{\lambda}_j^t
    + (\lambda_j^t-\bar{\lambda}_j^t)) \right| \notag \\
    &\leq \left| \sum_{j=2}^K \bar{\lambda}_j^t \left(
    \inner{y}{\bar{f}_j}_{\bar{\pi}}^2 - \inner{y}{f_j}_{\pi}^2 \right) \right| +
    \left| \sum_{j=2}^{|\V|} \inner{y}{f_j}_{\pi}^2 \left|\lambda_j^t-\bar{\lambda}_j^t\right| \right| \notag \\
    &\leq \bar{\lambda}_2^t \cdot \left|\sum_{j=2}^K \inner{y}{f_j}^2_{\pi}
    - \inner{y}{\bar f_j}^2_{\bar{\pi}}\right| + \max_j
    \left|\lambda_j^t-\bar{\lambda}_j^t\right| \cdot \inner{y}{y}_{\pi}^2. \label{eq:twoparts}
  \end{align}

Since $y$ is a bounded function, $\inner{y}{y}_{\pi}^2$ is bounded. Therefore, Theorem 1, with $\epsilon=1/N^2$, and the Borel-Cantelli Lemma imply 
that the second term in \eqref{eq:twoparts} converges to zero almost surely.

  Now we argue that the first term in \eqref{eq:twoparts} converges to zero. Let $\cdot$ denote element-wise
  multiplication.  Let $\pi^\half$ denote the vector with elements $\sqrt{\pi_i}$. Finally, 
  let $diag(\pi^\half)$ denote the diagonal matrix with $\pi^\half$ on the diagonal.
  For some constant $c$, 
  \begin{equation}\label{equ:Thm2:P2:a}
    \D^{-\half} \A \D^{-\half} (\bar f_j \cdot \bar\pi^\half)=
    \D^{-\half} \A \D^{-\half} diag(\bar\pi^\half) \bar f_j  =
    \D^{-\half} \A cI \bar f_j  = c \bar \lambda_j f_j.
  \end{equation}
 Note that $\inner{\bar f_j \cdot \bar\pi^\half}{\bar f_i
  \cdot \bar\pi^\half} \in\{0,1\}$; it is equal to one if and only if $i=j$. This fact combined with \eqref{equ:Thm2:P2:a}
  shows that  $\bar f_j \cdot \bar\pi^\half$ forms an orthonormal basis of the
  eigenspace of $\D^{-\half} \A \D^{-\half}$.  Similarly, this holds for $f_j \cdot
  \pi^\half$ and  $D^{-\half} A D^{-\half}$.

  Let $\bar V \in \R^{N \times (K-1)}$ and $V \in \R^{N \times (K-1)}$ 
  be matrices with columns defined by
  $\bar V_{j}:=\bar f_{j+1} \cdot \bar\pi^\half$ and
  $V_{j}:=f_{j+1} \cdot \pi^\half$, respectively, for $j \in \{ 1,\cdots,K-1\}$. 
  Note that the columns of $V$ and $\bar V$ are orthonormal. 
  Furthermore, define the corresponding orthogonal projection
  matrices $\bar Q=\bar V\bar V^T \in \R^{N\times N}$ 
  and $Q=VV^T \in \R^{N\times N}$. We then have
  \begin{align}\label{equ:Thm2:P2:b}
    \left| \sum_{j=2}^K \inner{y}{f_j}^2_{\pi}
    - \inner{y}{\bar f_j}^2_{\bar{\pi}} \right| 
    &=\left| \sum_{j=2}^K \inner{y\cdot\pi^\half}{f_j\cdot\pi^\half}^2
    - \inner{y\cdot\bar\pi^\half}{\bar f_j\cdot\bar\pi^\half}^2 \right| \notag \\
    &= \left| \left\|V^T \left(y\cdot\pi^\half\right) \right\|^2 
    - \left\| \bar V^T \left(y\cdot \bar\pi^\half\right) \right\|^2 \right| \notag \\
    &= \left| \left\|Q \left(y\cdot\pi^\half\right) \right\|^2 
    - \left\| \bar Q \left(y\cdot \bar\pi^\half\right) \right\|^2 \right| \notag \\
    &\leq \left\| Q \left( y\cdot\pi^\half \right) 
    - \bar Q \left( y\cdot\bar\pi^\half \right) \right\|^2 \notag \\
    &\leq \left\| \left(Q-\bar Q\right) \left( y\cdot\pi^\half \right) \right\|^2
    + \left\|\bar Q \left(y\cdot\pi^\half - y\cdot\bar\pi^\half \right) \right\|^2 \notag \\
    &\leq \left\| Q-\bar Q \right\|^2 \cdot \inner{y}{y}_\pi
    + \left\|\bar Q \left( y \cdot \left(\pi^\half - \bar\pi^\half\right) \right) \right\|^2.
  \end{align}
  Consider the first term in \eqref{equ:Thm2:P2:b}. Recall that $y$ is a bounded function, hence, it is sufficient to prove
  $$
  \left\| Q-\bar Q \right\| \xra{as} 0. 
  $$
  Define
  \[
    \delta = \min\left\{|\bar\lambda_{K+1}-\lambda_K|,|\bar\lambda_1-\lambda_2|\right\}.
  \] 
  From the Davis-Kahan Theorem \citep[e.g.][Theorem 1]{yu2015useful}, 
  it follows that
  \begin{equation} \label{thm2:part2:dkineq}
    \frac{\left\| D^{-\half} A D^{-\half} - \D^{-\half} \A \D^{-\half} \right\|}{\delta}
    \geq \left\| \sin\Theta\left(V, \bar V \right) \right\| = \left\| Q - \bar Q \right\|, 
  \end{equation}
  where the equality follows from \citet[][Theorem 5.5 pp. 43]{stewart1990matrix}. 
  Recall, $\bar\lambda_{K+1}=0$, $\bar\lambda_1=1$, and $\lambda_2 \in (0,1)$. Thus,
  $\delta=\min\left\{\lambda_K,1-\lambda_2\right\}$. Furthermore,
  $$
  \lambda_K> \left| \bar\lambda_K- |\lambda_K-\bar\lambda_K|  \right| \quad \text{and} \quad \lambda_2 > \left| \bar\lambda_2-|\lambda_2-\bar\lambda_2| \right|.
  $$
  Additionally, recall that $\bar\lambda_2=\bar\lambda_K$. 
  Then, Theorem  \ref{thm:concentration_ac_laplacian} implies
  $|\lambda_j-\bar\lambda_j| \xra{as} 0$, which is less than $\bar \lambda_2$.
  So,  $\delta>\frac 12\bar\lambda_2$. Theorem
  \ref{thm:concentration_ac_laplacian} also implies that the numerator on the
  left hand side of \eqref{thm2:part2:dkineq} converges almost surely to zero. Therefore,
  $$
  \left\| Q-\bar Q \right\| \xra{as} 0. 
  $$

  Now, consider the second term in \eqref{equ:Thm2:P2:b}. We have
  \begin{equation}
    \left\| \bar Q^T \left( y \cdot \left(\pi^\half - \bar\pi^\half \right) \right) \right\|^2
    \leq \left\| y \cdot \left(\pi^\half - \bar\pi^\half \right) \right\|^2
    \leq \|y\|_{\infty}^2 \cdot \left\| \pi^\half - \bar\pi^\half \right\|^2.
  \end{equation} 
It follows from the theorem assumptions that $\|y\|_{\infty}$ is bounded. Hence, it is sufficient to prove 
$$
\left\|\pi^\half - \bar\pi^\half \right\| \xra{as} 0.
$$
Note that $\left\| \pi^\half \right\| = \left\| \bar\pi^\half \right\| = 1$. So, 
\[ 
    \left\|\pi^\half - \bar\pi^\half \right\| 
    = 2\sin{\frac{\Theta(\pi^\half, \bar\pi^\half)}{2}}.
  \]  
  Recall that $\pi^\half$ and $\bar\pi^\half$ are leading eigenvectors of the sample
  and population Laplacian matrices, respectively. Then, it follows from the
  Davis-Kahan Theorem and concentration of eigenvalues of the Laplacian
  matrices that $\left|\sin\Theta\left(\pi^\half,\bar\pi^\half\right)\right| \xra{as} 0$.
  Therefore, we conclude that  
  $\left\|\pi^\half - \bar\pi^\half \right\| \xra{as} 0$. 

\end{proof}  


\end{document}